\documentclass[a4paper,UKenglish,cleveref]{lipics-v2019}

\pdfoutput=1

\usepackage{microtype}
\usepackage{mathtools,stmaryrd}
\usepackage{xspace}

\usepackage{todonotes}

\usepackage{tikz}

\usetikzlibrary{matrix}
\usetikzlibrary{arrows}
\usetikzlibrary{shapes}
\usetikzlibrary{arrows,calc,backgrounds,shapes,shapes.geometric,patterns,positioning,fit,decorations.pathmorphing,decorations.pathreplacing,shadows}

\usepackage{amssymb, latexsym, marvosym}

\usepackage{thmtools}
\usepackage{thm-restate}

\usepackage[calc]{adjustbox}
\usepackage{wrapfig}

\colorlet{softyellow}{yellow!15!white}
\colorlet{softorange}{orange!15!white}
\sethlcolor{softyellow}

\newcommand{\theo}{\begin{theorem}}
\newcommand{\etheo}{\end{theorem}}

\newcommand{\lem}{\begin{lemma}}
\newcommand{\elem}{\end{lemma}}
\newcommand{\elemma}{\end{lemma}}

\newcommand{\corol}{\begin{corollary}}
\newcommand{\ecorol}{\end{corollary}}

\theoremstyle{definition}
\newtheorem{defn}[theorem]{Definition}
\newcommand{\bdefn}{\begin{defn}}
\newcommand{\edefn}{\end{defn}}

\theoremstyle{plain}
\newcommand{\prop}{\begin{proposition}}
\newcommand{\eprop}{\end{proposition}}

\newtheorem{Conjecture}[theorem]{Conjecture}
\newcommand{\conj}{\begin{Conjecture}}
\newcommand{\econj}{\end{Conjecture}}

\newtheorem{Example}[theorem]{\small Example}
\newcommand{\ex}{\begin{Example}\small \rm}
\newcommand{\eex}{\end{Example}}

\newtheorem{Examples}{\small Examples}
\newcommand{\exx}{\begin{Examples}\small\rm\begin{enum}}

\theoremstyle{claimstyle}
\newtheorem*{observation}{Observation}

\newcounter{enum}
\newenvironment{enum}{\begin{list}{{\upshape(\arabic{enum})}}%
{\setlength{\labelwidth}{5mm}\setlength{\leftmargin}{10mm}%
\setlength{\itemindent}{0pt}\usecounter{enum}}}{\end{list}}

\newcounter{menum}

\newcounter{aenum}

\newcommand{\look}{\begin{proof}}

\newcommand{\hx}{\end{proof}}

\newcommand{\Iff}{\Longleftrightarrow}
\newcommand{\ra}{\rightarrow}
\renewcommand{\iff}{\leftrightarrow}
\newcommand{\E}{\exists}
\newcommand{\A}{\forall}
\renewcommand{\phi}{\varphi}
\renewcommand{\theta}{\vartheta}
\renewcommand{\emptyset}{\varnothing}

\renewcommand{\AA}{{\mathfrak A}}

\newcommand{\MM}{{\mathfrak M}}

\renewcommand{\epsilon}{\varepsilon}

\newcommand{\FO}{{\rm FO}}
\newcommand{\RA}{{\rm RA}}

\newcommand{\LFP}{{\rm LFP}}

\newcommand{\posLFP}{{\rm posLFP}}

\newcommand{\N}{{\mathbb N}}
\newcommand{\Ninf}{{\mathbb N}^{\infty}}
\newcommand{\Sinf}{{\mathbb S}^{\infty}}

\newcommand{\B}{{\mathbb B}}

\newcommand{\W}{{\mathbb W}}


\newcommand{\Gg}{{\cal G}}

\newcommand{\Ss}{{\cal S}}

\renewcommand{\bar}{\overline}
\newcommand{\tup}[1]{\mathbf{#1}}

\newcommand{\cutout}[1]{}

\DeclareMathOperator{\lfp}{\mathbf{lfp}}
\DeclareMathOperator{\gfp}{\mathbf{gfp}}

\renewcommand{\st}{\,.\,} 

\newcommand{\Strat}{\mathrm{Strat}}
\DeclareMathOperator{\Plays}{\mathrm{Plays}}

\newcommand{\On}{\mathrm{On}}

\DeclareMathOperator{\Atoms}{\mathrm{Atoms}}
\DeclareMathOperator{\NegAtoms}{\mathrm{NegAtoms}}

\DeclareMathOperator{\Lit}{\mathrm{Lit}}
\DeclareMathOperator{\nnf}{\mathrm{nnf}}

\newcommand{\Bool}{\mathbb{B}}

\newcommand{\Nat}{\mathbb{N}}

\newcommand{\Trop}{\mathbb{T}}
\newcommand{\Vit}{\mathbb{V}}
\newcommand{\Luk}{\mathbb{L}}

\newcommand{\PosBool}{\mathsf{PosBool}}

\newcommand{\nn}[1]{\bar{#1}}

\newcommand{\nnX}{\nn{X}}
\newcommand{\nnx}{\nn{x}}

\newcommand{\Mod}{\mathsf{Mod}}
\newcommand{\tx}{{\tup x}}
\newcommand{\ty}{{\tup y}}
\newcommand{\ta}{{\tup a}}
\newcommand{\tb}{{\tup b}}
\newcommand{\tuple}[1]{{\tup {#1}}}
\newcommand{\dcup}{\mathbin{\dot{\cup}}}
\newcommand{\bigmid}{\;\big|\;}
\newcommand{\Bigmid}{\;\Big|\;}
\newcommand{\eqInfo}[1]{\overset{\clap{\scriptsize #1}}{=}}

\newcommand{\eqIH}{\eqInfo{IH}}

\newcommand{\Inf}{\bigsqcap}
\newcommand{\Sup}{\bigsqcup}
\newcommand{\join}{\sqcup}

\newcommand{\mxls}{\mathrm{maximals}\,}
\newcommand{\ps}[1]{[\![ #1 ]\!]}

\newcommand{\ext}[1]{[\![ #1 ]\!]}

\newcommand{\mgpi}{\pi^{\star}} 

\newcommand{\lfpfml}[4]{[\lfp{} {#1} \, {#2}. \; {#3}]({#4})}
\newcommand{\gfpfml}[4]{[\gfp{} {#1} \, {#2}. \; {#3}]({#4})}
\newcommand{\Lfpfml}[4]{\big[\!\lfp{} {#1} \, {#2}. \; {#3}\big]({#4})}
\newcommand{\Gfpfml}[4]{\big[\!\gfp{} {#1} \, {#2}. \; {#3}\big]({#4})}

\newcommand{\lm}{\pi}
\newcommand{\phia}{\phi(\ta)}
\newcommand{\eval}[2]{#1 \ext {#2}}
\renewcommand{\Game}{\Gg} 

\newcommand{\nbs}{\normalbaselineskip}

\newcommand{\nsmall}{\\[0.3\baselineskip]}

\newcommand{\litcount}[2]{\#_{#1}(#2)}
\newcommand{\Litcount}[2]{\litcount {#1} {#2}}
\newcommand{\Cutsym}{\scalebox{.8}[1]{\text{\raisebox{-1pt}{\Large\Rightscissors}}}}
\newcommand{\Cutsymsmall}{\scalebox{.7}[1]{\text{\raisebox{-1pt}{\small\Rightscissors}}}}
\newcommand{\Pos}{V}
\newcommand{\cut}[1]{|_{#1}} 
\newcommand{\TT}{\mathcal T}
\newcommand{\WinStrat}[1]{\mathcal{W}(#1)}
\newcommand{\hash}[1]{{#1}^{\#}}
\newcommand{\hashhat}[1]{\widehat{#1}^{\#}}
\newcommand{\prefix}{\sqsubseteq}
\newcommand{\prefixneq}{\sqsubset}

\DeclareMathOperator*{\Prod}{\widehat{\prod}}
\DeclareMathOperator*{\OrdProd}{\widetilde{\prod}}
\newcommand{\finsub}{\subseteq_{\text{fin}}}
\newcommand{\cntf}[2]{\#_{#1}(#2)}
\newcommand{\cnt}[1]{\##1}
\newcommand{\pchain}[1]{\widetilde{#1}}

\colorlet{bggray}{gray!30} 
\colorlet{wingray}{gray!40} 
\tikzset{
	dot/.style={draw=black, fill=black, circle, inner sep=0pt, minimum size=5pt},
	win/.style={preaction={shorten <=1pt, shorten >=1pt,draw=wingray, -, double=wingray, double distance=5pt}},
	prio/.style={font=\scriptsize},
  p0/.style={draw=black, fill=white, rounded rectangle, inner sep=2pt, minimum size=.55cm, minimum width=1cm},
  p1/.style={draw=black, fill=white, rectangle, inner sep=2pt, minimum size=.55cm},
  tt/.style={draw=gray, fill=white, rectangle, dashed, inner ysep=2pt, inner xsep=4pt, minimum size=.55cm},
}

\title{Generalized Absorptive Polynomials and Provenance Semantics for Fixed-Point Logic}
\author{Katrin M. Dannert}{RWTH Aachen University, Germany}{dannert@logic.rwth-aachen.de}{}{Supported by the DFG RTG 2236 UnRAVeL.}
\author{Erich Grädel}{RWTH Aachen University, Germany}{graedel@logic.rwth-aachen.de}{}{}
\author{Matthias Naaf}{RWTH Aachen University, Germany}{naaf@logic.rwth-aachen.de}{}{}
\author{Val Tannen}{Univ.~of~Pennsylvania, U.S.A.}{val@cis.upenn.edu}{}{}

\authorrunning{K. Dannert, E. Grädel, M. Naaf,  and V. Tannen}

\Copyright{Katrin M. Dannert, Erich Grädel, Matthias Naaf, and Val Tannen}

\ccsdesc{Theory of Computation~Finite Model Theory}

\keywords{Finite Model Theory, Semiring Provenance, Absorptive Semirings, Fixed-Point Logics}

\nolinenumbers 
\hideLIPIcs  

\begin{document}

\maketitle

\begin{abstract} 
Semiring provenance is a successful approach, originating in database
theory, to providing detailed information on how atomic facts combine
to yield the result of a query. In particular, general provenance
semirings of polynomials or formal power series provide precise
descriptions of the evaluation strategies or ``proof trees'' for the
query. By evaluating these descriptions in specific application
semirings, one can extract practical information for instance about
the confidence of a query or the cost of its evaluation.

This paper develops semiring provenance for very general logical
languages featuring the full interaction between negation and
fixed-point inductions or, equivalently, arbitrary interleavings of 
least and greatest fixed points. This also opens the door to provenance analysis applications for
modal $\mu$-calculus and temporal logics, as well as for finite and
infinite model-checking games.

Interestingly, the common approach based on Kleene's Fixed-Point Theorem for $\omega$-continuous semirings is not 
sufficient for these general languages.  We
show that an adequate framework for the provenance analysis of full
fixed-point logics is provided by semirings that are (1) \emph{fully continuous}, and (2) \emph{absorptive}. 
Full continuity guarantees
that provenance values of least and greatest fixed-points are
well-defined. Absorptive semirings provide a symmetry between least
and greatest fixed-points and make sure that provenance values of
greatest fixed points are informative.

We identify \emph{semirings of generalized absorptive polynomials} $\Sinf[X]$ and 
prove universal properties that make them the most general appropriate semirings for our framework.
These semirings have the further property of being (3) \emph{chain-positive}, which is responsible for having \emph{truth-preserving} interpretations
that give non-zero values to all true formulae.
We relate the provenance analysis of fixed-point formulae with
provenance values of plays and strategies in the associated
model-checking games. Specifically, we prove that the provenance value
of a  fixed point formula gives precise information on the
evaluation strategies in these games.
\end{abstract}

\newpage
\section{Introduction}

Provenance analysis for a logical statement $\psi$, evaluated on a finite structure $\AA$, aims at providing precise
information why $\psi$ is true or false in $\AA$. The approach of \emph{semiring provenance}, going back to
\cite{GreenKarTan07} relies on the idea of annotating the atomic facts 
by values from a commutative semiring,
and to propagate these values through the statement $\psi$, keeping track whether information is used
alternatively (as in disjunctions or existential quantifications) or jointly (as in conjunctions or universal quantifications).
Depending on the chosen semiring, the provenance value  may then give practical information
for instance concerning the \emph{confidence} we may have that $\AA\models\psi$, the \emph{cost} of the evaluation
of $\psi$ on $\AA$, the number of successful evaluation strategies for $\psi$ on $\AA$ in a game-theoretic sense,
and so on. Beyond such provenance evaluations in specific application semirings, 
more general and more precise information is obtained by evaluations in so-called \emph{provenance semirings}
of polynomials or formal power series. Take, for instance, an abstract set $X$ of \emph{provenance tokens}
that are used to label the atomic facts of a structure $\AA$, and consider the semiring $\N[X]$ of polynomials
with indeterminates in $X$ and coefficients from $\N$, which is the 
commutative semiring that is freely generated (`most general') over $X$.
Such a labelling of the atomic facts then extends to a provenance valuation $\pi\ext{\psi}\in\N[X]$ for
every Boolean query $\psi$ from positive relational algebra $\RA^+$ and, indeed, every negation-free first-order sentence $\psi\in\FO^+$.
This provenance valuation gives precise information 
about the combinations of atomic facts that imply the truth of $\psi$ in $\AA$. 
Indeed, we can write $\pi\ext{\psi}$ as a sum of monomials $m\: x_1^{e_1}\cdots x_k^{e_k}$.
Each such monomial indicates that we have precisely $m$ evaluation strategies (or `proof trees') to determine that $\AA\models\psi$
that make use of the atoms labelled by $x_1,\dots x_k$, and the atom labelled by $x_i$ is used precisely $e_i$
times by the strategy, see \cite{GreenKarTan07,GraedelTan17}. 
 
\smallskip\noindent{\bf Provenance for least fixed points. } 
A similar analysis has been carried out for Datalog \cite{DeutchMilRoyTan14,GreenKarTan07}. Due to the need of unbounded least fixed-point iterations 
in the evaluation of Datalog queries, the underlying semirings have to satisfy the additional property of being $\omega$-continuous.
By Kleene's Fixed-Point Theorem, systems of polynomial equations then have least fixed-point solutions that can 
be computed by induction, reaching the fixed-point after at most $\omega$ stages. 
Most of the common application semirings are $\omega$-continuous, or can easily be extended to one that is so; 
however, the most general $\omega$-continuous provenance semiring over $X$  is no longer a semiring of polynomials but
the semiring of formal power series over $X$, denoted $\N^\infty[\![X]\!]$, with coefficients in $\Ninf\coloneqq\N\cup\{\infty\}$.
As above, provenance valuations $\pi\ext{\psi}\in\N^\infty[\![X]\!]$ give precise information about the possible
evaluation strategies for a Datalog query $\psi$ on $\AA$. Even though $\AA$ is assumed to be finite there
may be infinitely many such strategies, but each of them can use each atomic fact only a finite number of times; to put it 
differenty, `proof trees' for $\AA\models\psi$ are still finite. This is closely related to the provenance analysis of reachability games
on finite graphs \cite{DeutchMilRoyTan14,GraedelTan20}.

\smallskip\noindent{\bf Negation: a stumbling block for wider applications.} 
Semiring provenance has been applied to a number of other scenarios, such as nested relations, XML, 
SQL-aggregates, graph databases (see, e.g., the survey \cite{GreenTan17} as well as \cite{RamusatManSen18,Senellart17}),
and it is fair to say that in databases, semiring provenance analysis has been rather successful. 
However, its impact outside of databases has been very limited, despite the fact that the main
questions addressed by provenance analysis, namely which parts of a large heterogeneous input structure are responsible 
for the evaluation of a logical statement, and the applications to cost, confidence, access control and so on are 
clearly interesting and relevant in many other branches of logics 
in computer science.
The  main obstacle for extending semiring provenance to such fields have been difficulties with handling negation. 
For a long time, semiring provenance has essentially been restricted to negation-free query languages,
and although there have been algebraically interesting attempts to cover difference of relations
\cite{AmsterdamerDeuTan11,GeertsPog10,GeertsUngKarFunChr16,GreenIveTan09}, 
they have not resulted in systematic tracking of \emph{negative information}.
While there are many applications in databases where one can get quite far with
using positive information only, logical applications in most other areas are based on
formalisms that use negation in an essential way, often in combination with recursion or fixed-points.

\smallskip\noindent{\bf Provenance semirings for logics with negation and recursion. }
This paper is part of larger project with the objectives to 
\begin{itemize}
\item develop semiring provenance
systematically for a wide range of logics, including those 
featuring the notoriously difficult interaction between full
negation and recursion,
\item to employ algebraic methods for provenance analysis,
in particular universal semirings of polynomials to obtain the most general provenance information,
\item to exploit the connections between logics and various kinds of games
and to use semiring valuations for an analysis of strategies in such games, and
\item to explore practical applications of semiring provenance in new areas of logics
in computer science, where this has not been used so far, such as knowledge representation,
verification, and machine learning.
\end{itemize}
This  project has been initiated in  \cite{GraedelTan17}, where a provenance analysis of
full first-order logic has been proposed. In this approach, negation is dealt with by transformation into
negation normal form%
\footnote{Of course, transformation to negation normal form is a common approach in logic.
But while this is often just a matter of convenience and done for simplification, its seems indispensable
for provenance semantics.
Indeed, beyond Boolean semantics, negation is not a compositional logical operation:
the provenance value of $\neg\phi$ is not necessarily determined by the  provenance value of $\phi$.}
and, algebraically,  by new provenance semirings  
of dual-indeterminate polynomials,
which are obtained by taking quotients of traditional semirings
of polynomials, such as $\N[X]$ 
by congruences  generated by products of positive and negative provenance tokens,
see Sect.~\ref{Sect:prelim} for details. In particular, the semiring $\N[X,\nnX]$ of dual-indeterminate polynomials is the most general provenance semiring for full first-order logic  $\FO$.  These ideas have been
used in \cite{DannertGra19a,DannertGra19b} to provide a provenance analysis of modal and guarded
fragments of first-order logic, and to explore applications in description logic.
Further, this approach has been 
applied to database repairs in \cite{XuZhaAlaTan18}, and it has been shown how 
this treatment of negation, or absent information,
can be used to explain and repair missing query answers and the failure of integrity constraints in 
databases.

While the connection between provenance analysis of first-order logic and
semiring valuations of games had only been hinted at in \cite{GraedelTan17}, 
it has then been developed more systematically in \cite{GraedelTan20}, first for games
on acyclic graphs, which admit only finite plays, and then also for reachability games
on acyclic game graphs. The latter are tightly connected with least fixed-point inductions,
used positively. Combining the approach from \cite{GraedelTan17}
with the provenance analysis of least fixed-point inductions in $\omega$-continuous semirings 
of formal power series, one obtains, by an analogous quotient construction, 
the semiring $\N^\infty[\![X,\nnX]\!]$ of dual-indeterminate
power series \cite{GraedelTan20}. This is the most general provenance semiring for Datalog with negated input predicates and, more generally, also for 
$\posLFP$, the fragment of full fixed-point logic that consists of formulae in negation normal form   
such that all its fixed-point operators are least fixed-points. This is a powerful fixed-point calculus,
which suffices to capture all polynomial-time computable properties of ordered finite structures \cite{Graedel+07}. 
An important simplification of dealing with $\posLFP$ is that the game-based analysis of model checking 
only requires reachability games rather than the much more complicated parity games that are needed
for full LFP. At the end of \cite{GraedelTan20} the problem of generalising semiring valuations and strategy analysis to infinite games with more general objectives than reachability has been discussed.
In particular, a provenance approach for safety games has been proposed, 
with absorptive semirings as the central algebraic tools, and absorption-dominant strategies as
a relevant game-theoretic notion.  

\smallskip\noindent{\bf  Greatest fixed points.}
What has been missing so far, and what we want to provide in this paper, 
is an adequate and systematic treatment of greatest fixed points. There is a strong motivation for this:
If provenance analysis  should ever have an impact in fields such as verification (and we strongly believe it should) 
then dealing with greatest fixed points, e.g. for safety conditions or bisimulation, and with alternations between least and greatest fixed points is indispensable. 
The relevant formalisms in verification (such as LTL, CTL, mu-calculus etc.) are negation closed and based on both 
least and greatest fixed-points, with strict alternation hierarchies (even for finite structures), 
and without possibilities to eliminate greatest fixed-points.
Even in finite model theory, where greatest fixed points can in principle be eliminated from LFP by means of the Stage Comparison Theorem  \cite{Moschovakis74,Graedel+07}, it is usually not desirable to do so. Natural properties involving greatest fixed points (such as bisimilarity)
would become very complicated to express, with the need to double the arity of the fixed-point variables. 
In addition, provenance valuations provide a refined semantics, and statements that are equivalent in the Boolean sense need not  have the same provenance value. Therefore we here do not propose an approach
that first tries to simplify formulae (e.g. by eliminating fixed-point alternations) and
then computes semiring valuations for the translated formulae, but instead
lay  foundations of a provenance analysis for the general logics with arbitrary interleavings of least and greatest fixed points, such as full LFP  or the modal $\mu$-calculus (and for infinite games with more general objectives than reachability).

\smallskip\noindent{\bf  Provenance semirings for arbitrary fixed points.} 
We first address the question, what kind of semirings are adequate for a meaningful and informative
provenance analysis of unrestricted fixed point logics (Sect.\ \ref{Sect:Semirings}). The common approach 
for dealing with least fixed point inductions, based on $\omega$-continuous semirings 
and Kleene's Fixed-Point Theorem, is not sufficient to
guarantee that both least and greatest fixed point  are well-defined.
Instead, we require that the semirings are \emph{fully continuous} which means that every
chain $C$ has not only a supremum $\Sup C$, but also an infimum $\Inf C$,
and that both semiring operations are compatible with these suprema and infima.
For an informative provenance semantics, there is a second important condition that is 
connected with the symmetry between least and greatest fixed point computations.
In the Boolean setting, fixed-point logic is based on complete lattices
which are inherently symmetric. Moreover, conjunction and disjunction are dual
in the sense that one leads to larger lattice elements while the other is decreasing.
In the semiring setting, we compute fixed points with respect to the natural order induced by addition.
The only constraint that relates this order with multiplication is distributivity, but this alone does not suffice to ensure a similar duality.
We achieve this by requiring that the semiring is \emph{absorptive}.
This means that $a+ab=a$ for all $a,b$, and we shall see that 
this is equivalent with $1$ being the greatest element or with multiplication being decreasing, 
giving us the desired duality with $0$ and addition.
As a result, absorptive and fully continuous semirings guarantee a well-defined and informative 
provenance semantics for arbitrary fixed-point formulae.

\smallskip\noindent{\bf Generalized absorptive polynomials.}
For a most general provenance analysis, we further want the semiring semantics to be \emph{truth-preserving}, which means that it gives non-zero values to true formulae.
In positive semirings, this is guaranteed if infima of non-zero values are also non-zero, which we call \emph{chain-positivity}.
Our fundamental examples of absorptive, fully continuous, and chain-positive semirings are the semirings $\Sinf[X]$ of \emph{generalized absorptive polynomials}
and its dual-indeterminate version $\Sinf[X,\nnX]$, as introduced in \cite{GraedelTan20}.  Informally such a polynomial is
a sum of monomials, with possibly infinite exponents, that are maximal with respect to absorption.
For instance a monomial $x^2y^\infty z$ occurring in a provenance value $\pi\ext{\psi}$ indicates
an absorption-dominant evaluation strategy that uses the atom labelled by $x$ twice,
the atom labelled by $y$ an infinite number of times, and the atom labelled by $z$ once. 
This monomial absorbs all those that have larger exponents for all variables, such as for instance $x^3y^\infty z^\infty u$,
but not, say, $x^\infty y^3$. Absorptive polynomials thus describe model-checking proofs or evaluation strategies
with a minimal use of atomic facts. 
A precise definition and analysis of these semirings will be given in Sect.~\ref{Sect:Sinf}.
We prove that they do indeed have universal properties (see Theorem~\ref{universality-of-Sinf}) that make $\Sinf[X, \nnX]$
the most general absorptive semiring for LFP
and thus also an indispensable tool to prove general results
about provenance semantics in absorptive, fully continuous semirings.

\smallskip\noindent{\bf Game-theoretic analysis.} In the final Sect.~\ref{Sect:Games} we illustrate the power of provenance interpretations for LFP in absorptive, fully-continuous semirings, and particularly in $\Sinf[X,\nnX]$ by relating them to provenance values of plays and strategies in the associated
model-checking games which in this case are parity games.
Specifically we prove that, as in the case of $\FO$ and $\posLFP$, the provenance value of an LFP-formula $\phi$
gives precise information on the evaluation strategies in these games.

\section{Preliminaries: Commutative Semirings}\label{Sect:prelim}

\bdefn A \emph{commutative semiring} is an algebraic structure 
$(K,+,\cdot,0,1)$, with $0\neq1$,  such that $(K,+,0)$
and $(K,\cdot,1)$ are commutative monoids, $\cdot$
distributes over $+$, and $0\cdot a=a\cdot 0=0$. 
It is \emph{naturally ordered} if the relation
$a \le b :\Iff a + c = b$ for some $c \in K$ is a partial order.
Further, a commutative semiring is \emph{positive} if 
$a+b=0$ implies $a=0$ and $b=0$
and if it has no divisors of 0 (i.e., $a\cdot b=0$ implies
that $a=0$ and $b=0$).
\edefn
All semirings considered in this paper are commutative
and naturally ordered (which excludes rings). In the following we just write `semiring'
to denote a commutative, naturally ordered semiring.
Standard semirings considered in
provenance analysis are in fact also positive, 
but for an appropriate treatment 
of negation we need semirings (of dual-indeterminate polynomials
or power series) that have divisors of 0.
Notice that a semiring $K$ is positive if, and only if, 
the unique function $h: K\ra \{0,1\}$ with 
$h^{-1}(0)=\{0\}$ is a homomorphism into the Boolean semiring $\Bool$ defined below.

Elements of semirings will be used as truth values for
logical statements.
The intuition is that + describes the \emph{alternative use}
of information, as in disjunctions or existential quantifications,
whereas $\cdot$ stands for the \emph{joint use} of information,
as in conjunctions or universal quantifications. Further, 0 is the value
of false statements, whereas any element $a\neq 0$ of a semiring $K$ stands
for a ``nuanced''  interpretation of true.   
We briefly discuss some specific semirings that provide interesting
information about a logical statement.
\begin{itemize}
\item The \emph{Boolean semiring} $\Bool=(\{0,1\},\vee,\wedge,0,1)$ is the standard habitat of 
logical truth. 
\item $\Nat=(\Nat,+,\cdot,0,1)$ is used for counting evaluation strategies for a logical statement.
\item $\Trop=(\mathbb{R}_{+}^{\infty},\min,+,\infty,0)$ 
is called the \emph{tropical} semiring. It can be used for measuring the cost of 
evaluation strategies.
\item The \emph{Viterbi} semiring $\Vit=([0,1],\max,\cdot,0,1)$
is used to compute \emph{confidence scores}
for logical statements. It is in fact isomorphic to $\Trop$.
\item The \emph{min-max} semiring on a totally ordered set $(A,\leq)$
with least element $a$ and greatest  element $b$
is the semiring $(A,\max,\min,a,b)$. 
\end{itemize}

Beyond these \emph{application semirings}, 
(most general) abstract provenance 
can be calculated in freely generated (universal)
\emph{provenance semirings} of polynomials or formal power series.
The abstract provenance can
then be specialised 
via homomorphisms to provenance values in
different application semirings as needed.
\begin{itemize}
\item For any set $X$, the semiring $\Nat[X]=(\Nat[X],+,\cdot,0,1)$
consists of the multivariate polynomials in indeterminates from $X$
with coefficients from $\Nat$. This is the commutative
semiring freely generated by $X$. 
Admitting also infinite sums of monomials we obtain the 
semiring $\N^\infty[\![X]\!]$ of formal power series over $X$, with coefficients in $\Ninf\coloneqq\N\cup\{\infty\}$.
\item Given two disjoint sets $X,\nnX$ of ``positive''  and ``negative''
provenance tokens, together with a one-to-one correspondence $X\leftrightarrow\nnX$, mapping each 
positive token $x$ to its corresponding negative token $\nnx$, the
semiring $\N[X,\nnX]$ is the quotient
of the semiring of polynomials $\N[X\cup\nnX]$ by the
congruence generated by the equalities $x\cdot\nnx=0$ for
all $x \in X$.  This is the same as quotienting by the ideal generated 
by the polynomials $x\nnx$ for all $x \in X$.
The congruence classes in $\N[X,\nnX]$ are in one-to-one correspondence
with the polynomials in $\N[X\cup\nnX]$ such that none of their monomials contain 
complementary tokens. We call these \emph{dual-indeterminate polynomials}.
$\N[X,\nnX]$ is freely generated by $X\cup\nnX$ for homomorphisms such
that $h(x)\cdot h(\nnx)=0$.
By a completely analogous quotient construction, we obtain
the semiring $\N^\infty[\![X,\nnX]\!]$ of dual-indeterminate power series.
\item By dropping coefficients from $\Nat[X]$, we get the semiring
$\B[X]$ whose elements are just finite sets
of distinct monomials. It is the free idempotent semiring over $X$.
By dropping also exponents, we get the semiring $\W[X]$ of 
finite sums of monomials that are linear in each argument.
It is sometimes called the Why-semiring.
\item The semiring $(\PosBool(X), \lor, \land, \text{false}, \text{true})$
consists of the positive Boolean expressions over the variables $X$,
where we identify logically equivalent expressions.
\end{itemize}

\section{Provenance Semantics for Fixed-Point Logic}\label{Sect:LFP}

Semiring provenance is well understood for first-order logic
and for logics with only least fixed-points, used positively.
To extend it to logics with arbitrary interleavings of least and greatest fixed points,
we discuss the general fixed-point logic LFP that extends first-order logic by 
least and greatest fixed-point operators, but our insights also apply to weaker logics such as 
the modal $\mu$-calculus, dynamic logics, or temporal logics such as CTL.

\medskip\noindent{\bf Least Fixed-Point Logic.}
Least fixed-point logic, denoted LFP,   
extends first order logic by 
least and greatest fixed points of definable 
monotone operators on relations:
If  $\psi(R,\tup x)$ is a formula 
of vocabulary $\tau\cup\{R\}$, in which the  relational variable $R$ occurs only positively and 
the length of $\tup x$ matches the arity of $R$, then 
$[\lfp R\tup x\st \psi](\tup x)$ and $[\gfp R\tup x \st\psi](\tup x)$
are also formulae (of vocabulary $\tau$).
The semantics of these formulae is that $\tup x$ is contained in the least
(respectively the greatest) fixed point of the update operator
$F_\psi:R\mapsto\{\tup a:  \psi(R,\tup a)\}$. Due to the positivity of $R$ in $\psi$,
any such operator $F_\psi$ is monotone and has, by the Knaster-Tarski-Theorem, a least 
fixed point $\lfp(F_\psi)$ and a greatest fixed point $\gfp(F_\psi)$. 
See e.g.\ \cite{Graedel+07} for background on $\LFP$.
The duality between least and greatest fixed points implies
that 
$[\gfp R\tup x\st \psi](\tup x)\equiv \neg [\lfp R\tup x\st \neg\psi[R/\neg R]](\tup x)$.
By this duality together with de Morgan's laws, every LFP-formula can
be brought into \emph{negation normal form}, where negation applies to
atoms only.
The fragment $\posLFP$ of LFP consists of the formulae in negation normal form
in which all fixed-point operators are least fixed-points.
It is well-known that LFP, and even $\posLFP$,  captures all polynomial-time computable
properties of ordered finite structures \cite{Graedel+07}.

\medskip\noindent{\bf Provenance Semantics.}
Instead of truth-values, we now assign semiring values to literals.
For a finite universe $A$ and a finite relational vocabulary $\tau$  we denote the set of atoms as
$\Atoms_A(\tau)=\{R\tup{a}\colon R\in\tau,\ \tup{a}\in A^{\text{arity}(R)}\}$.
The set $\NegAtoms_A(\tau)$ contains all
negations $\neg R\tup{a}$ of atoms in $\Atoms_A(\tau)$ and we define the set of 
$\tau$-literals on $A$ as
\begin{align*}
 \Lit_A(\tau)\coloneqq\Atoms_A(\tau)\cup\NegAtoms_A(\tau)\cup \{a=b\colon a,b\in A\}
 \cup \{a\neq b\colon a,b\in A\}.
\end{align*}

\bdefn
 For any semiring $K$, a $K$-\emph{interpretation}
 (for $\tau$ and $A$) is a function $\pi\colon \Lit_A(\tau)\rightarrow K$ 
mapping true equalities and inequalities to $1$ and false ones to $0$.
\edefn

We can extend $K$-interpretations $\pi$ to provide provenance values $\pi \ext \phi$
for any first-order formula $\phi$ in a natural way \cite{GraedelTan17},
by interpreting disjunctions and existential quantification via addition,
and conjunctions and universal quantification via multiplication. Negation is not interpreted directly by an algebraic operation. 
We deal with it syntactically, by evaluating the negation normal form $\nnf(\psi)$ instead.
To interpret fixed-point formulae
$[\lfp R\tup x\st \psi](\tup a)$ and $[\gfp R\tup x \st\psi](\tup a)$,
we generalize the update operators $F_\psi$ to semiring semantics.
If $R$ has arity $m$, then its $K$-interpretations on $A$ are functions $g:A^m\ra K$.
These functions are ordered, by $g\leq g'$ if, and only if, $g(\tup a)\leq g'(\tup a)$
for all $\tup a\in A^m$ (recall that our semirings are naturally ordered).
Given a $K$-interpretation $\pi\colon\Lit_A(\tau)\ra K$, we denote by $\pi[R\mapsto g]$
the $K$-interpretation of $\Lit_A(\tau)\cup \Atoms_A(\{R\})$ obtained from $\pi$ by
adding values $g(\tup c)$ for the atoms $R\tup c$. (Notice that $R$ appears only positively in $\phi$,
so negated $R$-atoms are not needed).
The formula $\phi(R,\tup x)$ now defines, together with $\pi$, a monotone update
operator $F_\pi^\phi$ on functions $g: A^m\ra K$. More precisely, it maps
$g$ to the function
\[   F_\pi^\phi(g) \colon \; \tup a\mapsto   \pi[R\mapsto g]\ext{\phi(R,\tup a)}. \]

We obtain a well-defined provenance semantics for LFP if we can make sure that
the update operators $F_\pi^\phi$ have least and greatest fixed-points
$\lfp(F_\pi^\phi)$, $\gfp(F_\pi^\phi) \colon A^m\ra K$.
However, this is not guaranteed in all semirings, and also the common approach to least fixed-point inductions
based on $\omega$-continuous semirings is not sufficient here,
as these, in general, do not guarantee the existence of \emph{greatest} fixed points.
This raises the fundamental question: which semirings are really appropriate for LFP? 
We shall discuss this in detail in the next section. Once we have fixed a notion of appropriate semirings for LFP,
we obtain a provenance semantics for LFP as follows.

\bdefn
 A $K$-interpretation $\pi\colon \Lit_A(\tau)\rightarrow K$ 
 in an \emph{appropriate} semiring $K$
 extends to a $K$-\emph{valuation}
 $\pi\colon \LFP(\tau)\rightarrow K$ by mapping an $\LFP$-sentence $\psi(\tup{a})$ to a value $\pi\ext\psi$ using the following rules
 \begin{align*}
  &\pi\ext{\psi\vee\phi}\coloneqq\pi\ext\psi + \pi\ext\phi & &\pi\ext{\psi\wedge\phi}\coloneqq\pi\ext\psi \cdot \pi\ext\phi 
  & & \pi\ext{\exists x\psi(x)}\coloneqq\sum_{a\in A}\pi\ext{\phi(a)}\\
  &\pi\ext{\forall x\psi(x)}\coloneqq\prod_{a\in A}\pi\ext{\phi(a)} & &
  \mathrlap{\pi \ext{[\lfp R\tup x. \phi(R,\tup x)](\tup a)} \coloneqq \lfp(F_\pi^\phi)(\tup a)} \\
  &\pi\ext{\neg\psi}\coloneqq\pi\ext{\nnf(\psi)} & &
  \mathrlap{   \pi \ext{[\gfp R\tup x. \phi(R,\tup x)](\tup a)} \coloneqq \gfp(F_\pi^\phi)(\tup a).}
 \end{align*}
\edefn

We remark that there is an important difference between the classical Boolean
semantics and provenance semantics concerning the relationship
of fixed-point logics with first-order logic. The (Boolean) evaluation of a fixed-point
formula on a finite structure is computed by fixed-point inductions 
that terminate after a polynomial number of stages (with respect to the size of the structure).
Hence, on any fixed finite universe, a fixed-point formula can be unraveled to
an equivalent first-order formula. This is not the case for the provenance valuations in
infinite semirings. Even for very simple Datalog queries, a fixed-point induction
need not terminate after a finite number of steps. Provenance valuations 
provide more information that just the truth or falsity of a statement, and
in a general setting, this provenance information, for instance about the
number and properties of successful evaluation strategies, may also be infinite.

\section{Semirings for Fixed-Point Logic}
\label{Sect:Semirings}

Given a naturally ordered semiring $K$,
a \emph{chain} is a totally ordered subset $C \subseteq K$.
For $\circ \in \{ +, \cdot \}$ we  write $a \circ C$ for $\{a \circ c \mid c \in C \}$.
Provided they exist, we write $\Sup C$ and $\Inf C$ for the
\emph{supremum} (least upper bound) and \emph{infimum} (greatest lower bound) of  $C \subseteq K$,
and further $\bot$ and $\top$ for the least and greatest elements of $K$.
We say that a function $f : K_1 \to K_2$ is fully chain-continuous
or, for short, \emph{fully continuous}
if it preserves suprema and infima
of nonempty
chains, i.e., $f(\Sup C) = \Sup f(C)$ and $f(\Inf C) = \Inf f(C)$
for all chains $\emptyset \neq C \subseteq K_1$.

\bdefn
A naturally ordered semiring $K$ is \emph{fully chain-complete}
if every chain $C \subseteq K$ has a supremum $\Sup C$
and an infimum $\Inf C$ in $K$.
It is additionally \emph{fully continuous} if its operations are fully continuous in both arguments,
i.e., 
$a \circ \Sup C = \Sup (a \circ C)$ and
$a \circ \Inf C = \Inf (a \circ C)$
for all $a \in K$,
chains $\emptyset \neq C \subseteq K$ and
$\circ \in \{ +, \cdot \}$.
\edefn

Examples of fully continuous semirings include
the Viterbi semiring, $\N^\infty$ and formal power series 
$\N^\infty[\![X]\!]$ and $\N^\infty[\![X,\nnX]\!]$.
For positive least fixed-point inductions, as in Datalog \cite{GreenKarTan07} or $\posLFP$ \cite{GraedelTan20},  the common approach is to 
use \emph{$\omega$-continuous} semirings. There, only suprema of $\omega$-chains are required and both operations
must preserve suprema. It would be tempting to work with a minimal generalization that imposes similar properties
for descending $\omega$-chains, using a dual version of Kleene's Fixed-Point Theorem.
However the following example shows that this approach will not work in general
with alternating fixed points.

\ex
\label{ex:noncontinuousLukasiewicz}
Let $K$ be a naturally ordered semiring that has both suprema of ascending
$\omega$-chains and infima of descending $\omega$-chains and let
$f : K \times K \to K$ be a function
that preserves these suprema and infima in each argument.
For each $x \in K$, we can consider the function $g_x : K \to K$, 
$g_x(y) = f(x, y)$
and, further, the function $G : K \to K$, $G(x) = \gfp(g_x)$.
Note that $G$ is well-defined due to
the preservation property of $f$ and a dual version of Kleene's Fixed-Point Theorem.
Now consider $\lfp(G)$. 
To guarantee the existence of this fixed point via Kleene's theorem, 
$G$ has to preserve suprema of $\omega$-chains.
This is, however, not the case, in general.
One counterexample is the the function $f(x,y) = x \diamond y$
in the (fully continuous) {\L}ukasiewicz semiring $\Luk = ([0,1], \max, \diamond, 0, 1)$
with $a \diamond b = \max(0, a + b - 1)$ on the $\omega$-chain $(x_n)_{n < \omega}$ defined by $x_n = 1 - \frac 1 {1+n}$.
Then $G(\Sup_{n < \omega} x_n) = G(1) = \gfp(g_1) = 1$,
whereas $\Sup_{n < \omega} G(x_n) = \Sup_{n < \omega} \gfp\big(g_{x_n}\big) = \Sup_{n < \omega} 0 = 0$.
\eex

Instead, we rely on $K$ being fully chain-complete
to guarantee the existence of fixed points of monotone functions.
We can then extend \cite{Moschovakis74} the Kleene iteration
$\bot$, $f(\bot)$, $f^2(\bot)$, $f^3(\bot)$, $\dots$ for $\lfp(f)$
to a transfinite fixed-point iteration $(x_\beta)_{\beta \in \On}$ by setting
$x_0 = \bot$, $x_{\beta+1} = f(x_\beta)$ for ordinals $\beta$ and
$x_\lambda = \Sup \{ x_\beta \mid \beta < \lambda \}$ for limit ordinals $\lambda$.
If $f$ is monotone, this iteration forms a chain and is well-defined due to the chain-completeness of $K$.
The iteration for $\gfp(f)$ can be defined analogously
by 
$x_\lambda = \Inf \{ x_\beta \mid \beta < \lambda \}$ for limit ordinals
and it follows that both $\lfp(f)$ and $\gfp(f)$ exist in fully chain-complete semirings.

\prop
\label{propMonotoneFixpoint}
For a monotone function $f : K \to K$ on a fully chain-complete semiring,
both $\lfp(f)$ and $\gfp(f)$ exist.
\eprop

\begin{proof}
	Consider the fixed-point iteration $(x_\beta)_{\beta \in \On}$ for $\lfp(f)$ defined above.
	As $K$ is a set, there must be an ordinal $\alpha \in \On$ with
	$x_\alpha = x_{\alpha+1} = f(x_\alpha)$, so $x_\alpha$ is a fixed point of $f$.
	To see that $x_\alpha$ is the least fixed point,
	let $x'$ be any fixed point of $f$.
	Clearly, $\bot \le x'$ and, by monotonicity, $f(\bot) \le f(x') = x'$.
	By induction, it follows that $x_\beta \le x'$ for all 
	$\beta \in \On$.
	The proof for $\gfp(f)$ is analogous.
\end{proof}

Coming back to the question of appropriate semirings for LFP,
we observe that the monotonicity of the semiring operations $\cdot$ and $+$
lifts to monotonicity of update operators $F_\pi^\phi$.
Hence \cref{propMonotoneFixpoint} ensures that their least and greatest fixed points always exist.

\begin{theorem}
\label{propSemanticsWelldefined}
Semiring semantics for $\LFP$ is well-defined in fully chain-complete semirings.
\end{theorem}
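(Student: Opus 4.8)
The plan is to reduce the well-definedness of the semiring semantics for $\LFP$ to Proposition~\ref{propMonotoneFixpoint} by a structural induction on the formula. The only semantic clauses that could fail to be well-defined are the two fixed-point clauses $\pi\ext{[\lfp R\tup x.\,\phi(R,\tup x)](\tup a)} = \lfp(F_\pi^\phi)(\tup a)$ and its $\gfp$ counterpart, since all other clauses are finite sums and products of already-defined values (the universe $A$ is finite, so $\sum_{a\in A}$ and $\prod_{a\in A}$ are genuine finitary operations in $K$). So the heart of the argument is: for every subformula $\phi(R,\tup x)$ in which $R$ occurs only positively, and every $K$-interpretation $\pi$ of the remaining literals, the update operator $F_\pi^\phi$ on the lattice of functions $g\colon A^m\to K$ has a least and a greatest fixed point.

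First I would fix the induction hypothesis: for every $K$-interpretation $\sigma$ on $\Lit_A(\tau)\cup\Atoms_A(\{R\})$, the value $\sigma\ext{\phi(R,\tup a)}$ is a well-defined element of $K$ for each $\tup a\in A^m$; this is exactly the inductive claim applied to the (smaller) formula $\phi$, where the extra predicate $R$ is now treated as one more relation symbol carrying values supplied by $\sigma$. Granting this, $F_\pi^\phi(g)\colon\tup a\mapsto \pi[R\mapsto g]\ext{\phi(R,\tup a)}$ is a well-defined function $A^m\to K$, so $F_\pi^\phi$ is a total operator on the function space. Next I would observe that this function space $K^{A^m}$, ordered pointwise, is itself a fully chain-complete semiring — indeed suprema and infima of chains are computed coordinatewise, and $A^m$ is finite, so a chain of functions restricts on each coordinate to a chain in $K$ whose sup/inf exists by full chain-completeness of $K$, and patching these together yields the sup/inf of the function chain. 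Hence Proposition~\ref{propMonotoneFixpoint} applies to $K^{A^m}$ once we know $F_\pi^\phi$ is monotone.

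The main step, and the one I expect to require the most care, is the \emph{monotonicity} of $F_\pi^\phi$: if $g\le g'$ pointwise then $F_\pi^\phi(g)\le F_\pi^\phi(g')$. This is proved by a sub-induction on $\phi$, using the positivity of $R$ in $\phi$. The base cases are literals: an $R$-atom $R\tup c$ gets value $g(\tup c)\le g'(\tup c)$, and every other literal is untouched by $g$, hence its value is unchanged; negated $R$-atoms cannot occur by the positivity restriction, which is precisely where positivity is essential. For the inductive cases one uses that in a naturally ordered semiring both $+$ and $\cdot$ are monotone in each argument (if $a\le b$, say $b=a+c$, then $a\circ d\le b\circ d$ since $b\circ d=(a\circ d)+(c\circ d)$), so disjunctions, conjunctions, and the finite quantifier sums/products over $A$ preserve $\le$; and for nested fixed-point subformulae $[\lfp S\ty.\,\th(R,S,\ty)]$ one notes that raising the values of the $R$-atoms raises $F^\th$ pointwise (by the induction hypothesis on $\th$, again using positivity of both $R$ and $S$ in $\th$), and the $\lfp$/$\gfp$ operators are themselves monotone in the operator — the fixed-point iteration from Proposition~\ref{propMonotoneFixpoint} is built coordinatewise from a monotone operator, and a standard transfinite induction along the two iterations shows $F_1\le F_2$ pointwise implies $\lfp(F_1)\le\lfp(F_2)$ and $\gfp(F_1)\le\gfp(F_2)$. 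Once monotonicity of $F_\pi^\phi$ is in hand, Proposition~\ref{propMonotoneFixpoint} delivers $\lfp(F_\pi^\phi)$ and $\gfp(F_\pi^\phi)$, closing the induction and establishing that $\pi\ext{\psi}$ is well-defined for every $\LFP$-sentence $\psi$. I would also remark that full \emph{continuity} of $K$ is not needed for mere well-definedness — chain-completeness suffices, as the iteration simply runs through the ordinals until it stabilizes — and continuity will only become relevant later for controlling the length of the iteration and for the universality and game-theoretic results.
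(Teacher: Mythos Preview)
Your proposal is correct and follows essentially the same route as the paper: reduce to Proposition~\ref{propMonotoneFixpoint} applied to the pointwise-ordered function space $K^{A^m}$, and establish monotonicity of the update operator by structural induction, handling nested fixed points via a transfinite induction along the two iterations. The only cosmetic difference is that the paper packages the monotonicity argument as the single statement ``$\pi\ext\phi$ is monotone in the whole interpretation $\pi$'' (its Claim~(2)), which slightly streamlines the nested-fixed-point case, whereas you phrase it as monotonicity in the $R$-values together with the auxiliary fact that $\lfp/\gfp$ are monotone in the operator---but these are the same argument.
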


\begin{proof}
Clearly, the semantics of FO operators ($\lor$, $\land$, $\E$, $\A$) are well-defined
(for quantifiers, recall that we assume a finite universe and thus only have finite sums and products).
What remains to prove is that the fixed points $\lfp(F_\pi^\phi)$ and $\gfp(F_\pi^\phi)$ are well-defined.
Recall that an update operator $F_\pi^\phi$ does not operate on the semiring $K$,
but on functions $A^k \to K$.
These functions form a semiring under pointwise operations
that inherits most of the properties from $K$.
Most importantly, it inherits chain-completeness and continuity.
By \cref{propMonotoneFixpoint}, it thus suffices to prove
that update operators $F_\pi^\phi$ are always monotone.

Towards the proof, we say that
$\pi \ext \phi$ is monotone in $\pi$, if $\pi_1 \le \pi_2$ (pointwise comparison)
implies $\pi_1 \ext \phi \le \pi_2 \ext \phi$.
We split the monotonicity proof into two steps.

\begin{claim*}[1]
	Let $K$ be a fully chain-complete semiring and $\theta(R, \tx)$ an LFP-formula.
	If $\pi \ext \theta$ is monotone in $\pi$, then the update operator $F_\pi^\theta$ is monotone.
\end{claim*}

\begin{claimproof}
	Let $k$ be the arity of $R$ and let $g_1, g_2 : A^k \to K$ with $g_1 \le g_2$.
	To simplify notation, let $g_1' = F_\pi^\theta(g_1)$ and $g_2' = F_\pi^\theta(g_2)$.
	Due to $g_1 \le g_2$, we also have $\pi[R/g_1] \le \pi[R/g_2]$. Then $g_1' \le g_2'$, as for all $\ta \in A^k$:
	$
	g_1'(\ta) = {\lm[R/g_1]} \ext {\theta(\ta)}
	\le
	{\lm[R/g_2]} \ext {\theta(\ta)} = g_2'(\ta)
	$,
	due to the monotonicity assumption on $\pi \ext \theta$.
\end{claimproof}

\begin{claim*}[2]
Let $K$ be a fully chain-complete semiring. Then $\lm \ext \phi$ is monotone in $\lm$.
\end{claim*}

\begin{claimproof}
	Fix $K$-interpretations $\pi_1 \le \pi_2$. We proceed by induction on the negation normal form of $\phi$.
	\begin{itemize}
		\item For literals, $\pi_1 \ext {R\ta} = \pi_1(R\ta) \le \pi_2(R\ta) = \pi_2 \ext {R \ta}$.
		The same holds for negative literals (and similarly for equality atoms).
		
		\item If $\phi = \phi_1 \lor \phi_2$, then ${\lm_i} \ext \phi = {\lm_i} \ext {\phi_1} + {\lm_i} \ext {\phi_2}$ for $i \in \{1,2\}$.
    By induction, $\pi_1 \ext {\phi_1} \le \pi_2 \ext {\phi_1}$ and $\pi_1 \ext {\phi_2} \le \pi_2 \ext {\phi_2}$.
    The claim then follows by monotonicity of $+$.
    The cases for $\land$, $\E$ and $\A$ are analogous.
		
		
		\item If $\phi = \lfpfml R \tx \theta \ty$ with $R$ of arity $k$, we proceed by induction on the fixed-point iterations
		$(g_\beta)_{\beta \in \On}$ for $\pi_1$ and $(f_\beta)_{\beta \in \On}$ for $\pi_2$.
    Notice that these are functions $g_\beta, f_\beta \colon A^k \to K$.
		By the induction hypothesis and \textsf{Claim (1)}, $F_{\lm_1}^\theta$ and $F_{\lm_2}^\theta$ are monotone
		and hence the fixed-point iterations are well-defined.
		We prove by induction that $g_\beta \le f_\beta$ for all $\beta \in \On$.
    The proof for $\phi = \gfpfml R \tx \theta \ty$ is completely analogous.
		\begin{itemize}
			\item For $\beta = 0$, we have $g_0,f_0 : A^k \to K,\, \ta \mapsto 0$. In particular, $g_0 \le f_0$.
			
			\item For successor ordinals $\beta+1$, we have $\pi_1[R/g_\beta] \le \pi_2[R/f_\beta]$ by the induction hypothesis for $\beta$.
			Applying the outer induction hypothesis for $\theta$ then yields:
			\[
			g_{\beta+1}(\ta)
			= F_{\lm_1}^\theta(g_\beta)(\ta)
			=   {\lm_1[R/g_\beta]} \ext {\theta(\ta)}
			\le {\lm_2[R/f_\beta]} \ext {\theta(\ta)}
			= F_{\lm_2}^\theta(f_\beta)
			= f_{\beta+1}.
			\]
			
			\item For limit ordinals $\lambda$, we have
			$g_\lambda = \Sup \{ g_\beta \mid \beta < \lambda \} \le \Sup \{ f_\beta \mid \beta < \lambda \} = f_\lambda$
			since we know that $g_\beta \le f_\beta$ for all $\beta < \lambda$.
			
		\end{itemize}
		This ends the induction on $\beta$.
    By choosing a sufficiently large ordinal $\beta$, we can conclude
		\[
		\pi_1 \ext {\phi(\ta)} = \lfp(F_{\pi_1}^\theta)(\ta)
		= g_\beta(\ta) \le f_\beta(\ta)
		= \lfp(F_{\pi_2}^\theta)(\ta) = \pi_2 \ext {\phi(\ta)}. \qedhere
		\]
	\end{itemize}
\end{claimproof}
\noindent
Together, the two claims entail the monotonicity of update operators.
\end{proof}

We further remark that full chain-completeness is more general
than the common notion of complete lattices, used in the Knaster-Tarski fixed-point theory,
as we only require suprema (and infima) of chains instead of arbitrary sets.
However, based on results in \cite{Markowsky76} it follows that the two notions coincide
for the semirings we are interested in.

\prop
\label{propIdempotentLattice}
If $K$ is an idempotent, fully chain-complete semiring,
then its natural order forms a complete lattice,
i.e., suprema and infima of arbitrary sets exist.
\eprop

\begin{proof}
	We first show that addition coincides with finite suprema,
	i.e.\ $a + b = \Sup \{a,b\}$ for $a,b \in K$.
	Clearly, $a \le a+b$ and $b \le a+b$, so $\Sup \{a,b\} \le a+b$.
	The other direction follows from idempotence:
	$a + b \le \Sup \{a,b\} + \Sup\{a,b\} = \Sup \{a,b\}$.
	
	Hence suprema of arbitrary finite sets exist (by summation).
	Due to an old result of Markowsky \cite{Markowsky76}, chain-completeness and
	finite suprema imply the existence of suprema of arbitrary (possibly infinite) sets. Infima can be expressed via suprema, so $K$ forms a complete lattice
	under its natural order.
\end{proof}

The following fundamental property for provenance analysis (cf.\ \cite{GraedelTan17})
establishes a closer connection between logic (the semantics of $\phi$)
and algebra (the semiring homomorphism $h$) and enables us to compute provenance information
in a general semiring and then specialize the result to
application semirings by applying homomorphisms,
most prominently by working with polynomials and applying polynomial evaluation.

\begin{proposition}[Fundamental Property]
\label{propFundamentalProperty}
Let $K_1$, $K_2$ be fully chain-complete semirings
and let $h : K_1 \to K_2$ be a fully continuous semiring homomorphism with $h(\top) = \top$.
Then for every $K_1$-interpretation $\pi$, the mapping $h \circ \pi$ is a $K_2$-interpretation and
for every $\phi\in\LFP$, we have $h(\pi \ext \phi) = (h \circ \pi) \ext \phi$.
\end{proposition}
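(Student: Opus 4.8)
The plan is to prove the Fundamental Property by induction on the structure of the negation normal form of $\phi$, exactly paralleling the definition of the valuation $\pi\ext{\cdot}$. First I would record two easy preliminary observations. The map $h\circ\pi$ is a $K_2$-interpretation: since $h$ is a semiring homomorphism it sends $0\mapsto 0$ and $1\mapsto 1$, so true literals still go to $1$ and false literals to $0$. Also, since $h$ is a fully continuous homomorphism with $h(\top)=\top$, it preserves $\bot=0$, $\top$, finite sums and products, and suprema and infima of nonempty chains; these are precisely the ingredients the semantics uses.

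Next I would run the induction. For literals, $h(\pi\ext{\ell})=h(\pi(\ell))=(h\circ\pi)(\ell)=(h\circ\pi)\ext{\ell}$ by definition. For $\lor$ and $\exists$ (finite sums over the finite universe $A$), we use that $h$ commutes with $+$ and the induction hypothesis; dually for $\land$ and $\forall$ using that $h$ commutes with $\cdot$. The case $\neg\psi$ is immediate since $\pi\ext{\neg\psi}=\pi\ext{\nnf(\psi)}$ and $\nnf(\psi)$ is structurally smaller (or one simply does the induction on nnf from the start, so that $\neg$ only ever appears in front of atoms). The interesting cases are the fixed points. For $\phi=\lfpfml R\tx{\theta(R,\tx)}{\ta}$, let $F^\theta_\pi$ and $F^\theta_{h\circ\pi}$ be the update operators on functions $A^k\to K_1$ and $A^k\to K_2$ respectively, and let $(x_\beta)_{\beta\in\On}$, $(y_\beta)_{\beta\in\On}$ be the transfinite fixed-point iterations from \cref{propMonotoneFixpoint}. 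I would prove by induction on $\beta$ that $h\circ x_\beta = y_\beta$ (here $h$ is applied pointwise to the function $x_\beta:A^k\to K_1$). The base case $\beta=0$ uses $h(0)=0$; the successor step uses the outer induction hypothesis applied to $\theta$ together with the fact that $h$ is compatible with the substitution, i.e.\ $h\circ(\pi[R/x_\beta]) = (h\circ\pi)[R/(h\circ x_\beta)] = (h\circ\pi)[R/y_\beta]$, so that $h(x_{\beta+1}(\ta)) = h(\pi[R/x_\beta]\ext{\theta(\ta)}) = (h\circ\pi)[R/y_\beta]\ext{\theta(\ta)} = y_{\beta+1}(\ta)$; the limit step uses full continuity of $h$ to push $h$ through $\Sup$ over the (chain-forming) iteration. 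Since both iterations stabilize at some common large enough ordinal $\alpha$, we get $h(\pi\ext{\phi(\ta)}) = h(x_\alpha(\ta)) = y_\alpha(\ta) = (h\circ\pi)\ext{\phi(\ta)}$. The case $\phi=\gfpfml R\tx{\theta(R,\tx)}{\ta}$ is completely symmetric, with $\bot$ replaced by $\top$ (using $h(\top)=\top$) and $\Sup$ by $\Inf$.

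The main obstacle — really the only nonroutine point — is the limit step of the fixed-point induction: one must know that the iteration $(x_\beta)$ in $K_1$ is genuinely a chain (which it is, by monotonicity of $F^\theta_\pi$, as established around \cref{propMonotoneFixpoint}), so that $h$'s preservation of chain-suprema applies, and one must check that $h$ maps the $K_1$-iteration chain onto the $K_2$-iteration chain cofinally, so that the two iterations reach their fixed points "in sync". This is exactly why the hypothesis on $h$ is \emph{full} continuity (preserving both suprema and infima of nonempty chains) together with $h(\top)=\top$, rather than mere $\omega$-continuity: alternating fixed points may need transfinitely many stages, and greatest fixed points need the infimum side. Everything else is a bookkeeping matter of commuting $h$ with the finitary operations and with the $R$-substitution.
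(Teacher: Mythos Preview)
Your proposal is correct and follows essentially the same approach as the paper: structural induction on the negation normal form, with the fixed-point cases handled by a transfinite induction showing that $h$ maps the iteration in $K_1$ stepwise onto the iteration in $K_2$ (base case via $h(0)=0$ or $h(\top)=\top$, successor via the outer hypothesis on $\theta$, limit via full continuity). The only cosmetic difference is that the paper spells out the $\gfp$ case and leaves $\lfp$ as analogous, whereas you do the reverse.
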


\noindent
As diagram:\qquad
\begin{tikzpicture}[baseline,node distance=2cm,font=\small]
\node [baseline, anchor=base] (lit) {$\Lit_A(\tau)$};
\node [below left of=lit] (litS) {$K_1$};
\node [below right of=lit] (litT) {$K_2$};

\node [right=6cm of lit.base, anchor=base] (fol) {\normalfont LFP};
\node [below left of=fol] (folS) {$K_1$};
\node [below right of=fol] (folT) {$K_2$};

\node [align=center] at ({$(lit.center)!0.5!(fol.center)$} |- {$(lit.center)!0.5!(litS.center)$}) {$\implies$};

\path[draw,->,shorten <=2pt, shorten >=2pt, font=\scriptsize]
(lit) edge node [above left] {$\pi$} (litS)
(lit) edge node [above right] {$h \circ \pi$} (litT)
(litS) edge node [above] {$h$} (litT)
(fol) edge node [above left] {$\pi$} (folS)
(fol) edge node [above right] {$h \circ \pi$} (folT)
(folS) edge node [above] {$h$} (folT)
;
\end{tikzpicture}

\begin{proof}
  The proof is by induction on the structure of $\phi$.
  For fixed-point formulae, we consider the fixed-point iterations in $K_1$ and $K_2$,
  and we prove that all steps of the iterations are preserved by $h$.
  Here we need the assumption that $h$ is fully continuous.
  Formally, we prove that for all LFP-formulae $\phi(\tx)$ in negation normal form,
 	$h(\pi \ext \phia) = (h \circ \phi) \ext \phia$ holds for all $K$-interpretations $\pi$ and all tuples $\ta$ from the universe $A$.
 	\begin{itemize}
 		\item For literals, we have $h(\pi \ext {R\ta}) = h(\pi(R\ta)) = (h \circ \pi)(R\ta) = (h \circ \pi) \ext {R\ta}$.
 		
 		\item For $\phi = \phi_1 \land \phi_2$ (and, analogously, for $\lor$, $\E$, $\A$) we use that $h$ is a semiring homomorphism:
 		$
 		h(\lm \ext \phi) =
 		h(\lm \ext {\phi_1} \cdot \lm \ext {\phi_2}) =
 		h(\lm \ext {\phi_1}) \cdot h(\lm \ext {\phi_2}) =
 		(h \circ \lm) \ext {\phi_1} \cdot (h \circ \lm) \ext {\phi_2} =
 		(h \circ \lm) \ext \phi
 		$.
 		
 		\item For $\phi = \gfpfml R \tx \theta \ty$ with $R$ of arity $k$,
 		we consider the fixed-point iteration $(g_\beta)_{\beta \in \On}$ for $\pi$ in $K_1$ and
 		the iteration $(f_\beta)_{\beta \in \On}$ for $h \circ \pi$ in $K_2$.
 		We show by induction that $h \circ g_\beta = f_\beta$ for all ordinals $\beta \in \On$,
 		so $h$ preserves all steps of the fixed-point iteration.
 		\begin{itemize}
 			\item For $\beta = 0$, we have $g_0,f_0 : A^k \to K,\, \ta \mapsto \top$.
 			Then $h \circ g_0 = f_0$, as $h(\top) = \top$.
 			
 			\item For successor ordinals, we can apply the induction hypothesis. By definition,
 			\begin{align*}
 			g_{\beta+1}(\ta) = F_\lm^\theta(g_\beta)(\ta) &= {\lm[R/g_\beta]} \ext {\theta(\ta)},\\
 			f_{\beta+1}(\ta) = F_{h \circ \lm}^\theta(f_\beta)(\ta) &= {(h \circ \lm)[R/f_\beta]} \ext {\theta(\ta)}
 			\overset{(*)}{=}
 			{(h \circ \lm[R/g_\beta])} \ext {\theta(\ta)}.
 			\end{align*}
 			In $(*)$, we use the induction hypothesis $h \circ g_\beta = f_\beta$.
 			Using the (outer) induction hypothesis on $\theta$, we obtain
 			\begin{align*}
 			(h \circ g_{\beta+1})(\ta) = h({\lm[R/g_\beta]} \ext {\theta(\ta)}) =
 			{(h \circ \lm[R/g_\beta])} \ext {\theta(\ta)}) = f_{\beta+1}(\ta).
 			\end{align*}
 			
 			\item For limit ordinals, we exploit that $h$ is fully continuous:
 			\begin{align*}
 			h(g_\lambda(\ta)) &= h(\Inf \{ g_\beta(\ta) \mid \beta < \lambda \}) \\
 			&= \Inf \{ h(g_\beta(\ta)) \mid \beta < \lambda \}
 			= \Inf \{ f_\beta(\ta) \mid \beta < \lambda \} = f_\lambda(\ta).
 			\end{align*}
 		\end{itemize}
 		
 		\noindent
 		This closes the proof for $\gfp$-formulae, as for sufficiently large $\beta$, we have
 		\[
 		h(\lm \ext \phia) = h(g_\beta(\ta)) = f_\beta(\ta) = {(h \circ \lm)} \ext \phia.
 		\]
 		The proof for $\lfp$-formulae is analogous. \qedhere
 	\end{itemize}
\end{proof}

\medskip\noindent{\bf Fully continuous semirings. }
While fully chain-complete semirings suffice to guarantee well-defined semantics,
our main results (the universal property in \cref{universality-of-Sinf} and
the connection to games in Sect.~6) require the technically slightly stronger notion
of fully continuous semirings, in which addition and multiplication
preserve suprema and infima of chains.
This is an adaption of the standard notion of $\omega$-continuity to our setting
and
all natural examples of fully chain-complete semirings we are aware of
are in fact fully continuous.
On a different note, the notion of chain-completeness is based on chains of arbitrary length.
We do not know whether working with ascending and descending $\omega$-chains
would suffice in all cases, but we show in Sect.~\ref{Sect:Sinf} that it
suffices in absorptive, fully continuous semirings.

\smallskip\noindent{\bf Absorptive and chain-positive semirings. }
Although the \emph{existence} of fixed points is guaranteed in fully continuous semirings,
we observe (in \cref{ex:infpathViterbi} below) that one may have valuations of greatest fixed-point formulae in such semirings that are not really informative
and do not provide useful insights why a formula holds.
This can be tied to two separate problems: the \emph{lack of symmetry} between least and greatest fixed-point inductions
in some such semirings, and the fact that such semirings are not necessarily \emph{truth-preserving},
i.e.\ they may evaluate true statements to 0.
To deal with these problems we propose to work with fully continuous
semirings that are
\emph{absorptive}, to provide useful provenance information for greatest fixed points,
and  \emph{chain-positive} to guarantee truth-preservation.

We first address the issue of symmetry between least and greatest fixed points.
In the Boolean setting, these are computed in the complete lattice of subsets
which is inherently symmetric.
For instance, a greatest fixed point of a monotone operator
is the complement of the least fixed point of the dual operator
(which is essential for a negation normal form).
Moreover, conjunction and disjunction are symmetric in the sense
that one increases values, acting as set union in the lattice of subsets,
while the other is decreasing.
In the semiring setting, we compute fixed points with respect to the natural order induced by addition.
This order is always a complete lattice in absorptive semirings (in fact, idempotent semirings suffice)
and it is clear that addition is increasing in the sense that $a+b \ge a$ for all $a,b$.
The issue is with multiplication: The only constraint that relates addition and multiplication is distributivity, but this alone does not suffice to ensure a symmetry similar to the Boolean setting.
We achieve this by requiring that the semiring is absorptive.

\bdefn
A semiring $K$ is \emph{absorptive} if $a+ab=a$ for all $a,b \in K$,
which is equivalent to saying that $1+b=1$, for all $b\in K$.
\edefn

Clearly, every absorptive semiring is \emph{idempotent}: $a+a=a$ for all $a$.
For naturally ordered semirings, absorption indeed provides
symmetry: multiplication becomes decreasing and $1$ becomes the greatest element,
symmetric to addition and the least element $0$. 

\begin{proposition}
\label{propAbsorptionEquivalences}
In a naturally ordered semiring $K$,
the following are equivalent:
\begin{enumerate}
\item $K$ is absorptive,
\item $K$ has the greatest element $\top = 1$, i.e., $a \le 1$ for all $a \in K$,
\item multiplication in $K$ is decreasing, i.e., $a \cdot b \le b$ for all $a,b \in K$.
\end{enumerate}
\end{proposition}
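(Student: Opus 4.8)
The plan is to establish the three equivalences by proving the cycle of implications $(1)\Rightarrow(2)\Rightarrow(3)\Rightarrow(1)$. Throughout I would rely only on two facts about a naturally ordered semiring: that $a\le b$ means precisely $a+c=b$ for some $c\in K$, and that $\le$ is a genuine partial order, hence antisymmetric. I would also use the reformulation of absorption already recorded in the definition, namely that $a+ab=a$ for all $a,b$ is the same as $1+b=1$ for all $b$.

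For $(1)\Rightarrow(2)$: from $1+b=1$ the summand $c=1$ witnesses $b\le 1$, so $1$ is the greatest element and $\top=1$. For $(2)\Rightarrow(3)$: I would first record that multiplication is monotone for the natural order — if $x\le y$, say $x+c=y$, then $xb+cb=yb$, so $xb\le yb$ — and then instantiate $x=a$, $y=1$, which is legitimate by $(2)$, to obtain $ab\le 1\cdot b=b$.

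The remaining implication $(3)\Rightarrow(1)$ is where a small extra step is needed, and I expect it to be the only delicate point. The natural order immediately yields $ab\le a$ (by commutativity, $ab=ba\le a$ using $(3)$), hence $ab+c=a$ for some $c$; but this alone does not give the equality $a+ab=a$. The missing ingredient is idempotence of addition, which I would squeeze out of $(3)$ itself: applying the hypothesis to the product $(1+1)\cdot a$ and expanding by distributivity gives $a+a\le a$, and since $a\le a+a$ always holds, antisymmetry forces $a+a=a$. With idempotence in hand, $a+ab=(ab+c)+ab=(ab+ab)+c=ab+c=a$, which is $(1)$ and closes the cycle.

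In short, the routine directions are $(1)\Rightarrow(2)$ and $(2)\Rightarrow(3)$, which are pure consequences of the definition of the natural order and monotonicity of multiplication; the one place that is not completely mechanical is $(3)\Rightarrow(1)$, since getting from the inequality $ab\le a$ to the absorption equation requires first deriving idempotence of $+$ from the decreasing-multiplication hypothesis.
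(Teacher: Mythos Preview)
Your proof is correct, but the route differs from the paper's in two respects. First, the paper does not use a cycle; it proves $(1)\Rightarrow(2)$, $(1)\Rightarrow(3)$, $(2)\Rightarrow(1)$, and $(3)\Rightarrow(1)$ separately. Second, and more interestingly, your $(3)\Rightarrow(1)$ step is more elaborate than necessary: the paper simply applies the decreasing-multiplication hypothesis directly to the product $(1+b)\cdot a$, obtaining $a+ab = a\cdot(1+b) \le a$, and combines this with $a \le a+ab$ to conclude. Your detour through idempotence (applying $(3)$ to $(1+1)\cdot a$) is a valid special case of the same trick, but you then use idempotence together with a witness $c$ for $ab\le a$ rather than getting the inequality $a+ab\le a$ in one shot. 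Both arguments are sound; the paper's is shorter, while yours has the minor side benefit of making explicit that $(3)$ already forces idempotence of addition.
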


\begin{proof}
	If $K$ is absorptive, then $1 + 1 \cdot a = 1$ and hence $a \le 1$ for all $a \in K$.
	Absorption further implies $ab \le a$ for all $a,b \in K$.
	Conversely, $\top = 1$ entails $1 \le 1+a \le 1$, and multiplication with $b$ gives $a + ab = a$.
	If multiplication is decreasing, then $a \ge a \cdot (1+b) = a + ab$.
	Together with $a \le a + ab$ (by natural order), this implies absorption.
\end{proof}

This symmetry helps, for instance,
to avoid problems of increasing multiplication as in $\N^\infty$.
Fixed-point theory often relies on symmetry and it is thus no surprise
that more symmetry leads to more useful provenance information.
This can be seen in the following example when comparing the
computations of greatest fixed-points in the non-absorptive semiring $\N^\infty$ and the more informative Viterbi semiring.

\ex
\label{ex:infpathViterbi}
The existence of an infinite path from $u$ in a graph $G$ is expressed by the LFP-formula
\begin{center}
$\displaystyle \phi(u) = \gfpfml R x {\E y (E x y \land R y)} u$
\hspace{2cm}
\smash{
\begin{tikzpicture}[baseline, yshift=2.5pt]
\node [dot, label={above:$u$}] (u) {};
\node [dot, label={above:$v$}, right of=u] (v) {};
\path [draw, ->, >=stealth', shorten <=2pt, shorten >=2pt]
(u) edge (v)
(v) edge [in=-30, out=35, looseness=15]
(v);
\end{tikzpicture}
}
\end{center}

\noindent For the Boolean semiring $\B = \{0,1\}$ there is a unique $\B$-interpretation $\pi$ that defines the displayed graph $G$. 
Provenance semantics in $\B$ coincides with standard semantics and we indeed obtain $\pi \ext{\phi(u)} = 1$.
The Viterbi semiring $\Vit$ instead allows us to assign confidence scores to the edges.
If we set $\pi(E u v) = \pi(E v v) = 1$ as in the Boolean interpretation,
we again obtain an overall confidence of $\pi \ext{\phi(u)} = 1$.
However, if we instead lower the score of the self-loop to $\pi(E v v) = 1 -\epsilon$,
we obtain an overall confidence of $\pi \ext{\phi(u)} = 0$
due to the fixed-point iteration $1, 1-\epsilon, (1-\epsilon)^2, \dots$.
So while $\pi$ still defines the model shown above,
the formula evaluates to $0$ which we usually interpret as \emph{false},
illustrating that the Viterbi semiring is \emph{not truth-preserving}.
Since the loop occurs infinitely often in the unique infinite path from $u$, the value $0$ makes sense 
as a confidence score. Thus, although it is not truth-preserving, the Viterbi semiring does provide useful information.

Consider next the semiring of formal power series $\N^\infty \ps X$.
If we choose $\pi(E u v) = x$ and $\pi(E v v) = y$
(and keep the values $0$ or $1$ for the remaining literals),
then $\pi \ext{\phi(u)} = 0$, as result of
the iteration $\top$, $y \cdot \top$, $y^2 \cdot \top$, $y^3 \cdot \top$, $\dots$
with infimum $0$ at node $v$
(here, $\top$ is the power series in which all monomials have coefficient $\infty$).
Thus, $\N^\infty \ps X$ is \emph{not truth-preserving} either.

In the semiring $\N^\infty$, used
to count proofs of formulae in $\FO$ and $\posLFP$,
the consideration of greatest fixed points imposes problems:
Intuitively, the graph only has one infinite path that we would view as a proof of $\phi(u)$.
But setting $\pi(E u v) = \pi(E v v) = 1$ results in $\pi \ext{\phi(u)} = \infty$,
since the iteration for the evaluation of $\phi$ at $v$ is $\infty$, $1 \cdot \infty$, $1 \cdot \infty$, $\dots$
which stagnates immediately.
Although $\N^\infty$ is truth-preserving,
the example hints at another general issue:
Multiplication with non-zero values in $\N^\infty$
always increases values.
The same is true for addition, so fixed-point iterations of $\gfp$-formula
are likely to result in $\infty$ and \emph{do not give meaningful provenance information}, e.g. about the number of proofs.
Since the computation in $\N^\infty \ps X$ yields 0, we further see that
we cannot obtain the result in $\N^\infty$ from the computation in $\N^\infty \ps X$ by polynomial evaluation.
Hence evaluation of formal power series
\emph{does not preserve provenance semantics} in general.
This is a further reason why formal power series are not the right provenance semirings for LFP.
\eex

A further motivation for absorptive semirings is
that they give information about reduced proofs of a formula.
The property $a + ab = a$ implies, for example,
that a proof containing two literals mapped to
$a$ and $b$, thus having the value $ab$, is absorbed by a proof
only using one literal, with provenance value $a$.
To see why this is useful when working with greatest fixed-points,
we consider an example in the Why-semiring $\W[X]$.
This semiring results from polynomials $\N[X]$
by dropping both coefficients and exponents,
which makes it finite and thus truth-preserving, but not absorptive.
This is similar to $\N^\infty$ and although $\W[X]$ provides more information
about greatest fixed-points, the lack of absorptivity also here leads
to undesired provenance information.
Another benefit of absorption is that, unlike formal power series $\Ninf \ps X$,
provenance information is always finitely representable (see Sect.~\ref{Sect:Sinf}).

\ex
\label{ex:infpathWhy}
Recall the formula from the previous example,
now interpreted on a different graph:
\begin{center}
$\displaystyle \phi(u) = \gfpfml R x {\E y (E x y \land R y)} u$
\hspace{2cm}
\begin{tikzpicture}[baseline, yshift=2.5pt]
\node [dot, label={above:$u$}] (u) {};
\node [dot, label={above:$v$}, right of=u] (v) {};
\path [draw, ->, >=stealth', shorten <=2pt, shorten >=2pt]
(u) edge node [below, font=\scriptsize] {$y$} (v)
(u) edge [in=-150, out=145, looseness=15] node [left, font=\scriptsize] {$x$} (u);
\end{tikzpicture}
\end{center}

We consider the $\W[X]$-interpretation $\pi$ with
$\pi(Euu) = x$ and $\pi(Euv)=y$ that defines the above graph (with $X = \{x,y\}$).
Here there is only one infinite path which uses the edge labelled $x$
infinitely often. As $\W[X]$ is obtained by dropping exponents,
it does not allow to count the usage of $x$, so the path simply corresponds to the monomial $x$.

However, the iteration $\top$, $x\top$, $x^2\top = x\top$ at node $u$ leads to
$\pi \ext {\phi(u)} = x\top = x + xy$, which additionally contains the monomial $xy$.
As there is no infinite path using both edges, $xy$ does not
correspond to an evaluation strategy of $\phi(u)$ on the given graph.
The problem here is that $\top \neq 1$ (recall that $\top=1$ is equivalent to absorption).
And indeed, absorption would imply $x + xy = x$ as expected.
Making $\W[X]$ absorptive results in the semiring $\PosBool(X)$
which provides useful provenance information,
but is strictly less informative than generalized absorptive polynomials.
\eex

\medskip
We next address the issue of truth-preservation, as seen in \cref{ex:infpathViterbi}.
Formally, we define it as follows.
As in  \cite{GraedelTan17},
we say that a $K$-interpretation $\pi\colon \Lit_A(\tau)\rightarrow K$
is \emph{model-defining} if for all atoms $R\tup{a}$ exactly one of the two values
$\pi\ext{R\tup{a}}$ and $\pi\ext{\neg R\tup{a}}$ is zero.
A model-defining $K$-interpretation induces a unique structure 
$\mathfrak{A}_\pi$
with universe $A$ and $\tup{a}\in R^\mathfrak{A}$ if, and only if, $\pi(R\tup{a})\neq 0$.
For a truthful provenance analysis for a logic $L$, this
should lift from literals to arbitrary sentences $\phi\in L$.
If this is guaranteed, then $K$ is truth-preserving.

\bdefn
A semiring $K$ is \emph{truth-preserving} for a logic $L$,
if $\mathfrak{A}_\pi \models \phi$ if, and only if, $\pi \ext \phi \neq 0$,
for all model-defining $K$-interpretations $\pi$ and all LFP-sentences $\phi$.
\edefn

We shall define and investigate in the next section the semiring of
generalized absorptive polynomials $\Sinf[X]$ which, contrary to other fully continuous
and absorptive semirings, is truth-preserving due to the following
algebraic property.

\bdefn
A fully chain-complete semiring $K$ is \emph{chain-positive}
if for each non-empty chain $C \subseteq K$ of non-zero elements,
the infimum $\Inf C$ is non-zero as well.
\edefn

Chain-positivitiy is sufficient to guarantee that a positive, fully chain-complete semiring is truth-preserving.
This can be seen by a straight-forward induction on fixed-point iterations.
A more elegant proof makes use of the fundamental property together with the observation
that a positive, fully chain-complete semiring $K$ is chain-positive if, and only if,
the unique function $h : K \to \B$ with $h^{-1}(0) = \{0\}$ is a fully continuous semiring homomorphism
(this is easy to see by case distinction on the values in $\B$).

\begin{proposition}
\label{propChainpositiveTruthpreserving}
Every chain-positive, positive semiring is truth-preserving for $\LFP$.
\end{proposition}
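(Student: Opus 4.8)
The plan is to collapse the semiring down to the Boolean semiring $\B$ and then invoke the Fundamental Property (\cref{propFundamentalProperty}); a direct induction on the fixed-point iterations would also work, but routing through the Fundamental Property is cleaner. Let $K$ be a chain-positive, positive semiring. First I would recall that, being chain-positive, $K$ is in particular fully chain-complete, so it has a greatest element $\top$ (e.g.\ by a Zorn's-Lemma argument on a maximal chain), and $\top \neq 0$ since $1 \le \top$ and $1 \neq 0$. Next I would consider the canonical map $h : K \to \B$ with $h^{-1}(0) = \{0\}$: because $K$ is positive, $h$ is a semiring homomorphism (as noted in \cref{Sect:prelim}), and clearly $h(\top) = 1 = \top$ in $\B$. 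The crux is then a single claim, that $h$ is \emph{fully continuous}; once this is in hand, \cref{propFundamentalProperty} applies (both $K$ and $\B$ are fully chain-complete) and gives $h(\pi \ext \phi) = (h \circ \pi) \ext \phi$ for every $K$-interpretation $\pi$ and every $\phi \in \LFP$.

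To conclude from there, I would take a model-defining $K$-interpretation $\pi$ and note that $h \circ \pi$ is again model-defining, with $\mathfrak{A}_{h \circ \pi} = \mathfrak{A}_\pi$ (for an atom $R\ta$, $(h \circ \pi)(R\ta) = 1$ iff $\pi(R\ta) \neq 0$ iff $\ta \in R^{\mathfrak{A}_\pi}$). Since semiring semantics in $\B$ coincides with classical Boolean semantics --- a routine induction over $\phi$ in negation normal form, using that $+, \cdot, \sum, \prod$ act as $\lor, \land, \exists, \forall$ in $\B$ and that least and greatest fixed points of update operators on functions $A^k \to \B$ are the usual least and greatest fixed points on subsets of $A^k$ --- we get $(h \circ \pi) \ext \phi = 1 \Iff \mathfrak{A}_\pi \models \phi$. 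Combined with $\pi \ext \phi \neq 0 \Iff h(\pi \ext \phi) = 1$ (immediate from the definition of $h$) and the identity $h(\pi \ext \phi) = (h \circ \pi) \ext \phi$, this yields $\pi \ext \phi \neq 0 \Iff \mathfrak{A}_\pi \models \phi$, which is exactly truth-preservation for $\LFP$.

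It remains to prove the claim, and this is the only place where chain-positivity is used. Given a nonempty chain $C \subseteq K$: for suprema, if $C = \{0\}$ then $h(\Sup C) = 0 = \Sup h(C)$, and otherwise $C$ contains some $c_0 \neq 0$, so $c_0 \le \Sup C$ together with positivity forces $\Sup C \neq 0$ and both sides equal $1$; for infima, if $0 \in C$ then $\Inf C = 0 \in h(C)$ and both sides are $0$, and otherwise every element of $C$ is non-zero, whence chain-positivity gives $\Inf C \neq 0$ and $h(\Inf C) = 1 = \Inf h(C)$. As a chain either contains $0$ or consists entirely of non-zero elements, this is exhaustive, so $h$ is fully continuous. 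Really the only creative step is spotting this collapse-to-$\B$ strategy; mechanically, the single delicate case is the infimum case, and it is precisely there that chain-positivity is indispensable --- without it a descending chain of non-zero values could have infimum $0$, $h$ would fail to preserve it, and (as \cref{ex:infpathViterbi} illustrates for the Viterbi semiring and for $\N^\infty \ps X$) a true $\gfp$-formula could end up with provenance value $0$. A secondary point one should not skip is the ``$\B$-semantics $=$ classical semantics'' reduction used above: it is routine, but genuinely needed.
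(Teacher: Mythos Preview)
Your proposal is correct and follows essentially the same route as the paper: collapse to $\B$ via the unique map $h$ with $h^{-1}(0)=\{0\}$, verify that positivity makes $h$ a homomorphism and chain-positivity makes it fully continuous, then apply the Fundamental Property together with the fact that $\B$-semantics is classical semantics. The only differences are expository---you spell out the case analysis for full continuity of $h$ (which the paper relegates to the remark just before the proposition) and you reach for Zorn's Lemma to obtain $\top$, whereas the empty chain already gives $\top=\Inf\emptyset$ by the definition of full chain-completeness.
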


\begin{proof}
  Let $K$ be such a semiring and consider the function $h : K \to \B$ with $h^{-1}(0) = \{0\}$.
  Then $h$ is a fully continuous homomorphism and we can apply the fundamental property:
	Let $\pi$ be a model-defining $K$-interpretation with induced model $\AA_\pi$,
  and let $\phi$ be an LFP-sentence.
	Notice that $h \circ \pi$ is a $\B$-interpretation that induces the same model as $\pi$.
	Provenance semantics in $\B$ coincides with standard semantics,
	hence $\AA_\pi \models \phi \Iff (h \circ \pi) \ext {\phi} = 1$.
	By the fundamental property,
	$h(\pi \ext \phi) = (h \circ \pi) \ext {\phi} = 1$,
	and this is equivalent to $\pi \ext {\phi} \neq 0$ by definition of $h$.
\end{proof}

Chain-positivity is not an indispensible requirement for provenance analysis,
as shown by the Viterbi semiring (which is absorptive and fully continuous). However, we need this property for provenance semirings
which should give insights into proofs or evaluation strategies and thus have to preserve truth.

\section{Generalized Absorptive Polynomials}\label{Sect:Sinf}

We now discuss the semirings $\Sinf[X]$ and $\Sinf[X,\nnX]$ of generalized absorptive polynomials.
They were introduced in \cite{GraedelTan20} and generalize the
semiring of absorptive polynomials $\mathsf{Sorp}(X)$ from \cite{DeutchMilRoyTan14}
by admitting exponents in $\N^\infty$ to guarantee chain-positivity.
We show that these semirings are, in a well-defined sense,  
the most general absorptive, fully continuous semirings
and we argue that $\Sinf[X, \nnX]$ is the right provenance semiring for LFP.

\bdefn Let $X$ be a \emph{finite} set of provenance tokens. We generalize the notion
of a monomial over $X$  to admit exponents from $\Ninf$. Monomials are here 
functions $m:X\rightarrow\Ninf$, written $x_1^{m(x_1)}\cdots x_n^{m(x_n)}$. 
Multiplication adds the exponents, and $x^\infty\cdot x^n =x^\infty$.
We say that $m_2$ \emph{absorbs} $m_1$, denoted $m_2 \succeq m_1$, if $m_2$ has smaller exponents
than $m_1$, i.e., $m_2(x)\leq m_1(x)$ for all $x\in X$.
This is the pointwise partial order given by the reverse order on $\Ninf$.
\edefn

The set of monomials inherits a lattice structure from $\Ninf$
and is, of course, infinite.
However, it has some crucial finiteness properties.

\prop
\label{thm:chainsFinite}
Every antichain of monomials is finite. Further, while there are infinitely descending chains
of monomials, such as $1=x^0\succ x^1\succ x^2\succ \dots$ there are
no infinitely ascending such chains.
\eprop

Indeed, $(\Ninf,\leq)$ is a well-order. The set of monomials $m:X\rightarrow \Ninf$ with the \emph{reverse order}
of the absorption order is isomorphic to $(\Ninf)^k$ with $k=|X|$ and
with the component-wise order inherited from $(\Ninf,\leq)$. This is a well-quasi-order
and therefore has no infinite descending chains and no infinite antichains.
This implies that in the set of monomials over $X$ with the absorption order,
all ascending chains and all antichains are finite.

\bdefn
We define $\Sinf[X]$ as the set of antichains of monomials
with indeterminates from $X$ and exponents in $\Ninf$.
We write such antichains as formal sums of their monomials 
and call them \emph{generalized absorptive polynomials}. 
Addition and multiplication of polynomials proceed as usual,
but keeping only the maximal monomials (w.r.t.\ $\succeq$) in the result (and disregarding coefficients).
\edefn

Since antichains of monomials are finite,
there is no difference between polynomials and power series here
and moreover, $\Sinf[X]$ is countable.
The natural order on $\Sinf[X]$ can be characterized by monomial absorption:
$P \le Q$ if, and only if,
for each $m \in P$ there is $m' \in Q$ with $m' \succeq m$.
With \cref{thm:chainsFinite},
it follows that there are no infinitely ascending chains of polynomials,
and further that the supremum of $S \subseteq \Sinf[X]$
is $\Sup S = \mxls(\bigcup S)$
which is the set of $\succeq$-maximal monomials in $\bigcup S$ (see below for the proof).
Due to the exponent $\infty$ and the finiteness of $X$,
there is a smallest monomial $m_\infty\neq 0$ with $m_\infty(x) = \infty$ for all $x \in X$.
This ensures chain-positivity of $\Sinf[X]$.

In order to provide proofs of the algebraic properties of $\Sinf[X]$,
we begin with simple observations that hold in all absorptive, fully continuous semirings.
In these semirings, powers of an element $a$ always form a descending $\omega$-chain $a \ge a^2 \ge a^3 \ge \dots$
and we denote its infimum by $a^\infty$, which we call the \emph{infinitary power} of $a$.

\lem[Splitting Lemma]
\label{lemSplitting}
Let $K$ be a fully continuous semiring
and let $(a_i)_{i < \omega}$ and $(b_i)_{i < \omega}$ be two descending $\omega$-chains.
Then, $\Inf_{i < \omega} (a_i \circ b_i) = \big(\Inf_{i < \omega} a_i\big) \circ \big(\Inf_{j < \omega} b_j\big)$,
with $\circ \in \{ +, \cdot \}$. Analogous statements hold for suprema.
\elem

\begin{proof}
	We only show the statement for infima, the proof for suprema is analogous.
	We have the following equality, where $(*)$ holds since $K$ is fully continuous:
	\[
	\Inf_{i < \omega} a_i \circ b_i \overset{(1)}{=}
	\Inf_{i < \omega} \Inf_{j < \omega} a_i \circ b_j \overset{(*)}{=}
	\Inf_{i < \omega} (a_i \circ \Inf_{j < \omega} b_j) \overset{(*)}{=}
	\Inf_{i < \omega} a_i \;\circ\; \Inf_{j < \omega} b_j
	\]
	We prove both directions of $(1)$.
	Fix $i,j$ and let $k = \max(i,j)$.
	Then $a_i \circ b_j \ge a_k \circ b_k \ge \Inf_k a_k \circ b_k$ by monotonicity of $\circ$.
	As $i,j$ are arbitrary, this proves $\Inf_i \Inf_j a_i \circ b_j \ge \Inf_k a_k \circ b_k$.
	
	For the other direction, we have $a_i \circ b_i \ge a_i \circ \Inf_j b_j$ for every $i$ by monotonicity of $\circ$. By continuity, $a_i \circ b_i \ge \Inf_j a_i \circ b_j$ for every $i$, and
	thus $\Inf_i a_i \circ b_i \ge \Inf_i \Inf_j a_i \circ b_j$.
\end{proof}

\lem[Infinitary Power]
\label{lemInfpow}
Let $K$ be an absorptive, fully continuous semiring.
Then,
\begin{enum}
	\item \label{lemInfpow-addmult}
	$(a+b)^\infty = a^\infty + b^\infty$
	and
	$a^n \cdot a^\infty = a^\infty$,
	for $a,b \in K$ and $n \in \N^\infty$,
	
	\item \label{lemInfpow-inf}
	$\big( \Inf_{i < \omega} a_i \big)^\infty = \Inf a_i^\infty$
	for any descending $\omega$-chain $(a_i)_{i < \omega}$ in $K$.
\end{enum}
In $\Sinf[X]$, we further have an analogue of property \eqref{lemInfpow-inf} for infima:
\begin{enum}
	\setcounter{enum}{2}
	\item \label{lemInfpow-sup}
	$\big( \Sup S )^\infty = \Sup S^\infty$,
	where we write $S^\infty = \{ P^\infty \mid P \in S \}$,
	for any set $S \subseteq \Sinf[X]$.
\end{enum}
\elem

\begin{proof}
For the first statement in (1), let $a,b \in K$.
We clearly have $(a+b)^n \ge a^n + b^n$ (for all $n < \omega$) and hence $(a+b)^\infty \ge a^\infty + b^\infty$.
For the other direction, fix $n$ and consider $(a+b)^{2n} = \sum_{i=0}^{2n} {\binom{2n}{i}} a^{2n-i} b^i$.
Each summand is absorbed by either $a^n$ (if $i \le n$) or by $b^n$ (if $i \ge n$), hence
$a^n + b^n \ge (a+b)^{2n} \ge (a+b)^\infty$ and the claim follows.
The second statement follows by continuity of multiplication:
$a^n \cdot a^\infty = a^n \cdot \Inf_{k < \omega} a^k = \Inf_{k < \omega} a^{k+n} = a^\infty$.

For (2), we use the splitting lemma (in $(*)$) and the fact that we can swap infima:
\begin{align*}
\Inf_{i < \omega} a_i^\infty =
\Inf_{i < \omega} \Inf_{n < \omega} a_i^n =
\Inf_{n < \omega} \Inf_{i < \omega} a_i^n \overset{(*)}{=}
\Inf_{n < \omega} \Big(\Inf_{i < \omega} a_i\Big)^n =
\Big(\Inf_{i < \omega} a_i\Big)^\infty
\end{align*}

For the last statement, we first note that for $a,b \in K$ with $a \le b$,
we always have $a^\infty \le b^\infty$. That is, the infinitary power is monotone.
This follows directly from the definition, as $a \le b$ implies $a^n \le b^n$
and thus $\Inf_{n < \omega} a^n \le \Inf_{n < \omega} b^n$.

For statement (3) in $\Sinf[X]$, we compare the two sides of the equation.
The direction $(\Sup S)^\infty \ge \Sup S^\infty$ follows from the aforementioned monotonicity.
For the other direction, let $\Sup S = m_1 + \dots + m_k$ for a finite number of monomials $m_1, \dots, m_k$.
By statement (2), $(\Sup S)^\infty = m_1^\infty + \dots + m_k^\infty$.
Fix one monomial $m_i$. As $\Sup S = \mxls(\bigcup S)$, there is a $P \in S$ with $m_i \in P$.
Hence $m_i \le P$ and thus $m_i^\infty \le P^\infty \le \Sup S^\infty$ by monotonicity.
As this holds for each $1 \le i \le k$, we can conclude $m_1^\infty + \dots + m_k^\infty \le \Sup S^\infty$.
\end{proof}

\lem[Countable Chains]
\label{lemCountableChains}
Let $K$, $K'$ be fully chain-complete semirings and $C \subseteq K$ a countable chain.
Then there is a descending $\omega$-chain $(x_i)_{i < \omega}$ such that $\Inf C = \Inf_{i} x_i$.
Moreover, if $f : K \to K'$ is a monotone function, then additionally
$\Inf f(C) = \Inf_{i} f(x_i)$.
Analogue statements hold for suprema.
\elem

\begin{proof}
We only show the statement involving $f$, as it implies the first,
and only consider infinite $C$ (otherwise the statement is trivial).
Fix a bijection $g : \omega \to C$ and recursively define
$x_0 = g(0)$ and $x_{i+1} = \min(g(i+1), x_i)$.
This defines an $\omega$-chain with $x_i \in C$ and thus $\Inf_i f(x_i) \ge \Inf f(C)$.
Conversely, for every $c \in C$ there is an $i$ with $g(i) = c$ and thus $c \ge x_i$.
By monotonicity, $f(c) \ge f(x_i)$ and thus $\Inf f(C) \ge \Inf_i f(x_i)$.
\end{proof}

\begin{proposition}
\label{propSinfSemiring}
$(\Sinf[X],+,\cdot,0,1)$ is absorptive, fully continuous, and chain-positive.
\end{proposition}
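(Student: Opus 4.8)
The plan is to verify each of the three properties—absorptive, fully continuous, chain-positive—separately, relying on the combinatorics of the absorption order on monomials set up in Proposition~\ref{thm:chainsFinite}. First I would confirm the semiring axioms are not really at issue (they are routine: $+$ is union-then-take-maximals, $\cdot$ is pointwise sum of exponents distributed over sums, $0$ is the empty antichain, $1$ is the monomial with all exponents $0$), and that the natural order is exactly the absorption-based order $P \le Q \iff$ every $m \in P$ is absorbed by some $m' \in Q$, as asserted in the text. From there, \textbf{absorption} is the easiest: for any monomial $m$ and polynomial $Q$, the product $mQ$ consists of monomials $mm'$ with $m' \in Q$, and each such $mm'$ has exponents $\ge$ those of $m$ pointwise, i.e.\ $m \succeq mm'$, so $m$ absorbs every monomial of $mQ$; hence $P + PQ = P$ for all $P, Q$, which is the defining identity. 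Equivalently one checks $1 + Q = 1$ directly since the monomial $1$ absorbs everything.

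For \textbf{chain-positivity} I would use the smallest nonzero monomial $m_\infty$ (exponent $\infty$ on every token of the finite set $X$) already highlighted in the text: $m_\infty$ is absorbed by \emph{every} nonzero monomial, hence $m_\infty \le P$ for every $P \neq 0$. So if $C$ is a nonempty chain of nonzero polynomials, then $m_\infty$ is a lower bound of $C$, and therefore $\Inf C \ge m_\infty > 0$. This is immediate once the role of $m_\infty$ is isolated, so finiteness of $X$ is doing the real work here.

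The substantive part is \textbf{full continuity}, and I would organize it around the explicit description of suprema and infima of chains. For suprema: since there are no infinite ascending chains of polynomials (Proposition~\ref{thm:chainsFinite} again), any chain $C$ has a greatest element, so $\Sup C \in C$ and continuity of $+$ and $\cdot$ with respect to suprema of chains is trivial. The only genuine work is infima of descending chains. Here I would first reduce to countable chains using Lemma~\ref{lemCountableChains} (the map $P \mapsto a \circ P$ is monotone, so $\Inf(a \circ C) = \Inf_i (a \circ x_i)$ for a suitable descending $\omega$-chain $x_i$, and likewise $\Inf C = \Inf_i x_i$), which lets me work purely with descending $\omega$-chains. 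Then I would verify two things: (i) every descending $\omega$-chain $(P_i)$ of polynomials actually has an infimum in $\Sinf[X]$, and (ii) $a \circ \Inf_i P_i = \Inf_i (a \circ P_i)$. For (i), I expect to argue monomial-wise: because antichains are finite and there are no infinite ascending chains, one can track, for each "branch" of the chain, a descending chain of monomials, whose pointwise limit (allowing exponent $\infty$) is a monomial; the infimum is then the antichain of maximal such limit monomials. This bookkeeping—matching monomials across successive $P_i$ in a descending chain and showing the candidate infimum is correct—is where I expect the main obstacle to lie, since it requires care that the "limit" antichain is genuinely the greatest lower bound and not merely a lower bound. For (ii), with the monomial-wise description in hand, multiplication by a fixed monomial $m$ shifts every exponent by $m(x)$ and commutes with the pointwise-limit operation on exponents (using $x^\infty x^n = x^\infty$ in the monoid $\Ninf$), and addition by a fixed polynomial interacts correctly because taking maximals commutes suitably with the limit; alternatively, one can cite the Splitting Lemma~\ref{lemSplitting} once (i) is known, to get $\Inf_i(a \circ P_i) = (\Inf_i a)\circ(\Inf_i P_i) = a \circ \Inf_i P_i$ for the constant chain $a_i = a$. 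I would write it the latter way to keep the argument short, so that the only hand-work is establishing (i), the existence and explicit form of infima of descending $\omega$-chains of monomials/polynomials.
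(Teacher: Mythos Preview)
Your treatment of absorption and chain-positivity is correct and matches the paper. The supremum half of full continuity is also fine: ascending chains of polynomials are finite, so $\Sup C \in C$ and both operations trivially commute with $\Sup$.

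The gap is in the infimum half. Your preferred route---citing the Splitting Lemma with the constant chain $a_i = a$ to conclude $\Inf_i(a \circ P_i) = a \circ \Inf_i P_i$---is circular. The Splitting Lemma is stated for a \emph{fully continuous} semiring, and its proof uses precisely the identity $a \circ \Inf_j b_j = \Inf_j (a \circ b_j)$ at the steps marked $(*)$. So invoking it here assumes the very continuity you are trying to establish. Knowing (i) alone (existence of infima) does not license the lemma.

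The paper avoids this by ordering the argument: it first proves continuity of \emph{addition} with respect to infima by a direct monomial argument (if $m \le \Inf(p+C)$ then either $m \le p$ or $m \le c$ for all $c \in C$, hence $m \le p + \Inf C$). Only then does it turn to multiplication: the single-monomial case $\Inf(m \cdot C) \le m \cdot \Inf C$ is handled by exponent arithmetic (writing a monomial $q \le m \cdot c$ as $q = m \cdot q'$ and checking $q' \le \Inf C$), and the polynomial case $p = m_1 + \dots + m_k$ is reduced to this via the \emph{already established} additive continuity---this is where the Splitting Lemma, now applied only to $+$, becomes legitimate. Your ``direct'' alternative gestures toward this decomposition, but by lumping both operations into (ii) and deferring to the Splitting Lemma you lose the crucial sequencing: the additive case must be done first and by hand.

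A smaller point: your plan for (i), tracking descending branches of monomials to build the infimum, is more than you need (it is essentially the content of Proposition~\ref{propCharacterizationInfima}, proved later). The paper gets chain-completeness cheaply by exhibiting $\Sup S = \mxls(\bigcup S)$ for arbitrary $S$, so $(\Sinf[X],\le)$ is a complete lattice and infima come for free.
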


\begin{proof}
Absorption is clear from the definition.
We first prove that the natural order on $\Sinf[X]$ forms a complete lattice, implying chain-completeness.
For $S \subseteq \Sinf[X]$,
\[
  \Sup S = \mxls(\bigcup S)
\]
where $\bigcup S$ are all monomials occurring in some polynomial of $S$
and $\mxls(M)$ denotes the set of maximal monomials (w.r.t.\ $\preceq$) in the set $M$.
For each $P \in S$, we have $P \subseteq \bigcup S$ and hence $P \le \mxls(\bigcup S)$,
so $\mxls(\bigcup S)$ is an upper bound for $S$.
To see that it is the least upper bound, let $Q$ be any upper bound for $S$, so $Q \ge P$ for all $P \in S$.
For each $m \in \mxls(\bigcup S)$ there is a $P \in S$ with $m \in P$ and hence $m \le P \le Q$.
It follows that $\mxls(\bigcup S) \le Q$.

For chain-positivity, consider the monomial $m_\infty$ with $m_\infty(x) = \infty$ for all $x \in X$.
Then $m_\infty$ is the smallest monomial with respect to $\preceq$.
Given a descending $\omega$-chain $(P_i)_{i < \omega}$ in $\Sinf[X]$ with $P_i > 0$ for all $i$,
we know that each $P_i$ must contain some monomial.
These monomials must be at least as large as $m_\infty$.
Hence $P_i \ge m_\infty$ for all $i$ and thus $\Inf_{i < \omega} P_i \ge m_\infty > 0$.

What remains is to show that $\Sinf[X]$ is fully continuous.
To this end, we have to prove that the two semiring operations preserve both suprema and infima of nonempty chains.
In the following, let $C \subseteq \Sinf[X]$ be such a chain
and let $p \in \Sinf[X]$ be a polynomial.
\begin{itemize}
	\item
	We first consider addition.
	Due to idempotency of $\Sinf[X]$, addition corresponds to the supremum
	and we have
	$p + \Sup C = \Sup \{ p, \Sup C \} = \Sup \{ \Sup \{ p, c \} \mid c \in C \} = \Sup (p + C)$.
	
	\item
	For infima, we show $\Inf (p+C) \le p + \Inf C$.
	The other direction follows from monotonicity of addition.
	Let $m \in \Inf (p+C)$ be a monomial.
	Then $m \le \Inf (p+C)$ and thus $m \le p+c$ for all $c \in C$.
	So $m$ is absorbed by a monomial in $p+c$ which originates
	either from $p$ or from $c$.
	If $m \le p$, then also $m \le p + \Inf C$ and we are done.
	Otherwise, we have $m \le c$ for all $c \in C$
	and hence $m \le \Inf C \le p + \Inf C$.
	It follows that $\Inf (p+C) \le p + \Inf C$.
	
	\item
	We now turn to the continuity of multiplication.
	We first show that $p \cdot \Sup C \le \Sup (p \cdot C)$.
	The other direction holds by monotonicity of multiplication
	(which follows from distributivity).
	Ascending chains are finite, so there is a $c \in C$ with $\Sup C = c$.
	Then $p \cdot \Sup C = p \cdot c \le \Sup (p \cdot C)$.
	
	\item
	It remains to show that $\Inf (p \cdot C) \le p \cdot \Inf C$
	(again, the other direction follows from monotonicity).
	We first consider the case where $p$ consists of a single monomial $m$.
	Let $q$ be a monomial of $\Inf (m \cdot C)$.
	Due to absorption, we have $q \le m \cdot c \le m$ (for any $c \in C$).
	Hence $q(x) \ge m(x)$ for all $x \in X$ and we can thus write $q$
	as $q = m \cdot q'$ with $q'(x) = q(x) - m(x)$
	(where we set $\infty - n = \infty$ for all $n \in \N^\infty$).
	We claim that $q' \le \Inf C$.
	To see this, let $c \in C$.
	Then $q \le m \cdot c$ and thus $m \cdot q' \le m \cdot c$.
	By comparing the exponents, we see that $q' \le c$ and the claim follows.
	Hence $q = m \cdot q' \le m \cdot \Inf C$.
	As this argument applies to all monomials of $\Inf (m \cdot C)$,
	we have shown $\Inf (m \cdot C) \le m \cdot \Inf C$.
	For the case where $p$ consists of several monomials, so $p = m_1 + \dots + m_k$,
  we exploit the continuity of addition and apply the Splitting Lemma
  (together with \cref{lemCountableChains}):  
	$\Inf (p \cdot C) = \Inf_{c \in C} (m_1 c + \dots + m_k c) = 
	(\Inf m_1 C) + \dots + (\Inf m_k C) \le
	(m_1 \cdot \Inf C) + \dots + (m_k \cdot \Inf C) =
	p \cdot \Inf C$. \qedhere
\end{itemize}
\end{proof}

The central property of $\Sinf[X]$
is the following universal property which says
that it is the absorptive fully continuous semiring freely generated by $X$ 
for fully continuous homomorphisms.
These homomorphisms enable us to apply the fundamental property.
The main difficulty in the proof of this statement
is the continuity requirement on infima of chains,
for which we make use of K\H onig's lemma.

\begin{theorem}[Universality]
\label{universality-of-Sinf}
Every mapping $h : X \to K$ into an absorptive, fully continuous semiring $K$
uniquely extends to a fully continuous semiring homomorphism $h : \Sinf[X] \to K$.
\end{theorem}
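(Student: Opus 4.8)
The plan is to define the extension of $h$ explicitly on monomials and polynomials, show it is well-defined and a homomorphism, and then verify full continuity, with uniqueness following easily from continuity.

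First I would define $h$ on monomials. For a monomial $m = x_1^{m(x_1)} \cdots x_n^{m(x_n)}$ with finite exponents, set $h(m) = \prod_{x \in X} h(x)^{m(x)}$, using ordinary semiring powers; for infinite exponents $m(x) = \infty$, use the infinitary power $h(x)^\infty$ (which exists in $K$ by absorptivity and full continuity, as noted before Lemma~\ref{lemSplitting}). For a generalized absorptive polynomial $P = m_1 + \dots + m_k$ (a finite antichain), set $h(P) = h(m_1) + \dots + h(m_k)$. I would first check this is compatible with absorption, i.e.\ that if $m' \succeq m$ then $h(m') \ge h(m)$ in $K$: this uses that multiplication is decreasing in $K$ (Proposition~\ref{propAbsorptionEquivalences}) together with monotonicity of the infinitary power (Lemma~\ref{lemInfpow}, proof of part (3)). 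This compatibility guarantees that discarding absorbed monomials in sums and products does not change the value of $h$, so $h$ respects the semiring operations of $\Sinf[X]$: additivity is immediate, and multiplicativity reduces to the monomial identity $h(m \cdot m') = h(m) \cdot h(m')$, where the only subtle case is $x^\infty \cdot x^n = x^\infty$, handled by $h(x)^n \cdot h(x)^\infty = h(x)^\infty$ from Lemma~\ref{lemInfpow}~\eqref{lemInfpow-addmult}. Preservation of $0$ and $1$ is clear.

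Next I would prove full continuity of $h$. For suprema this is comparatively easy: ascending chains in $\Sinf[X]$ are finite (Proposition~\ref{thm:chainsFinite}), so $\Sup C = c$ for some $c \in C$ and $h(\Sup C) = h(c) \le \Sup h(C)$, the reverse inequality holding by monotonicity of $h$. The substantial part is infima. Given a nonempty chain $C \subseteq \Sinf[X]$, I want $h(\Inf C) = \Inf h(C)$. Since $\Sinf[X]$ is countable, Lemma~\ref{lemCountableChains} lets me replace $C$ by a descending $\omega$-chain $(P_i)_{i<\omega}$ with the same infimum, and such that $\Inf h(C) = \Inf_i h(P_i)$; so it suffices to handle descending $\omega$-chains. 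The direction $h(\Inf_i P_i) \le \Inf_i h(P_i)$ is monotonicity. For the other direction I need to understand $\Inf_i P_i$ concretely.

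The hard part will be the structure of infima of descending $\omega$-chains of polynomials, and this is where K\H onig's lemma enters. The point is that a monomial $m$ lies in (or below) $\Inf_i P_i$ exactly when $m$ is absorbed by some monomial of each $P_i$; by K\H onig's lemma (the $P_i$ have uniformly bounded size as antichains are finite, but one must argue there is a ``thread'' of monomials $m_i \in P_i$ with $m_i \succeq m_{i+1}$ eventually, or rather with $m$ absorbed along a single branch) one extracts from the finitely-branching tree of such absorption-witnesses an infinite branch, i.e.\ a descending $\omega$-chain of monomials $(m_i)$ with $m_i \in P_i$ whose absorption-limit (pointwise $\sup$ of exponents in $\Ninf$, using $\sup = \infty$ when unbounded) is a monomial $\hat m \le \Inf_i P_i$; and every monomial of $\Inf_i P_i$ arises this way. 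Then I apply the Splitting Lemma (Lemma~\ref{lemSplitting}) coordinatewise: $h(\hat m) = \prod_x h(x)^{\hat m(x)} = \prod_x \Inf_i h(x)^{m_i(x)}$, and by finitely many applications of the Splitting Lemma for $\cdot$ this equals $\Inf_i \prod_x h(x)^{m_i(x)} = \Inf_i h(m_i) \ge \Inf_i h(P_i)$. Summing over the finitely many monomials $\hat m$ of $\Inf_i P_i$ and using continuity of $+$ (again via the Splitting Lemma) yields $h(\Inf_i P_i) \ge \Inf_i h(P_i)$, as required. Finally, uniqueness: any fully continuous homomorphism agreeing with $h$ on $X$ must send each finite monomial to the corresponding product (homomorphism property) and each infinitary power $x^\infty = \Inf_n x^n$ to $\Inf_n h(x)^n = h(x)^\infty$ (continuity on the descending chain $x \succ x^2 \succ \cdots$), hence agrees with our $h$ on all monomials and, by additivity, on all polynomials.
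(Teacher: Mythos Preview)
Your definition of $h$, the well-definedness and homomorphism checks, uniqueness, and the treatment of suprema are all correct and match the paper. The gap is in the infimum argument.

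You extract, via K\H onig's lemma, descending threads of monomials $(m_i)_{i<\omega}$ with $m_i \in P_i$ and set $\hat m = \Inf_i m_i$; you then claim $\Inf_i h(m_i) \ge \Inf_i h(P_i)$. But $m_i \in P_i$ means $m_i \le P_i$ in the natural order, hence $h(m_i) \le h(P_i)$ and therefore $\Inf_i h(m_i) \le \Inf_i h(P_i)$ --- the inequality points the wrong way. Summing your threads produces a chain $(Q_i)$ with $Q_i \le P_i$ and $\Inf_i Q_i = P_\omega$, which again yields only the trivial direction $h(P_\omega) \le \Inf_i h(P_i)$. The thread characterisation (which is essentially the paper's Proposition~\ref{propCharacterizationInfima}) bounds $\Inf_i P_i$ from \emph{below} by threads, whereas what the hard direction needs is a chain bounding the $P_i$ from \emph{above} while still converging to $P_\omega$.

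The paper supplies exactly this missing idea: starting from the finitely many monomials of $P_\omega$, it builds a \emph{canonical} descending chain $(P^*_j)$ by truncating exponents at level $j$, so that $\Inf_j P^*_j = P_\omega$ and each $P^*_j$ is a fixed finite sum of monomials. The crucial claim, proved by K\H onig's lemma, is that for every $j$ there exists $i$ with $P^*_j \ge P_i$; this gives $\Inf_i h(P_i) \le \Inf_j h(P^*_j)$, and now the Splitting Lemma applied to the fixed-length sums $P^*_j$ yields $\Inf_j h(P^*_j) = h(P_\omega)$. So the K\H onig argument is used not to produce threads inside the $P_i$, but to show that the canonical chain eventually dominates each $P_i$.
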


\begin{proof}
	Due to the additivity and multiplicity requirements for homomorphisms,
	$h$ uniquely extends to monomials. For the exponent $\infty$,
	notice that continuity requires
	$h(x^\infty) = \Inf_{n < \omega} h(x)^n$ for $x \in X$.
	It further follows that $h(m_1 + m_2) = h(m_1) + h(m_2)$,
	hence $h$ is uniquely defined on $\Sinf[X]$.
	Care has to be taken regarding absorption.
	If $m_1 \preceq m_2$,	then $m_1 + m_2 = m_2$.
	Since $h$ preserves the order and $K$ is absorptive,
	we also have $h(m_1 + m_2) = h(m_1) + h(m_2) = h(m_2)$.
	It follows by induction that $h$ is well-defined.
  
	It remains to show that $h$ is fully continuous.
	Ascending chains are always finite, so we only have to consider descending chains.
  By \cref{lemCountableChains}, it further suffices to consider $\omega$-chains.
	The only remaining observation is that
	\[
	\Inf_{i < \omega} h(P_i) = h\big(\Inf_{i < \omega} P_i\big)
	\]
	for any descending $\omega$-chain $(P_i)_{i < \omega}$ in $\Sinf[X]$.
	The homomorphism $h$ preserves addition and is thus monotone,
	which entails the direction ``$\ge$''.
	
	For the other direction, we first consider the case of single monomials.
	Let $(m_i)_{i < \omega}$ be a descending $\omega$-chain of monomials.
	Recall that $X$ is finite, so we can write $m_i = \prod_{x \in X} x^{m_i(x)}$.
	As the $m_i$ form a descending chain,
	the exponents $(m_i(x))_{i < \omega}$ form an ascending chain for each $x \in X$.
	By \cref{lemSplitting} and the definition of $h$,
	\[
	\Inf_{i < \omega} h(m_i) =
	\prod_{x \in X} \Inf_{i < \omega} h(x)^{m_i(x)} \overset{(*)}{=}
	\prod_{x \in X} h(x)^{\Sup_{i} m_i(x)} =
	h(\Inf_{i < \omega} m_i).
	\]
	where $(*)$ can easily be seen by case distinction
	whether $\Sup_{i < \omega} m_i(x)$ is finite or $\infty$.

	For the general case of polynomials, let $P_\omega = \Inf_{i < \omega} P_i$
	be the infimum, which is of the form $P_\omega = m_1 + \dots + m_n$.
	We define a second, canonical $\omega$-chain $(P^*_i)_{i < \omega}$ with the same infimum.
	To this end, we define the canonical monomial chain $(m^*_j)_{j < \omega}$ of a given monomial $m$ as follows
  (see \cref{fig:sorbCanonicalProof} for an example),
	\[
	m^*_j(x) = \min(j, m(x)), \quad \text{ for all } x \in X,
	\]
	which satisfies the following properties needed for the proof:
	
	\begin{enumerate}
		\item
		If $m$, $v$ are two monomials with $m \preceq v$, then $m^*_j \preceq v^*_j$ for all $j < \omega$.
		\label{universalityMonomialOrder}
		
		\item
		If $m = \Inf_{i < \omega} m_i$ for an $\omega$-chain $(m_i)_{i < \omega}$ of monomials,
		then $\A j \, \E i: m^*_j \succeq m_i$.
		\label{universalityMonomialInfimum}
		
		\item
		In particular, $\Inf_{j < \omega} m^*_j = m$.
		\label{universalityMonomialInfimumTrivial}
	\end{enumerate}
	
	The canonical polynomial chain $(P^*_j)_{j < \omega}$ is then defined by
	$P^*_j = (m_1)^*_j + \dots + (m_n)^*_j$ for each $j < \omega$.
	We make the following observation:
	\[
	\textbf{Claim: } \A j \, \E i: P^*_j \ge P_i.
	\]
	We first show that the claim implies the theorem:
	\begin{align*}
	\Inf_{i < \omega} h(P_i) \overset{(1)}{\le} \Inf_{j < \omega} h(P^*_j)
	&= \Inf_{j < \omega} \Big( h((m_1)^*_j) + \dots + h((m_n)^*_j) \Big) \\
	&\overset{(2)}{=} \Inf_{j < \omega} h((m_1)^*_j) + \dots + \Inf_{j < \omega} h((m_n)^*_j) \\
	&\overset{(3)}{=} h\Big(\Inf_{j < \omega} (m_1)^*_j\Big) + \dots + h\Big(\Inf_{j < \omega} (m_n)^*_j \Big) \\
	&\overset{(4)}{=} h(m_1) + \dots + h(m_n) = h(P_\omega),
	\end{align*}
	where (1) follows from the claim,
	(2) holds by \cref{lemSplitting},
	(3) was shown above
	and (4) holds due to property \ref{universalityMonomialInfimumTrivial} above.
	Hence the claim suffices to prove the theorem.
	
	To prove the claim, assume towards a contradiction that there is a $j$
	such that $P^*_j \ngeq P_i$ for all $i < \omega$.
	Let us fix an $i < \omega$ for the moment.
	Because of $P^*_j \ngeq P_i$, there is a monomial $m_i \in P_i$ with $P^*_j \ngeq m_i$.
	Because of $P_{i-1} \ge P_i$, there is further $m_{i-1} \in P_{i-1}$ with $m_{i-1} \succeq m_i$.
	But then also $P^*_j \ngeq m_{i-1}$ (as otherwise $P^*_j \ge m_{i-1} \ge m_i$).
	By repeating this argument,
	we obtain a finite chain $m_0 \succeq m_1 \succeq \dots \succeq m_i$ of monomials
	with the property that $m_k \in P_k$ and $P^*_j \ngeq m_k$ for all $0 \le k \le i$.
	
	This argument applies to all $i < \omega$,
	so we obtain arbitrarily long finite chains with this property.
	By K{\H o}nig's lemma (recall that all polynomials $P_i$ are finite),
	there must be an infinite monomial chain $(m_i)_{i < \omega}$ with $m_i \in P_i$
	and $P^*_j \ngeq m_i$ for all $i < \omega$.
	Let $m_\omega = \Inf_{i < \omega} m_i$.
	Because of $m_i \le P_i$ for all $i$, we have $m_\omega \le P_\omega$,
	so there is a monomial $v \in P_\omega$ with $m_\omega \preceq v$.
	By considering the corresponding canonical monomial chains
	$(v^*_k)_{k < \omega}$ and $((m_\omega)^*_k)_{k < \omega}$
	at $k = j$, we obtain a contradiction:
	We know from the above properties that there is an $i$ with $(m_\omega)^*_j \succeq m_i$
	and further $v^*_j \ge (m_\omega)^*_j$.
	Because of $v^*_j \in P^*_j$, we obtain $P^*_j \ge v^*_j \ge (m_\omega)^*_j \ge m_i$, contradicting our assumption.
	The claim follows, closing the overall proof.
\end{proof}

\begin{figure}
	\tikzset{
		mm/.style={anchor=base}, shorten/.style={shorten >=2pt, shorten <=2pt},
		arr/.style={>=stealth'},
	}
	\newcommand{\poly}[5]{
		\node [mm,align=right] at (-1.2,-1.2*#1) (P#1) {$#2:$};
		\node [mm] at (0,-1.2*#1) (a#1) {$#3$};
		
		\ifx#4\empty
		\else
		\node [mm] at (1.5,-1.2*#1) (b#1) {$#4$};
		\node [mm,black] at ($(a#1.base) !0.5! (b#1.base)$) {$+$};
		\fi
		
		\ifx#5\empty
		\else
		\node [mm] at (3,-1.2*#1) (c#1) {$#5$};
		\node [mm,black] at ($(b#1.base) !0.5! (c#1.base)$) {$+$};
		\fi
	}
	\centering
	\begin{tikzpicture}[font=\small]
		\poly 0 {P_0} {x} {y} {};
		\poly 1 {P_1} {x^\infty} {y} {};
		\poly 2 {P_2} {x^\infty} {y^2z} {xy};
		\poly 3 {P_3} {x^\infty} {y^2z} {};
		\poly 4 {P_4} {x^\infty} {y^3z^2} {xy^2z};
		\begin{scope}[yshift=1cm]
			\poly 6 {P_\omega} {x^\infty} {y^\infty z^2} {};
		\end{scope}
		
    \path (P0.west) to node [sloped, anchor=south, yshift=3pt,font=\scriptsize] {$\ge$} (P1.west);
    \path (P1.west) to node [sloped, anchor=south, yshift=3pt,font=\scriptsize] {$\ge$} (P2.west);
    \path (P2.west) to node [sloped, anchor=south, yshift=3pt,font=\scriptsize] {$\ge$} (P3.west);
    \path (P3.west) to node [sloped, anchor=south, yshift=3pt,font=\scriptsize] {$\ge$} (P4.west);
    
		\node [draw=none] at (0,.4cm) {};
		
		\draw [gray, thick, ->, shorten >=2pt, shorten <=2pt, line cap=round]
		(a0) edge (a1) (a1) edge (a2) (a2) edge (a3) (a3) edge (a4)
		(b0) edge (b1) (b1) edge (b2) (b2) edge (b3) (b3) edge (b4)
		(b1) edge (c2) (b3) edge (c4)
		(a4) edge [dotted,-] (a6)
		(b4) edge [dotted,-] (b6)
		;
		
	\end{tikzpicture}
	\qquad \quad \quad
	\vrule
	\qquad \quad \quad
	\begin{tikzpicture}[font=\small]
		\poly 0 {P^*_0} {1} {} {};
		\poly 1 {P^*_1} {x} {yz} {};
		\poly 2 {P^*_2} {x^2} {y^2z^2} {};
		\poly 3 {P^*_3} {x^3} {y^3z^2} {};
		\poly 4 {P^*_4} {x^4} {y^4z^2} {};
		\begin{scope}[yshift=1cm]
			\poly 6 {\smash{P^*_\omega}\vphantom{P_\omega}} {x^\infty} {y^\infty z^2} {};
		\end{scope}
		
    \path (P0.west) to node [sloped, anchor=south, yshift=3pt,font=\scriptsize] {$\ge$} (P1.west);
    \path (P1.west) to node [sloped, anchor=south, yshift=3pt,font=\scriptsize] {$\ge$} (P2.west);
    \path (P2.west) to node [sloped, anchor=south, yshift=3pt,font=\scriptsize] {$\ge$} (P3.west);
    \path (P3.west) to node [sloped, anchor=south, yshift=3pt,font=\scriptsize] {$\ge$} (P4.west);
    
		\draw [gray, thick, ->, shorten >=2pt, shorten <=2pt, line cap=round]
		(a0) edge (a1) (a1) edge (a2) (a2) edge (a3) (a3) edge (a4)
		(a0) edge (b1)
		(b1) edge (b2) (b2) edge (b3) (b3) edge (b4)
		(a4) edge [dotted,-] (a6)
		(b4) edge [dotted,-] (b6)
		;
		
	\end{tikzpicture}
	\qquad
	
	\caption{An example of a polynomial $\omega$-chain (left) and the corresponding canonical chain (right)
		for the proof of \cref{universality-of-Sinf}. The arrows indicate absorption between
		monomials of consecutive polynomials and induce a directed graph which justifies
		our application of K\H onig's lemma.}
	\label{fig:sorbCanonicalProof}
\end{figure}

The idea to apply K\H onig's lemma to monomial chains
can also be applied to infima of chains in general
and is useful for some of the later proofs.

\begin{proposition}[Characterization of Infima]
	\label{propCharacterizationInfima}
	Let $(P_i)_{i < \omega}$ be a descending $\omega$-chain in $\Sinf[X]$.
	Let further $\MM$ be the set of descending $\omega$-chains $(m_i)_{i < \omega}$ of monomials
	with the property that $m_i \in P_i$ for all $i$. Then,
	\[ \Inf_{i < \omega} P_i = \Sup \big\{ \Inf_{i < \omega} m_i \mid (m_i)_{i < \omega} \in \MM \big\}. \]
\end{proposition}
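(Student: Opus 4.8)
The plan is to prove both inequalities separately, using the structural facts about $\Sinf[X]$ established above together with the K\H{o}nig-lemma argument from the proof of \cref{universality-of-Sinf}. Write $P_\omega = \Inf_{i<\omega} P_i$ and let $R$ denote the right-hand side $\Sup\{\Inf_i m_i \mid (m_i)_i \in \MM\}$. For the direction $R \le P_\omega$: fix any chain $(m_i)_i \in \MM$. Since $m_i \in P_i$ we have $m_i \le P_i$, hence $\Inf_i m_i \le \Inf_i P_i = P_\omega$ by monotonicity of infima. As this holds for every element of $\MM$, taking the supremum gives $R \le P_\omega$. This direction is routine.

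The substantive direction is $P_\omega \le R$. By the characterization $\Sup S = \mxls(\bigcup S)$, it suffices to show that every monomial $m \in P_\omega$ is absorbed by some $\Inf_i m_i$ with $(m_i)_i \in \MM$; then $m \le \Inf_i m_i \le R$, and since $P_\omega$ is the join of its monomials we get $P_\omega \le R$. So fix $m \in P_\omega$. We build the required chain $(m_i)_{i<\omega}$ by a K\H{o}nig's-lemma argument essentially identical to the one in the proof of \cref{universality-of-Sinf}: since $m \le P_\omega \le P_i$ for every $i$, for each $i$ there is a monomial $\mu_i \in P_i$ with $m \le \mu_i$ (i.e.\ $\mu_i \succeq m$); moreover, since $P_{i} \ge P_{i+1}$, every such monomial of $P_{i+1}$ is absorbed by some monomial of $P_i$. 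Consider the directed graph whose vertices are, for each $i$, the monomials $\mu \in P_i$ with $\mu \succeq m$, with an edge from $\mu' \in P_{i+1}$ to its absorbing counterpart $\mu \in P_i$. Each $P_i$ is finite, each level is nonempty (it contains a witness $\succeq m$), and every vertex at level $i+1$ has an edge into level $i$; by K\H{o}nig's lemma there is an infinite branch, i.e.\ a descending chain $(m_i)_{i<\omega}$ with $m_i \in P_i$ and $m_i \succeq m$ for all $i$. Thus $(m_i)_i \in \MM$, and $m \le m_i$ for all $i$ forces $m \le \Inf_i m_i \le R$, as desired.

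The main obstacle — and the only real content — is the construction of the branch: one must be careful that the tree built for K\H{o}nig's lemma is finitely branching (which uses finiteness of each $P_i$) and that every node genuinely has a predecessor at the previous level, which requires observing that the witnesses $\succeq m$ are closed under the absorption-edge from level $i+1$ to level $i$ (if $m_{i+1} \succeq m$ and $m_i \succeq m_{i+1}$ then $m_i \succeq m$ by transitivity). Since this is exactly the bookkeeping already carried out for the Claim inside the proof of \cref{universality-of-Sinf}, no new ideas are needed; the statement is in effect a clean repackaging of that argument, isolating the fact that infima in $\Sinf[X]$ are computed by pushing the infimum down to monomial chains. One should also note that $\MM \neq \emptyset$ whenever all $P_i > 0$ (each level is nonempty), so the supremum on the right is well-defined; if some $P_i = 0$ both sides are $0$ and the statement is trivial.
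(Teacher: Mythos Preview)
Your proof is correct and follows essentially the same approach as the paper: both directions are handled identically, with the easy direction by monotonicity and the substantive direction by fixing a monomial $m \in P_\omega$, building the finitely-branching tree of absorbing monomials across the levels $P_i$, and extracting an infinite branch via K\H{o}nig's lemma. Your explicit remark that transitivity of $\succeq$ keeps the tree well-defined is a nice clarification that the paper leaves implicit.
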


\begin{proof}
	By definition, $m_i \le P_i$ and thus $\Inf_i m_i \le \Inf_i P_i$ for every chain $(m_i)_{i < \omega} \in \MM$.
	Hence direction ``$\ge$'' of the proposition follows.
	
	For the other direction, consider the infimum $P_\omega = \Inf_i P_i$.
	We claim that for every monomial $m_\omega \in P_\omega$, there is a monomial chain $(m_i)_{i < \omega} \in \MM$ with $\Inf_i m_i \succeq m_\omega$.
	This is sufficient to close the proof.
	
	To prove the claim, we use a similar argument as in the proof of the universal property of $\Sinf[X]$.
	Fix a monomial $m_\omega \in P_\omega$ and, for the moment, an $i < \omega$.
	We have $P_\omega \le P_i$, so there is a monomial $m_i \in P_i$ with $m_\omega \preceq m_i$.
	As $P_i \le P_{i-1}$, there must further be a monomial $m_{i-1} \in P_{i-1}$ with $m_i \preceq m_{i-1}$.
	Iterating this argument yields a sequence $m_i \preceq m_{i-1} \preceq m_{i-2} \preceq \dots \preceq m_0$ of monomials with $m_\omega \preceq m_j$ and $m_j \in P_j$ (for all $j \le i$).
	This construction is possible for each $i$, so by K\H onig's lemma (recall that all polynomials are finite),
	there must be an infinite monomial chain $(m_i)_{i < \omega}$ with $m_i \in P_i$ and $m_i \succeq m_\omega$ for each $i$.
	Hence $(m_i)_{i < \omega} \in \MM$ and the infimum is $\Inf_{i < \omega} m_i \succeq m_\omega$ as claimed.
\end{proof}

The fact that the universal property guarantees fully continuous homomorphisms should not be taken lightly:
We have seen in Example \ref{ex:infpathViterbi} that this is not the case for formal power series.
There, polynomial evaluation induces homomorphisms that are, in general, not fully continuous
and hence do not preserve greatest fixed points.
The following example shows how we can specialize provenance values in $\Sinf[X]$ to application semirings.

\ex
\label{ex:sinfUniversality}
We recall the setting from Example~\ref{ex:infpathViterbi}
and first consider the model-defining $\Sinf[X]$-interpretation tracking
the two edges labelled $x$ and $y$, as indicated in the left graph.
\begin{center}
$\displaystyle \phi(u) = \gfpfml R x {\E y (E x y \land R y)} u$
\hspace{1cm}
\begin{tikzpicture}[baseline, yshift=2.5pt]
\node [dot, label={below:$u$}] (u) {};
\node [dot, label={below:$v$}, right of=u] (v) {};
\path [draw, ->, >=stealth', shorten <=2pt, shorten >=2pt]
(u) edge node [above] {$x$} (v)
(v) edge [in=-30, out=35, looseness=15] node [right] {$y$} (v);
\end{tikzpicture}
\hspace{1cm}
\begin{tikzpicture}[baseline, yshift=2.5pt]
\node [dot, label={below:$u$}] (u) {};
\node [dot, label={below:$v$}, right of=u] (v) {};
\path [draw, ->, >=stealth', shorten <=2pt, shorten >=2pt]
(u) edge node [above] {$x$} (v)
(v) edge [in=-30, out=35, looseness=15] node [right] {$y$} (v)
(u) edge [in=-150, out=145, looseness=15] node [left] {$z$} (u);
\end{tikzpicture}
\end{center}
We obtain $\pi \ext {\phi(u)} = x y^\infty$
corresponding to the infinite path $uvvv\dots$.
The confidence values from Example~\ref{ex:infpathViterbi}
can be obtained by polynomial evaluation:
For $h(x) = h(y) = 1$, we get
$(h \circ \pi) \ext {\phi(u)} = 1 \cdot 1^\infty = 1$
and for $h'(x) = 1$, $h'(y) = 1-\epsilon$ we get
$(h' \circ \pi) \ext {\phi(u)} = 1 \cdot (1-\epsilon)^\infty = 0$.

Let us next consider the graph on the right by setting $\pi(E u u) = z$.
There are now infinitely many infinite paths from $u$ to $v$.
However, we obtain only finitely many monomials due to absorption:
$\pi \ext {\phi(u)} = x y^\infty + z^\infty$.
These correspond to the \emph{simplest} infinite paths
since monomials such as $z^2 x y^\infty$ (corresponding to the path $uuuvvv\dots$)
are absorbed by $x y^\infty$.
\eex

One consequence of the universal property is the existence
of a \emph{most general} $\Sinf[X]$-\emph{interpretation} $\mgpi$
by introducing variables $X = \{ x_L \mid L \in \Atoms_A(\tau) \cup \NegAtoms_A(\tau) \}$
for all literals and setting $\mgpi(L) = x_L$.
Any other $K$-interpretation $\pi$ (where $K$ is fully continuous and absorptive)
results from $\mgpi$ by the evaluation $x_L \mapsto \pi(L)$
which lifts to a fully continuous homomorphism $h$.
After computing $\mgpi \ext \phi$ once,
the computation for any $\pi$
is then simply a matter of applying polynomial evaluation,
since $\pi \ext \phi = h(\mgpi \ext \phi)$.

The most general $\Sinf[X]$-interpretation can also be used
to prove that the update operators $F_\pi^\phi$ induced by
LFP-formulae in $\Sinf[X]$ are fully continuous.
Hence Kleene's Fixed-Point Theorem applies and guarantees
that the fixed-point iterations for $\lfp(F_\pi^\phi)$
and $\gfp(F_\pi^\phi)$ have closure ordinal at most $\omega$.
Using the universal property,
the statement on the closure ordinal generalizes
to all absorptive, fully continuous semirings --
even to semirings in which update operators
are not fully continuous in general,
such as the semiring $\Luk$ in \cref{ex:noncontinuousLukasiewicz}.
To see how the example is related,
consider the formula $\phi(R,x) = \gfpfml P y {R x \land P y} x$
over a singleton universe $A$. We can then identify functions
$A \to K$ with elements of $K$, so that $F_\pi^\phi : K \to K$.
Setting $G = F_\pi^\phi$ in \cref{ex:noncontinuousLukasiewicz}
shows that the update operator $F_\pi^\phi$ is not fully continuous.
In $\Sinf[X]$, on the other hand, infinitely ascending chains such as
$(1 - \frac{1}{1+n})_{n < \omega}$ used in the example cannot be defined,
and $F_\pi^\phi$ is fully continuous.

\begin{proposition}
\label{propSinfKleene}
Given a $\Sinf[X]$-interpretation $\pi$ and an LFP-formula $\phi(R,\tup x)$,
the associated update operator $F_\pi^\phi$ is a fully continuous function.
\end{proposition}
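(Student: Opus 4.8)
The plan is to reduce the statement to the fact that polynomial evaluation in $\Sinf[X]$ is fully continuous, using the universal property (\cref{universality-of-Sinf}) together with the fundamental property (\cref{propFundamentalProperty}) to express $F_\pi^\phi$ through one fixed generalized absorptive polynomial.

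First, observe that $F_\pi^\phi$ acts on the semiring $G$ of functions $g : A^k \to \Sinf[X]$ (with $k$ the arity of $R$), whose order and suprema and infima of chains are all pointwise; hence it suffices to show that $g \mapsto F_\pi^\phi(g)(\ta)$ preserves suprema and infima of nonempty chains for each fixed $\ta \in A^k$. To this end, let $\bar\pi$ be the interpretation of $\Lit_A(\tau) \cup \Atoms_A(\{R\})$ into $\Sinf[Y]$, where $Y$ is the finite set consisting of one token $x_L$ for each $\tau$-literal $L$ and one fresh token $z_{\tup c}$ for each atom $R\tup c$, with $\bar\pi(L) = x_L$ and $\bar\pi(R\tup c) = z_{\tup c}$. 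For $g \in G$, the token assignment $x_L \mapsto \pi(L)$, $z_{\tup c} \mapsto g(\tup c)$ extends, by universality, to a fully continuous semiring homomorphism $h_g : \Sinf[Y] \to \Sinf[X]$ with $h_g(\top) = h_g(1) = 1 = \top$ (using that $\Sinf[X]$ is absorptive), and $\pi[R \mapsto g] = h_g \circ \bar\pi$. By the fundamental property we then get $F_\pi^\phi(g)(\ta) = \pi[R\mapsto g]\ext{\phi(\ta)} = h_g\big(\bar\pi\ext{\phi(\ta)}\big)$.

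The crucial point is that $P_{\ta} := \bar\pi\ext{\phi(\ta)}$ is a \emph{fixed} element of $\Sinf[Y]$, independent of $g$: it is well-defined since $\Sinf$-semirings are fully chain-complete (\cref{propSemanticsWelldefined}), and as an element of $\Sinf[Y]$ it is, by definition, a \emph{finite} sum of monomials over the finite token set $Y$ (\cref{thm:chainsFinite}). In particular, greatest-fixed-point subformulae of $\phi$ are harmless: although the relevant gfp-iterations may only stabilise at a transfinite stage, their limits are nonetheless finite antichains of monomials. Applying $h_g$ to $P_\ta$ is thus plain polynomial evaluation: $F_\pi^\phi(g)(\ta) = e_{\ta}\big((g(\tup c))_{\tup c \in A^k}\big)$, where $e_{\ta} : (\Sinf[X])^{A^k} \to \Sinf[X]$ substitutes the constants $\pi(L)$ for the $x_L$-tokens and the arguments $g(\tup c)$ for the $z_{\tup c}$-tokens, with infinite exponents realised as infinitary powers. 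Concretely, $e_{\ta}$ is a finite composition of the operations $+$, $\cdot$, $(\cdot)^\infty$, together with projections and constant maps on $\Sinf[X]$.

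Finally, each of these building blocks is fully continuous: $+$ and $\cdot$ by \cref{propSinfSemiring}, the infinitary power $(\cdot)^\infty$ by the chain properties in \cref{lemInfpow}, and projections and constant maps trivially (chains are assumed nonempty). The composition requires joint continuity along a chain of argument tuples, which follows from the Splitting Lemma (\cref{lemSplitting}); and since $\Sinf[X]$ and its finite powers are countable, every chain occurring here is countable, so \cref{lemCountableChains} lets us reduce to $\omega$-chains wherever \cref{lemSplitting} and \cref{lemInfpow} are invoked. As a finite composition of fully continuous functions is fully continuous, $e_{\ta}$ — and therefore $g \mapsto F_\pi^\phi(g)(\ta)$, and hence $F_\pi^\phi$ — is fully continuous. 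I expect the main obstacle to be precisely the bookkeeping around $P_{\ta}$: arguing that it is a genuine finite polynomial, so that applying $h_g$ degenerates to this finitary composition rather than a $g$-dependent, possibly transfinite fixed-point computation, and then checking the $\omega$-chain reductions and the joint-continuity uses of the Splitting Lemma layer by layer in the composition.
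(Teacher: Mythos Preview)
Your proposal is correct and follows essentially the same approach as the paper: both factor $F_\pi^\phi(g)(\ta)$ through a \emph{fixed} generalized absorptive polynomial obtained via the universal property and the fundamental property, and then argue that evaluating this finite polynomial is fully continuous using the Splitting Lemma and the properties of the infinitary power in \cref{lemInfpow}. The paper chooses to prove the slightly more general statement that $\pi \mapsto \pi\ext\phi$ is fully continuous in \emph{all} of $\pi$ (not just the $R$-part) and spells out the monomial case $x_L^n$ explicitly, but this is a matter of packaging rather than a different idea.
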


\begin{proof}
 In order to prove that $F_\pi^\phi$ is fully continuous,
 we show the more general statement that for any LFP-sentence $\phi$,
 the mapping $\pi \mapsto \pi \ext \phi$ is fully continuous.
 That is, for a chain $C$ of $\Sinf[X]$-interpretations,
 we have $(\Sup C) \ext \phi = \Sup \{ \pi \ext \phi \mid \pi \in C \}$
 (and the same for infima).
 The continuity of $F_\pi^\phi$ follows by unraveling the definition
 of the update operator.

 As the set of $\Sinf[X]$-interpretations is countable,
 it suffices to consider $\omega$-chains $(\pi_i)_{i < \omega}$
 due to \cref{lemCountableChains}.
 To simplify notation, let $\pi_\omega = \Sup_{i < \omega} \pi_i$.
 Now let $X' = \{ x_L \mid L \in \Atoms_A(\tau) \cup \NegAtoms_A(\tau) \}$ and
 consider the most general $\Sinf[X']$-interpretation
 $\mgpi$ with $\mgpi(L) = x_L$.
 By the universal property, the mapping $x_L \mapsto \pi(L)$
 extends to a fully continuous homomorphism $h_\pi: \Sinf[X'] \to \Sinf[X]$
 with $\pi = h_\pi \circ \mgpi$,
 for any $\Sinf[X]$-interpretation $\pi$.
 Using these homomorphisms, we can reformulate
 the continuity statement we want to prove:
 \[
 h_{\pi_\omega}(\mgpi \ext \phi) =
 \Sup_{i < \omega} h_{\pi_i}(\mgpi \ext \phi).
 \]
 
 We claim that $h_{\pi_\omega}(m) = \Sup_{i < \omega} h_{\pi_i}(m)$
 for all monomials $m$ over $X'$.
 Since $\mgpi \ext \phi$ consists of finitely many monomials,
 this implies the statement above by applying the Splitting Lemma \ref{lemSplitting}.
 Monomials in $\Sinf[X']$ are products consisting of factors
 of the form $x_L^n$ for $x_L \in X'$ and $n \in \N \cup \{\infty\}$.
 By again resorting to \cref{lemSplitting},
 it suffices to show the claim for monomials of the form $x_L^n$.
 For such monomials,
 \[
 h_{\pi_\omega}(x_L^n) =
 h_{\pi_\omega}(x_L)^n =
 \pi_\omega(L)^n =
 \Big(\Sup_{i < \omega} \pi_i(L)\Big)^n \overset{(*)}{=}
 \Sup_{i < \omega} (\pi_i(L)^n) =
 \Sup_{i < \omega} h_{\pi_i}(x_L^n),
 \]
 where $(*)$ can be seen by case distinction.
 For $n < \infty$, it follows (once again) from \cref{lemSplitting}.
 For $n = \infty$, we can apply \cref{lemInfpow} \eqref{lemInfpow-sup}.
 This proves the statement about suprema.
 For infima, i.e.\ $(\Inf C) \ext \phi = \Inf \{ \pi \ext \phi \mid \pi \in C \}$,
 the argument is analogous, except that $(*)$ now requires \cref{lemInfpow} \eqref{lemInfpow-inf}.
\end{proof}

By Kleene's Fixed-Point Theorem (and its dualized version for greatest fixed points),
the fixed-point iterations for $\lfp(F_\pi^\phi)$ and $\gfp(F_\pi^\phi)$
both terminate at step $\omega$ (or earlier).
We can generalize this observation to other semirings by the fundamental property.
To be precise, we need the slightly stronger statement that fully continuous homomorphisms
preserve not only the fixed-points, but also all steps of the fixed-point iterations.
The proof of fundamental property (see \cref{propFundamentalProperty}) in fact establishes this stronger statement.

\begin{corollary}
\label{propClosureOrdinal}
Given a $K$-interpretation $\pi$ into an absorptive, fully continuous semiring,
all fixed-point iterations for $\lfp(F_\pi^\phi)$ and $\gfp(F_\pi^\phi)$ have closure ordinal at most $\omega$.
\end{corollary}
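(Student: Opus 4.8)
The plan is to transfer the $\omega$ closure bound from the semiring $\Sinf[X]$, where it is a consequence of Kleene's Fixed-Point Theorem, to an arbitrary absorptive, fully continuous semiring $K$ by means of the universal property together with the strengthened fundamental property. It is important to note that one cannot argue directly inside $K$: as \cref{ex:noncontinuousLukasiewicz} shows, the update operator $F_\pi^\phi$ need not be fully continuous in $K$, so Kleene's theorem does not apply there, and the detour through $\Sinf[X]$ is genuinely needed.

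First I would set up the transfer. Given the $K$-interpretation $\pi \colon \Lit_A(\tau) \to K$, introduce a token $x_L$ for each literal $L \in \Atoms_A(\tau) \cup \NegAtoms_A(\tau)$; since $A$ and $\tau$ are finite, the resulting set $X$ is finite, so $\Sinf[X]$ is defined. Let $\mgpi$ be the most general $\Sinf[X]$-interpretation with $\mgpi(L) = x_L$. By \cref{universality-of-Sinf}, the assignment $x_L \mapsto \pi(L)$ extends to a fully continuous semiring homomorphism $h \colon \Sinf[X] \to K$ with $h \circ \mgpi = \pi$. Both semirings are absorptive, so $\top = 1$ in each and $h(\top) = h(1) = 1 = \top$; hence $h$ satisfies the hypotheses of \cref{propFundamentalProperty}.

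Now fix an LFP-formula $\phi(R, \tup x)$ and consider the least fixed-point iteration (the greatest fixed-point case is completely analogous). In $\Sinf[X]$, \cref{propSinfKleene} guarantees that $F_{\mgpi}^\phi$ is fully continuous, so by Kleene's Fixed-Point Theorem its iteration $(g_\beta)_{\beta \in \On}$ stabilizes already at stage $\omega$, i.e.\ $g_\omega = g_{\omega+1} = \lfp(F_{\mgpi}^\phi)$. The proof of \cref{propFundamentalProperty} establishes more than the equality of fixed points: it shows that the fully continuous homomorphism $h$ (acting pointwise on functions $A^k \to \Sinf[X]$) commutes with every stage of the iteration, so the iteration $(f_\beta)_{\beta \in \On}$ for $F_\pi^\phi = F_{h \circ \mgpi}^\phi$ in $K$ satisfies $f_\beta = h \circ g_\beta$ for all ordinals $\beta$. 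Evaluating this at $\beta = \omega$ and $\beta = \omega+1$ and using $g_\omega = g_{\omega+1}$ gives $f_\omega = h \circ g_\omega = h \circ g_{\omega+1} = f_{\omega+1}$, so $f_\omega$ is a fixed point of $F_\pi^\phi$ and the closure ordinal of the iteration in $K$ is at most $\omega$, as claimed.

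The point that requires the most care — and the main obstacle — is ensuring that the ``stagewise'' form of the fundamental property really covers fixed points nested inside $\phi$: each value $\mgpi[R \mapsto g_\beta] \ext{\phi(\tup a)}$ may in turn be computed by further fixed-point iterations. This is exactly where \cref{propSinfKleene} enters in full strength, since it shows that $\pi \mapsto \pi \ext{\psi}$ is fully continuous in $\Sinf[X]$ for \emph{every} LFP-sentence $\psi$; consequently all inner iterations also close at $\omega$, and the structural induction underlying \cref{propFundamentalProperty} carries the stagewise correspondence through every fixed-point binder. Once this is in place, the argument above closes the proof.
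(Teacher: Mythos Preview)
Your proposal is correct and follows essentially the same route as the paper: pass to the most general interpretation $\mgpi$ in $\Sinf[X]$, use \cref{propSinfKleene} and Kleene's theorem to get closure ordinal at most $\omega$ there, and then transport the iteration stagewise to $K$ via the fully continuous homomorphism $h$ furnished by the universal property, exactly as the paper indicates when it remarks that the proof of \cref{propFundamentalProperty} ``in fact establishes this stronger statement'' of stagewise preservation. Your write-up is more explicit about the details (e.g.\ $h(\top)=\top$ via absorption, and the handling of nested fixed points), but the argument is the same.
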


\begin{proof}
The statement follows from \cref{propSinfKleene} by considering the most general interpretation
$\mgpi$ defined above and observing that the fully continuous homomorphism $h : \Sinf[X'] \to K$
induced by the mapping $x_L \mapsto \pi(L)$ preserves all steps of the fixed-point iteration.
\end{proof}

What we still have to provide for an adequate provenance analysis is a proper treatment of negation:
If we track a literal and its negation by different variables $x$ and $y$, respectively,
we may obtain inconsistent monomials such as $xy$.
As in other semirings of polynomials and power series we can also here 
take pairs of positive and negative indeterminates, with
a correspondence $X\iff \nnX$, and build
the quotient with respect to the congruence generated by 
the equation $x\cdot\nnx =0$.
We thus obtain a new semiring $\Sinf[X,\nnX]$ which, as a quotient,
retains the properties of being absorptive, fully continuous and chain-positive.
Of course, $\Sinf[X, \nnX]$ is no longer positive, as $x$ and $\nnx$ are divisors of $0$.
%
Most importantly, $\Sinf[X, \nnX]$ inherits the universal property:
Given a mapping $h : X \cup \nnX \to K$, we obtain $h : \Sinf[X \cup \nnX] \to K$ by \cref{universality-of-Sinf};
if $h$ respects dual-indeterminates, so $h(x) \cdot h(\nnx) = 0$ for all $x \in X$,
then $h$ factors through the quotient and induces $h : \Sinf[X,\nnX] \to K$.
Together with the fundamental property, $\Sinf[X,\nnX]$ is thus
the most general appropriate provenance semiring for LFP that can represent negation,
hence providing a natural framework for a provenance analysis for
LFP and other fixed point calculi.

\begin{corollary}[Universality]
Every mapping $h : X \cup \nnX \to K$ into an absorptive, fully continuous semiring $K$
that satisfies $h(x) \cdot h(\nnx) = 0$ for all $x \in X$
uniquely extends to a fully continuous semiring homomorphism $h : \Sinf[X,\nnX] \to K$.
\end{corollary}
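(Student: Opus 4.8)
The plan is to derive this corollary from \cref{universality-of-Sinf} by exploiting that $\Sinf[X,\nnX]$ is a quotient of $\Sinf[X\cup\nnX]$. Recall that $\Sinf[X,\nnX]$ arises from $\Sinf[X\cup\nnX]$ through the congruence generated by the equations $x\cdot\nnx=0$ for $x\in X$; as in the case of $\N[X,\nnX]$, its elements correspond to the \emph{dual-indeterminate} polynomials, i.e.\ the antichains of monomials none of which contains a complementary pair $x,\nnx$, and the quotient homomorphism $q:\Sinf[X\cup\nnX]\to\Sinf[X,\nnX]$ simply deletes the inconsistent monomials from a polynomial. First I would invoke \cref{universality-of-Sinf} for the map $h:X\cup\nnX\to K$ to obtain the unique fully continuous homomorphism $\tilde h:\Sinf[X\cup\nnX]\to K$ extending it. Since $\tilde h(x\cdot\nnx)=h(x)\cdot h(\nnx)=0=\tilde h(0)$ for all $x$, the kernel congruence of $\tilde h$ contains the generating pairs of the defining congruence, hence contains it; therefore $\tilde h$ factors uniquely as $\tilde h=\bar h\circ q$ for a semiring homomorphism $\bar h:\Sinf[X,\nnX]\to K$, and $\bar h$ restricts to $h$ on $X\cup\nnX$ because $q$ fixes the generators. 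Uniqueness of $\bar h$ among fully continuous homomorphisms extending $h$ is then immediate: any competitor $\bar h'$ yields a fully continuous homomorphism $\bar h'\circ q$ extending $h$, so $\bar h'\circ q=\tilde h=\bar h\circ q$ by the uniqueness part of \cref{universality-of-Sinf}, and hence $\bar h'=\bar h$ since $q$ is surjective.

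The only part requiring genuine work is showing that $\bar h$ is \emph{fully continuous}, i.e.\ preserves suprema and infima of chains. For this I would use the map $s:\Sinf[X,\nnX]\to\Sinf[X\cup\nnX]$ that regards a dual-indeterminate polynomial as an ordinary polynomial. It is a section of $q$ (that is, $q\circ s=\mathrm{id}$); it is additive, because the sum of two antichains of consistent monomials again consists only of consistent monomials, so that no monomial gets deleted by $q$; and it is injective. Since both semirings are idempotent — so that $P\le Q$ holds precisely when $P+Q=Q$ — it follows that $s$ is an order embedding. Now, for any chain $C\subseteq\Sinf[X,\nnX]$ the set $s(C)$ is a chain in $\Sinf[X\cup\nnX]$, and I claim $q(\Inf s(C))=\Inf C$: monotonicity of $q$ gives $q(\Inf s(C))\le\Inf C$, while $s(\Inf C)$ is a lower bound of $s(C)$ since $s$ is monotone, so $\Inf C=q(s(\Inf C))\le q(\Inf s(C))$; the analogous reasoning gives $q(\Sup s(C))=\Sup C$. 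Using $\tilde h=\bar h\circ q$, the full continuity of $\tilde h$ from \cref{universality-of-Sinf}, and $q\circ s=\mathrm{id}$, we then obtain
\[
\bar h(\Inf C)=\bar h\big(q(\Inf s(C))\big)=\tilde h\big(\Inf s(C)\big)=\Inf\, \tilde h(s(C))=\Inf\, \bar h(C),
\]
and symmetrically for suprema, so $\bar h$ is fully continuous. (Incidentally, the same section argument shows that $q$ itself is fully continuous, which is all that the uniqueness argument of the first paragraph really needs.)

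I expect the one point that needs the most care — modest though it is — to be the verification that $s$ is simultaneously a section of $q$ and an order embedding, which rests on the concrete description of $\Sinf[X,\nnX]$ as the semiring of dual-indeterminate antichains together with the ``delete inconsistent monomials'' quotient map. Everything else reduces mechanically: the existence and homomorphism property of $\bar h$ from the universal property of quotients, and the preservation of suprema and infima of chains from the elementary fact that a monotone retraction transports them along any section.
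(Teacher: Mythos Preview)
Your proposal is correct and follows the same approach the paper sketches: apply \cref{universality-of-Sinf} to $\Sinf[X\cup\nnX]$ and factor through the quotient. The paper leaves the full continuity of the induced map implicit, whereas your section argument via $s$ makes this explicit and rigorous; this is a welcome elaboration rather than a different route.
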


Instead of model-defining interpretations,
we consider \emph{model-compatible} interpretations $\pi$.
That is, for each atom $R \tup a$ we either have
$\pi(R \tup a) = x$ and $\pi(\neg R \tup a) = \nnx$,
or $\{ \pi(R \tup a), \pi(\neg R \tup a) \} = \{ 0, 1 \}$.
Additionally, $\pi$ must not use the same indeterminate for two different atoms.
We say that a model $\AA$ is \emph{compatible} with $\pi$ if $\AA \models L$
for all literals $L$ with $\pi(L) = 1$
and denote the set of compatible models by $\Mod_\pi$.
Model-compatible interpretations can be used to reason about several models at once.
Mapping certain literals to indeterminate pairs $x$ and $\nnx$
leaves open the truth of these literals,
but still encodes the semantics of opposing literals:

\begin{proposition}
\label{propModelCompatible}
Let $\pi$ be a model-compatible $\Sinf[X, \nnX]$-interpretation.
An LFP-formula $\phi$ is
$\Mod_\pi$-satisfiable ($\Mod_\pi$-valid) if, and only if, $\pi \ext \phi \neq 0$ ($\pi \ext {\neg \phi} = 0$).
\end{proposition}

\begin{proof}
The statement on satisfiability implies the one on validity, so we only consider the former.
If $\AA \in \Mod_\pi$ and $\AA \models \phi$, we consider the
model-defining $\B$-interpretation $\pi_\AA$ corresponding to the model $\AA$.
We can obtain $\pi_\AA$ from $\pi$ by instantiating the indeterminates with values from $\B$.
Let $h : X \cup \nnX \to \B$ be this instantiation
and observe that, since $\pi_\AA$ is model-defining, $h(x) \cdot h(\nnx) = 0$.
By the universal property, this induces a fully continuous homomorphism
$h : \Sinf[X, \nnX] \to \B$ such that $\pi_\AA = h \circ \pi$.
It follows from the fundamental property that
$h(\pi \ext \phi) = \pi_\AA \ext \phi = 1$
and, since $h(0) = 0$, we thus have $\pi \ext \phi \neq 0$.

For the other direction, assume that $\pi \ext \phi \neq 0$.
Then there is a monomial $m \in \pi \ext \phi$.
This monomial induces an instantiation $h : X \cup \nnX \to \B$ such that $h(m) = 1$ as follows:
\begin{itemize}
\item If $m(x) > 0$, then $h(x) = 1$ and $h(\nnx) = 0$,
\item if $m(\nnx) > 0$, then $h(x) = 0$ and $h(\nnx) = 1$,
\item otherwise, $h(x) = 1$ and $h(\nnx) = 0$ (this is an arbitrary choice).
\end{itemize}
By construction, $h$ respects dual-indeterminates and thus
lifts to a fully continuous homomorphism $h : \Sinf[X, \nnX] \to \B$.
Moreover, $h \circ \pi$ is a model-defining $\B$-interpretation.
It follows the induced model $\AA_{h \circ \pi}$ satisfies $\phi$,
since $(h \circ \pi) \ext \phi = h(\pi \ext \phi) \ge h(m) = 1$.
\end{proof}

\section{Game-theoretic analysis}
\label{Sect:Games}

It has been shown in \cite{GraedelTan20} that the provenance analysis for $\FO$ and 
$\posLFP$ is intimately connected 
with the provenance analysis of reachability games. 
Evaluation strategies to establish the truth of first-order formulae 
are really winning strategies for reachability games on acyclic game graphs.
For $\posLFP$ the situation is similar, but the 
associated model checking games may have cycles and thus admit infinite plays, but 
the winning plays for the verifying player have to reach a winning position 
(a true literal) in a finite number of steps.
By annotating such terminal positions with semiring values
and propagating these values along the edges to the remaining positions,
one obtains provenance values that coincide
with the syntactically defined semantics $\pi \ext \psi$.


For full LFP or the modal $\mu$-calculus, the model checking games are parity games
which are considerably more complex and do not allow
for a simple propagation of values from terminal positions.
We do not present here a general provenance
analysis of parity games, but we show how 
provenance values $\pi \ext \phi$ for fixed-point formulae
can be understood from a game-theoretic point of view.
For first-order logic or $\posLFP$, provenance values $\pi \ext \phi$ in $\N[X,\nnX]$
or $\N^\infty[\![X,\nnX]\!]$ are sums of monomials
that correspond to the evaluation strategies for $\phi$
and provide information about the literals used by these strategies.
We present an analogue of this statement for full fixed-point logic
and the semiring $\Sinf[X,\nnX]$. 

\medskip\noindent{\bf Model-checking games for LFP. }
Model checking games are classically defined for a formula and a fixed structure $\AA$
(see e.g. \cite[Chap.~4]{AptGraedel11}).
However, the \emph{game graph} of such a game depends only on
the formula $\psi$ and the \emph{universe} of the given structure, and
it is only the labelling of the terminal positions as winning 
for either the Verifier (Player~0) or the Falsifier (Player~1), that depends
on which of the literals in $\Lit_A(\tau)$ are true in $\AA$.
Hence the definition readily generalizes to a more abstract provenance
scenario where we instead label terminal positions by semiring values.

\bdefn Let $\psi$  be an LFP-sentence in negation normal form
with a relational vocabulary $\tau$, and let $A$ be a (finite) universe.
The model checking game $\Gg(A,\psi)$ has
positions $\phi(\tup a)$, obtained from a subformula $\phi(\tup x)$ of $\psi$, 
by instantiating the free variables $\tup x$ by a tuple $\tup a$ of elements of $A$. 
At a disjunction $(\psi\lor\phi)$, Player~0 (Verifier) moves to either $\psi$ or $\phi$,
and at a conjunction, Player~1 (Falsifier) makes an analogous move.
At a position $\E x\phi(\tup a,x)$, Verifier selects an element $b$ and moves
to $\phi(\tup a,b)$, whereas at positions $\A x \phi(\tup a,x)$ the move to
to the next position $\phi(\tup a,b)$ is done by Falsifier.  
For every subformula of $\psi$ of form
$\theta\coloneqq[\lfp R\tup x \st \phi(R,\tup x)](\tup x)$ or 
$\theta\coloneqq[\gfp R\tup x \st \phi(R,\tup x)](\tup x)$
we add moves from positions $\theta(\tup a)$ to $\phi(\tup a)$, and from
positions $R\tup a$  to $\phi(\tup a)$ for  every tuple $\tup a$. 
Since these moves are unique it makes no difference to which of the two players
we assign the positions $\theta(\tup a)$ and $R\tup a$.
The resulting game graphs $\Gg(A,\psi)$ may contain cycles,
but the set $T$ of terminal nodes is again a subset of $\Lit_A(\tau)$. 
The terminal positions of $\Gg(A,\psi)$ are  literals in $\Lit_A(\tau)$.
\edefn

These games may have cycles and thus admit infinite plays. The  
winning condition for infinite plays is the \emph{parity condition}:  We assign to 
each fixed-point variable a priority, which is even for greatest fixed-points and odd 
for least fixed points, satisfying the condition if a variable $R$ depends
on another variable $T$ then the priority of $R$ is smaller or equal to the priority of $T$.
An infinite play is won by Player~0 (the Verifier) if the least priority occurring infinitely often in
the play is even, otherwise it is won by Player~1 (the Falsifier).

\medskip\noindent{\bf Provenance values for plays and strategies.} 
Given a parity game $\Gg(A,\psi)$, every $K$-interpretation $\pi : \Lit_A(\tau) \to K$
provides a valuation of the terminal positions.
Based on this, we define provenance values for plays and strategies.

\bdefn A \emph{finite} play $\rho=(\phi_0,\dots,\phi_t)$ ends in a terminal position
$\phi_t\in\Lit_A(\tau)$ which we call the \emph{outcome} of $\rho$. We simply identify the
provenance value of $\rho$ with the value of its outcome, i.e. we
put $\pi\ext\rho\coloneqq\pi\ext{\phi_t}$. For an \emph{infinite} play $\rho$ we put
$\pi\ext\rho\coloneqq1$ if $\rho$ is a wining play for the Verifier, and $\pi\ext\rho\coloneqq0$ 
otherwise.  
\edefn

We denote by $\Strat(\phi)$ the set of evaluation strategies for the subformula $\phi$ of $\psi$,
i.e. the set of all (not necessarily positional) strategies that the Verifier has from position
$\phi$ in the parity game $\Gg(A,\psi)$.
Every strategy $\Ss\in\Strat(\phi)$ induces the set $\Plays(\Ss)$ of plays that are consistent with
$\Ss$. Intuitively, the provenance value of a strategy is simply the product over the
provenance values of all plays that it admits.
However, a strategy may well admit an infinite set of plays
and while it is possible to define infinite products in our setting
(we refer to the appendix for details),
we instead observe that the set of possible outcomes is of course finite,
since there exist only finitely many literals.
As a consequence, we define the provenance value for a strategy
by grouping those plays with identical outcome.

\bdefn
For any strategy $\Ss$ and any literal $L \in \Lit_A(\tau)$, we write 
$\litcount \Ss L \in \N\cup\{\infty\}$ for the number of plays $\rho\in\Plays(\Ss)$ with outcome 
$L$. We then define the provenance value
\[
 \pi \ext \Ss \coloneqq \begin{cases} \prod_{L \in \Lit_A(\tau)}  \pi(L) ^{\litcount \Ss L} &\text{ if all infinite $\rho \in \Plays(\Ss)$ are winning for Verifier},\\
 0 &\text{ otherwise.} \end{cases}
\]
\edefn

The case for $\#_\Ss(L) = \infty$ is well-defined,
as the infinitary power $a^\infty = \Inf_n a^n$ can be defined in
all absorptive, fully continuous semirings.
For model-compatible interpretations in $\Sinf[X, \nnX]$,
the value $\pi \ext \Ss$ is a single monomial.
The following central result justifies our game-theoretic analysis
and precisely characterizes provenance semantics $\pi \ext \psi$
in terms of strategies in the associated model checking game.

\begin{theorem}
\label{thmStrategyCharacterization}
 Let $\psi\in\LFP$, and and let $\pi\colon\Lit_A(\tau)\ra K$  be a $K$-interpretation 
 into an  absorptive, fully continuous semiring $K$.
 Then $\pi \ext \psi = \Sup \{ \pi \ext \Ss \mid \Ss \in \Strat(\psi) \}$.
\end{theorem}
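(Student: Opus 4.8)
The plan is to reduce, via the Fundamental Property (\cref{propFundamentalProperty}) together with the Universality of $\Sinf[X]$ (\cref{universality-of-Sinf}), to the single case of the most general interpretation $\mgpi\colon \Lit_A(\tau)\to\Sinf[X]$ introduced above, where $X$ carries one token $x_L$ for each relational literal $L$. For an arbitrary absorptive, fully continuous $K$ and $K$-interpretation $\pi$, universality yields a fully continuous homomorphism $h\colon\Sinf[X]\to K$ with $\pi = h\circ\mgpi$ and $h(\top)=\top$ (both semirings are absorptive, so $\top=1$), whence $h(\mgpi\ext\psi)=\pi\ext\psi$. It remains to see that $h$ transports the right-hand side as well: for any strategy $\Ss$ one has $\mgpi\ext\Ss = \prod_{L} x_L^{\litcount{\Ss}{L}}$, a single monomial (the product ranges over the \emph{finite} set $\Lit_A(\tau)$), so full continuity of $h$ — which preserves infinitary powers, $h(x_L^\infty)=h(x_L)^\infty$ — gives $h(\mgpi\ext\Ss)=\prod_L\pi(L)^{\litcount{\Ss}{L}}=\pi\ext\Ss$, while $h(0)=0$ covers strategies admitting a non-winning infinite play. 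Finally $\Sup\{\mgpi\ext\Ss\mid \Ss\in\Strat(\psi)\}$ is a \emph{finite} antichain $m_1+\dots+m_k$ of monomials, each $m_i$ being itself the value of some strategy; applying the order-preserving $h$ and using idempotence of $K$ to identify finite sums with finite suprema shows $h(\Sup\{\mgpi\ext\Ss\}) = \Sup\{\pi\ext\Ss\mid\Ss\in\Strat(\psi)\}$. Hence it suffices to prove $\mgpi\ext\psi = \Sup\{\mgpi\ext\Ss\mid\Ss\in\Strat(\psi)\}$ in $\Sinf[X]$, i.e.\ that the monomials of $\mgpi\ext\psi$ are exactly the $\preceq$-maximal ``fingerprints'' $\prod_L x_L^{\litcount{\Ss}{L}}$ of Verifier strategies all of whose infinite plays are won by Verifier.

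This is then established by induction on $\psi$ in two inequalities, analysing $\Gg(A,\psi)$ from each subformula position and handling fixed-point subformulae through their stage iterations, which by \cref{propClosureOrdinal} close at step $\omega$. For the direction $\mgpi\ext\psi\ge\mgpi\ext\Ss$ (soundness), the Boolean connectives and quantifiers are routine from the induction hypothesis and the definition of $\pi\ext\Ss$ as a product over plays grouped by outcome. For $\psi = \gfpfml R \tx {\theta(R,\tx)} \ta$, a value-defining strategy $\Ss$ induces, for each finite stage $n$, a strategy $\Ss_n$ for $\theta$ with $R$ replaced by the $n$-th approximant $g_n$; since the iteration closes at $\omega$ and $\Sinf[X]$ is fully continuous, the inductive bounds $g_n(\ta)\ge\mgpi\ext{\Ss_n}$ pass to the infimum, the infinite plays of $\Ss$ looping through $R$ contributing precisely the monomials of $g_\omega(\ta)$, with exponent $\infty$ for literals reused along the loop. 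For $\psi = \lfpfml R \tx {\theta(R,\tx)} \ta$, the odd priority of $R$ — which, being outermost in $\psi$, is the least priority seen infinitely often on any infinite play staying inside $\Gg(A,\psi)$ — forces every infinite play consistent with a value-defining strategy to pass through $R$ only finitely often, so $\Ss$ is in fact a strategy for a finite unfolding $\theta^n$, and the monotonicity of update operators (the Claim in the proof of \cref{propSemanticsWelldefined}) with the induction hypothesis gives $\mgpi\ext\psi = g_\omega(\ta)\ge g_n(\ta)\ge\mgpi\ext\Ss$.

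For the reverse direction $\mgpi\ext\psi\le\Sup\{\mgpi\ext\Ss\}$ (optimality), we exhibit, for each monomial $m$ of $\mgpi\ext\psi$, a Verifier strategy all of whose infinite plays are winning and whose fingerprint dominates $m$; the connectives and quantifiers are again direct. For $\psi = \lfpfml R \tx {\theta(R,\tx)} \ta$ we have $g_\omega(\ta)=\Sup_n g_n(\ta)$, so $m$ already occurs in some $g_n(\ta)$; unfolding $n$ times and plugging in, by the induction hypothesis, strategies realising the inner monomials yields a strategy that winds through $R$ at most $n$ times before its plays reach the first-order part, so all its infinite plays stay inside inductive strategies and are won, and its fingerprint dominates $m$. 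For $\psi = \gfpfml R \tx {\theta(R,\tx)} \ta$, by \cref{propCharacterizationInfima} the monomial $m\in g_\omega(\ta)=\Inf_n g_n(\ta)$ is dominated by an infimum $\Inf_n m_n$ of a descending monomial chain with $m_n\in g_n(\ta)$; we assemble a single strategy that plays the $\theta$-loop forever, in round $n$ following an inductive strategy witnessing $m_n$, so a literal reused in all but finitely many rounds acquires exponent $\infty$ and the fingerprint dominates $m$, while the infinite loop through the even-priority $R$ makes every infinite play winning. In all cases only finitely many monomials, hence finitely many strategies, are needed, so the supremum over the (possibly uncountable) set $\Strat(\psi)$ is attained.

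The main obstacle is the fixed-point case, and within it two intertwined points. First, matching the parity winning condition on \emph{infinite} plays of $\Gg(A,\psi)$ with the least/greatest fixed-point distinction under \emph{alternating} fixed points: one must argue that the decisive priority of an infinite play is that of the outermost fixed-point variable seen infinitely often, so that value-defining strategies for an $\lfp$-subformula genuinely bottom out after finitely many unfoldings while those for a $\gfp$-subformula may loop forever. Second, the limit argument at stage $\omega$, where full continuity of $\Sinf[X]$ and the bound on the closure ordinal (\cref{propClosureOrdinal}) are what let us convert an infimum of finite-stage values into the fingerprint of one strategy that reuses a literal infinitely often — the crucial subtlety being that an $\infty$-exponent monomial of a $\gfp$-value must be realised by a \emph{single} strategy admitting an infinite play, not merely approximated by a family of strategies with unbounded finite exponents whose supremum would miss it. The reduction to $\mgpi$ keeps this transparent by turning every strategy value into a single monomial.
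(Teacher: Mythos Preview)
Your reduction to the most general interpretation $\mgpi$ in $\Sinf[X]$ via universality and the fundamental property is correct and matches the paper, as does the overall induction and your treatment of the Boolean connectives, quantifiers, and the $\lfp$ case. The genuine gap is the $\gfp$ optimality direction --- precisely the step you yourself flag as ``the crucial subtlety'' --- where the construction ``plays the $\theta$-loop forever, in round $n$ following an inductive strategy witnessing $m_n$'' does not actually work.

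Concretely, take $\theta(R,x)=\exists y\,(Exy\land Ry)$ over the graph $u\xrightarrow{x}u$, $u\xrightarrow{y}v$, $v\xrightarrow{z}v$. Then $m_n = x^{n-1}y\in g_n(u)$ is a descending monomial chain with $\Inf_n m_n = x^\infty y$, but there is \emph{no} winning strategy with fingerprint dominating $x^\infty y$ that is obtained by concatenating witnesses for the $m_n$: each such witness eventually takes edge $y$ and then acquires $z^\infty$. The only strategy with value $\ge x^\infty y$ is ``loop at $u$ forever'' with value $x^\infty$, which never plays like any witness for any $m_n$. (There is also a type mismatch in your description: after one pass through $\theta$ the play sits at some $R\tb$ with $\tb$ possibly $\neq\ta$, and you need a monomial of $g_{n-1}(\tb)$, not $m_{n-1}\in g_{n-1}(\ta)$.) The paper resolves this with a substantial \emph{Puzzle Lemma}: it first proves the analogue of your stage correspondence for \emph{truncations} $\Ss\cut n$, then fixes one sufficiently deep truncation $\Ss_i\cut i$, decomposes it into layers by $R$-depth, finds by pigeonhole a layer containing none of the finitely many ``problematic'' literals (those whose exponent in $\Inf_n m_n$ is finite), extracts matching \emph{puzzle pieces} from that layer, and repeats them ad infinitum. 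This also handles your implicit assumption that ``the infinite loop through the even-priority $R$ makes every infinite play winning'': when $\theta$ contains inner fixed points, an infinite play of the assembled strategy may stay inside a single piece without ever revisiting $R$, and one must argue separately that such a play was already a winning play of the original $\Ss_i$. Your sketch locates the difficulty correctly but supplies the naive assembly rather than this construction.
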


Examples of model checking games given in the following section,
the proof is discussed in detail in Sect.~{\ref{Sect:Games-Proof}}.
The key idea is to view strategies $\Ss$ in the game of, say, $\gfpfml R {\tup x} \phi {\tup a}$
as trees and then define prefixes $\Ss \cut n$ of these trees based on the number of fixed-point literals $R \tup b$ along a path.
We prove by induction that these prefixes of increasing size correspond exactly to the steps of the fixed-point iteration via $F_\pi^\phi$.
For greatest fixed points, strategies can be infinite which leads to subtle obstacles.
Perhaps the most challenging step is the so-called \emph{puzzle lemma}
which shows that, roughly speaking, computing infima of
strategy prefixes leads to meaningful values corresponding to actual (infinite) strategies.


Consider now specifically the semiring $\Sinf[X,\nnX]$ and
model-compatible interpretations.
By the above theorem,
the provenance value of a sentence $\psi$ is then a sum of monomials
$x_1^{e_1} \cdots x_k^{e_k}$, each of which corresponds to a strategy $\Ss$ for Verifier
that uses precisely the literals labelled by $x_1, \dots, x_k$, and
each literal $x_i$ is used precisely $e_i$ many times,
that is, there are $e_i$ plays consistent with $\Ss$ that have outcome $x_i$.
By using dual indeterminates, 
we make sure that these literals are consistent
and hence represent actual evaluation strategies.
%

In this sense, provenance semantics in absorptive, fully continuous semirings,
and most prominently in $\Sinf[X, \nnX]$,
provide detailed information about evaluation strategies.
Because of absorption, we do not obtain information about all evaluation strategies,
as in first-order logic and $\N[X, \nnX]$, but instead only about the absorption-dominant
strategies, corresponding to absorption-maximal monomials.
These are strategies that allow the fewest different possible outcomes
and are thus the simplest or canonical evaluation strategies.

\subsection{Examples}
\label{Sect:Games-Examples}

Before we prove \cref{thmStrategyCharacterization},
let us illustrate provenance values for strategies with two examples.
Since model checking games become large even for simple formulae and
small universes, we only consider a small graph with two nodes.
The formula, on the other hand, features alternating least and greatest fixed points
which is arguably the most difficult case to analyse
and leads to more complicated parity games that need several different priorities.

\ex
\label{exModelChecking}
Consider the formula $\phi(u)$ below which expresses that
there is a path from $u$ on which $P$ holds infinitely often.
We evaluate $\phi(u)$ using the model-compatible $\Sinf[X, \nnX]$-interpretation $\pi$
over $A = \{ u, v \}$ indicated on the right, with $\pi(Pu) = 0$ and $\pi(P v) = 1$.

\begin{center}
$\displaystyle \phi(u) = \Gfpfml X x {\lfpfml Y x {\E y \big(Exy \land ((Xy \land Py) \lor Yy)\big)} x \,} u$
\hspace{1cm}
\begin{tikzpicture}[baseline, yshift=2.5pt, node distance=1.5cm]
\node [dot,label={above:$u$}] (u) {};
\node [dot,label={above:$v$},label={right:{$P$}}, shift=(0:1.8cm)] at (u) (v) {};

\path [draw, ->, >=stealth', shorten >=3pt, shorten <=3pt]
(u) edge [bend left=15pt] node [above] {$x_2$} (v)
(v) edge [bend left=15pt] node [below] {$y_2$} (u)
(u) edge [in=-60,out=-120,looseness=18] node [below] {$x_1$} (u)
(v) edge [in=-60,out=-120,looseness=18] node [below] {$y_1$} (v)
;
\end{tikzpicture}
\end{center}

Intuitively, witnesses for $\phi(u)$ are simply infinite paths that infinitely often visit $v$.
There are infinitely many such paths, but the simplest ones (in terms of the different edges they use)
are the paths $uvvvv\dots$ and $uvuvuv\dots$ which correspond to the monomials $x_2 y_1^\infty$
and $x_2^\infty y_2^\infty$. And indeed, $\pi \ext {\phi(u)} = x_2 y_1^\infty + x_2^\infty y_2^\infty$.
Notice that the edge $x_1$ does not appear in the result and we can conclude
that its existence does not affect the truth of $\phi(u)$.

Let us now consider the evaluation strategies for $\phi(u)$ from the game-theoretic perspective.
The complete model checking game (with abbreviated node labels) is shown below,
where rounded nodes belong to Verifier, rectangular nodes to Falsifier
and the small numbers indicate the priorities assigned to fixed-point relations.
Terminal positions are highlighted by dashed borders and include the value assigned by $\pi$.
There are four positions for which Verifier can make a decision:
The two nodes labeled $\E y (\dots)$ and the two disjunctions in the center of the figure.
Hence there are 16 positional strategies in total.
One of these strategies is highlighted in gray and has the provenance value $x_2 y_1^\infty$,
as there is one play ending in $Euv$ and there are arbitrarily long plays ending
either in $Evv$ or in $Pv$, depending on the choices of Falsifier.
Most of the other 15 strategies allow infinite paths with least priority $1$
and thus have provenance value $0$ (for instance by choosing the cycle
$\E y (\dots) \;\to\; E u u \land \ldots \;\to\; (Xu \land Pu) \lor Yu \;\to\; Yu$).
The only remaining strategy has the provenance value $x_2^\infty y_2^\infty$.
One can further observe that non-positional strategies only lead to
monomials with additional variables which are then absorbed,
so we indeed obtain $\pi \ext {\phi(u)} = x_2 y_1^\infty + x_2^\infty y_2^\infty$.

\begin{center}
\begin{tikzpicture}[font=\scriptsize, node distance=1.15cm]
\node [p0] (gfpu) {$[\gfp \dots ](u)$};
\node [p0, fill=white, below of=gfpu] (lfpu) {$[\lfp \dots ](u)$};
\node [p0, below of=lfpu] (exu) {$\E y (E u y \land \dots)$};

\node [p1, right of=exu, xshift=1.25cm] (Euu) {$E u u \land \dots$};
\node [p1, below=1.9cm of Euu] (Euv) {$E u v \land \dots$};

\node [p0, right of=Euu, xshift=1.35cm] (oru) {$(X u \land P u) \lor Y u$};
\node [p1, above of=oru] (andu) {$X u \land P u$};
\node [p0, above of=andu, xshift=-1cm] (Xu) {$X u$};
\node [tt, above of=andu, xshift=+1cm] (Pu) {$P u: 0$};

\node [tt,above of=Euu] (EuuT) {$E u u : x_1$};
\node [tt,below of=Euv] (EuvT) {$E u v : x_2$};
\node [p0, below of=Euu] (Yu) {$Y u$};

\node [p0, right of=Euv, xshift=1.35cm] (orv) {$(X v \land P v) \lor Y v$};
\node [p1, below of=orv] (andv) {$X v \land P v$};
\node [p0, below of=andv, xshift=+1cm] (Xv) {$X v$};
\node [tt, below of=andv, xshift=-1cm] (Pv) {$P v: 1$};

\node [p1, right of=orv, xshift=1.35cm] (Evv) {$E v v \land \dots$};
\node [p1, above=1.9cm of Evv] (Evu) {$E v u \land \dots$};

\node [tt,below of=Evv] (EvvT) {$E v v : y_1$};
\node [tt,above of=Evu] (EvuT) {$E v u : y_2$};
\node [p0, above of=Evv] (Yv) {$Y v$};

\node [p0, right of=Evv, xshift=1.25cm] (exv) {$\E y (E v y \land \dots)$};
\node [p0, below of=exv] (lfpv) {$[\lfp \dots ](v)$};
\node [p0, below of=lfpv] (gfpv) {$[\gfp \dots ](v)$};

\begin{scope}[on background layer]
\path [->, >=stealth', shorten >=3pt, shorten <=3pt]
(gfpu) edge [win] (lfpu)
(lfpu) edge [win] (exu)
(exu) edge (Euu) (exu) edge [bend right, win] (Euv.west)
(Euu) edge (EuuT) (Euu) edge (oru)
(oru) edge [bend left=20pt] (Yu) (oru) edge (andu)
(Yu) edge [bend left=20pt] (exu)
(andu) edge (Xu) (andu) edge (Pu)
(Xu) edge [bend right=15pt] (lfpu.north east)
(Euv) edge [win] (EuvT)

(gfpv) edge (lfpv)
(lfpv) edge [win] (exv)
(exv) edge [win] (Evv) (exv) edge [bend right] (Evu.east)
(Evv) edge [win] (EvvT) (Evv) edge [win] (orv)
(orv) edge [bend left=20pt] (Yv) (orv) edge [win] (andv)
(Yv) edge [bend left=20pt] (exv)
(andv) edge [win] (Xv) (andv) edge [win] (Pv)
(Xv) edge [bend right=15pt, win] (lfpv.south west)
(Evu) edge (EvuT)

(Evu) edge (oru)
(Euv) edge [win] (orv)
;
\end{scope}

\begin{scope}[node distance=0.5cm, every node/.style={prio, xshift=-.1cm, yshift=-1pt}]
\node [above left of=Yu] {1};
\node [above left of=Yv] {1};
\node [above left of=Xu] {0};
\node [above left of=Xv, yshift=-3pt, xshift=-2pt] {0};
\end{scope}
\end{tikzpicture}
\end{center}

If we are just interested in the question \emph{which} literals are needed to satisfy $\phi(u)$
or, equivalently, in which models $\phi(u)$ holds, we can drop all exponents and obtain
$\pi \ext {\phi(u)} = x_2 y_1 + x_2 y_2$.
This is the same result we would obtain in the semiring $\PosBool(X,\nnX)$
(the dual-indeterminate version of $\PosBool(X)$).
We see that $\phi(u)$ holds in all models that satisfy $P v$, $\neg P u$ and additionally
contain at least the edges $x_2$ and $y_1$, or at least $x_2$ and $y_2$.
\eex

\ex
\label{exModelCheckingNegation}
In the previous example, the interpretation of the evaluation strategies
in terms of infinite paths was straightforward.
If we instead consider the negated formula $\neg \phi(u)$, which states that there
are only finitely many occurrences of $P$ on all paths from $u$, witnesses for the
truth of $\phi(u)$ are more complex and are best understood through the model checking game.
We first bring $\neg \phi(u)$ into negation normal form using the duality laws of LFP:
\begin{center}
$\displaystyle \neg \phi(u) \equiv \Lfpfml X x {\gfpfml Y x {\A y \big(\neg Exy \lor ((Xy \lor \neg Py) \land Yy)\big)} x \,} u$
\hspace{.7cm}
\begin{tikzpicture}[baseline, yshift=2.5pt, node distance=1.5cm]
\node [dot,label={above:$u$}] (u) {};
\node [dot,label={above:$v$},label={right:{$P$}}, shift=(0:1.8cm)] at (u) (v) {};

\path [draw, ->, >=stealth', shorten >=3pt, shorten <=3pt]
(u) edge [bend left=15pt] node [above] {$x_2$} (v)
(v) edge [bend left=15pt] node [below] {$y_2$} (u)
(u) edge [in=-60,out=-120,looseness=18] node [below] {$x_1$} (u)
(v) edge [in=-60,out=-120,looseness=18] node [below] {$y_1$} (v)
;
\end{tikzpicture}
%
\end{center}

We analyse the game as in the previous example
(the game graph is shown below).
Again, the provenance values of non-positional strategies
are absorbed by the values of positional ones.
The players have basically switched roles, but Verifier can still make
relevant decisions for only four nodes.
One possible strategy is highlighted in gray above
and has the provenance value $\nn x_1 \nn y_1^2 \nn y_2^2$.
Notice the exponent $2$, as the position $\A y (\neg E vy \lor \ldots)$
can be reached in two different ways and there are hence two plays
with value $\nn y_1$ and two with value $\nn y_2$.
The other positional strategy with only finite plays has the provenance value $\nn x_1 \nn x_2$.
Most of the positional strategies with infinite plays
admit an infinite play with priority $1$ and thus have value $0$,
except for the two strategies with values $\nn x_2^\infty$
and $\nn y_1^\infty \nn y_2^\infty$, respectively.
We thus obtain
$\pi \ext {\neg \phi(u)} = \nn{x}_{1} \nn{x}_{2} + \nn{x}_{1} \nn{y}_{1}^2 \nn{y}_{2}^2 + \nn{x}_{2}^\infty + \nn{y}_{1}^\infty \nn{y}_{2}^\infty$.

\begin{center}
\begin{tikzpicture}[font=\scriptsize, node distance=1.15cm]
\node [p0] (gfpu) {$[\lfp \dots ](u)$};
\node [p0, fill=white, below of=gfpu] (lfpu) {$[\gfp \dots ](u)$};
\node [p1, below of=lfpu] (exu) {$\A y (\neg E u y \lor \dots)$};

\node [p0, right of=exu, xshift=1.25cm] (Euu) {$\neg E u u \lor \dots$};
\node [p0, below=1.9cm of Euu] (Euv) {$\neg E u v \lor \dots$};

\node [p1, right of=Euu, xshift=1.35cm] (oru) {$(X u \lor \neg P u) \land Y u$};
\node [p0, above of=oru] (andu) {$X u \lor \neg P u$};
\node [p1, above of=andu, xshift=-1cm] (Xu) {$X u$};
\node [tt, above of=andu, xshift=+1cm] (Pu) {$\neg P u: 1$};

\node [tt,above of=Euu] (EuuT) {$\neg E u u : \nn{x}_{1}$};
\node [tt,below of=Euv] (EuvT) {$\neg E u v : \nn{x}_{2}$};
\node [p1, below of=Euu] (Yu) {$Y u$};

\node [p1, right of=Euv, xshift=1.35cm] (orv) {$(X v \lor \neg P v) \land Y v$};
\node [p0, below of=orv] (andv) {$X v \lor \neg P v$};
\node [p1, below of=andv, xshift=+1cm] (Xv) {$X v$};
\node [tt, below of=andv, xshift=-1cm] (Pv) {$\neg P v: 0$};

\node [p0, right of=orv, xshift=1.35cm] (Evv) {$\neg E v v \lor \dots$};
\node [p0, above=1.9cm of Evv] (Evu) {$\neg E v u \lor \dots$};

\node [tt,below of=Evv] (EvvT) {$\neg E v v : \nn{y}_{1}$};
\node [tt,above of=Evu] (EvuT) {$\neg E v u : \nn{y}_{2}$};
\node [p1, above of=Evv] (Yv) {$Y v$};

\node [p1, right of=Evv, xshift=1.25cm] (exv) {$\A y (\neg E v y \lor \dots)$};
\node [p0, below of=exv] (lfpv) {$[\gfp \dots ](v)$};
\node [p0, below of=lfpv] (gfpv) {$[\lfp \dots ](v)$};

\begin{scope}[on background layer]
\path [->, >=stealth', shorten >=3pt, shorten <=3pt]
(gfpu) edge [win] (lfpu)
(lfpu) edge [win] (exu)
(exu) edge [win] (Euu) (exu) edge [bend right, win] (Euv.west)
(Euu) edge [win] (EuuT) (Euu) edge (oru)
(oru) edge [bend left=20pt] (Yu) (oru) edge (andu)
(Yu) edge [bend left=20pt] (exu)
(andu) edge (Xu) (andu) edge (Pu)
(Xu) edge [bend right=15pt] (lfpu.north east)
(Euv) edge (EuvT)

(gfpv) edge (lfpv)
(lfpv) edge [win] (exv)
(exv) edge [win] (Evv) (exv) edge [bend right,win] (Evu.east)
(Evv) edge [win] (EvvT) (Evv) edge (orv)
(orv) edge [bend left=20pt,win] (Yv) (orv) edge [win] (andv)
(Yv) edge [bend left=20pt,win] (exv)
(andv) edge [win] (Xv) (andv) edge (Pv)
(Xv) edge [bend right=15pt,win] (lfpv.south west)
(Evu) edge (EvuT)

(Evu) edge (oru)
(Euv) edge [win] (orv)
;
\end{scope}

\begin{scope}[node distance=0.5cm, every node/.style={prio, xshift=-.1cm, yshift=-1pt}]
\node [above left of=Yu] {2};
\node [above left of=Yv] {2};
\node [above left of=Xu] {1};
\node [above left of=Xv, yshift=-4pt, xshift=0pt] {1};
\end{scope}
\end{tikzpicture}
\end{center}

It is possible, albeit tedious and not straightforward, to verify this result
by manually determining the infimum of the nested fixed-point iteration.
The interpretation of the result beyond the model checking game
is not as clear as in the previous example.
We can, however, again omit the exponents and, due to absorption, obtain the value
$\pi \ext {\neg \phi(u)} = \nn x_2 + \nn y_1 \nn y_2$ in $\PosBool(X,\nnX)$.
This tells us that $\neg \phi(u)$ is satisfied in all models that lack at least edge $x_2$
or have no outgoing edges from $v$ -- exactly complementary to the models
we determined for $\phi(u)$.
\eex

\subsection[Proof]{Proof of \cref{thmStrategyCharacterization}}
\label{Sect:Games-Proof}

The proof of \cref{thmStrategyCharacterization} is more involved
than the proofs presented so far.
We first show that it holds in the semiring $\Sinf[X]$
and afterwards make use of the universal property
to generalize it to all absorptive, fully continuous semirings.
We begin with some notation for model checking games and strategies.
The inductive proof is then based on the notion of strategy
\emph{truncations}, which are essentially prefixes of strategy trees.
We extend the definition of the value $\pi \ext \Ss$ of a strategy
to these prefixes and show by induction that truncations of increasing size
correspond to the steps of the fixed-point iteration for fixed-point formulae.

For the proof in $\Sinf[X]$, we always fix a universe $A$, a signature $\tau$ (which we usually omit),
a $\Sinf[X]$-interpretation $\pi$ and consider model checking games $\Game(A, \phi)$
which we abbreviate by just $\Game(\phi)$.
To make the positions of the game precise, let $\Game(\phi) = (V, V_0, V_1, E, \Omega)$,
where $(V, E)$ is a directed graph and $V = V_0 \dcup V_1$ are the positions owned by Verifier and Falsifier, respectively.
The priorities are given by the node labeling $\Omega : V \to \N$.
To avoid the special case of infinite plays that are losing and thus lead to $\pi \ext \Ss = 0$,
we only consider strategies for Verifier in which all infinite plays are winning for Verifier
and denote their set by $\WinStrat{\phi} \subseteq \Strat(\phi)$.
For $v \in V$, we denote the set of successor positions by $vE = \{ w \mid (v,w) \in E \}$.
Positions $v$ with $vE = \emptyset$ are called \emph{terminal} positions.
A play from position $v_0$ is a (finite or infinite) sequence $\rho = v_0 v_1 v_2 \dots$
such that $(v_i, v_{i+1}) \in E$ for all $i$.
If $\rho$ is finite, it must end in a terminal position (which we call the outcome of $\rho$).

The tree unraveling of a game $\Game(\phi)$, as defined in \cite{GraedelTan20},
is the tree $\TT(\Game(\phi), v_0) = (\hash V, \hash V_0, \hash V_1, \hash E)$ where
$\hash V$ is the set of all finite paths $\tau$ from $v_0$,
$\hash V_\sigma \subseteq V$ is the set of those finite paths ending in a node of player $\sigma$
and $\hash E = \{ (\tau v, \tau v v') \mid \tau v \in \hash V \text{ and } (v,v') \in E \}$.
For $\tau,\tau' \in \hash V$,
we write $\tau \prefix \tau'$ if $\tau$ is a prefix of $\tau'$.
For a node $\tau v$, we call $\Pos(\tau v) = v$ the \emph{position} of $\tau v$.
It is often convenient to identify a node $\tau v$ in the tree unraveling with its position in the original game.
For $\tau = v_0 \dots v_k$, we write $|\tau|=k+1$ for the length of $\tau$.
Following \cite{GraedelTan20}, we view strategies as subtrees of the tree unraveling.

\bdefn
\label{def:strategyAsTree}
A \emph{strategy} $\Ss$ of player $\sigma \in \{0,1\}$ from $v_0$ in $\Game$ is a subtree of $\TT(\Game, v_0)$ of the form $\Ss = (W,F)$ with $W \subseteq \hash V$ and $F \subseteq (W \times W) \cap \hash E$ that satisfies the following conditions. Let $\hashhat V_\sigma$ be the set $\hash V_\sigma$ without terminal nodes (i.e., leaves).

\begin{enumerate}
\item $W$ is closed under predecessors: if $\tau v \in W$, then also $\tau \in W$,
\item player $\sigma$ makes unique choices: if $\tau \in W \cap \hashhat V_\sigma$, then $|\tau F| = 1$,
\item all choices of the opponent are considered: if $\tau \in W \cap \hash V_{1-\sigma}$, then $\tau F = \tau \hash E$.
\end{enumerate}

A play $\rho$ is \emph{consistent} with $\Ss$ if the corresponding path in $\TT(\Game, v_0)$ is contained in $\Ss$.
The strategy $\Ss$ is \emph{winning} if all plays consistent with $\Ss$ are winning (for player $\sigma$).
\edefn

\subsubsection*{Strategy Truncations}

\begin{defn}
Let $\Ss = (W,F)$ be a strategy in $\Game(\phi) = (V, V_0, V_1, E, \Omega)$ and let $R$ be a relation symbol of arity $r$.
Nodes $v \in V$ with $\Pos(v) = R \ta$ (for some $\ta \in A^r$) are called \emph{$R$-nodes}.
For $\tau = v_0 v_1 \dots v_k \in W$, we define
\[
    |\tau|_R = \big| \{ i \mid \Pos(v_i) = R \ta \text{ for some } \ta \in A^r \} \big|
\]
as the number of $R$-nodes occurring along the path.
The \emph{$(R,n)$-truncation} of $\Ss$ is the tree $(W', F')$ defined as follows. Its nodes are finite sequences over $V \cup \{ \Cutsym \}$, where $\Cutsym$ is a special symbol which marks the nodes at which we cut off subtrees of $\Ss$. For $n \ge 1$, we define

\begin{itemize}
\item $W' = \{ \tau \in W \mid |\tau|_R < n \} \;\cup\; \{ \tau \, \Cutsym \mid \text{$\tau \, R \ta \in W$, $|\tau|_R = n-1$} \}$,
\item $F' = F \cap (W' \times W') \; \cup \; \{ (\tau, \tau \, \Cutsym) \mid \tau \, \Cutsym \in W' \}$.
\end{itemize}

\noindent
For $n=0$, we instead set $W' = \{ \Cutsym \}$ and $F' = \emptyset$.
If $R$ is clear from the context, we write $\Ss \cut n$ for the $(R,n)$-truncation of $\Ss$.
\end{defn}

\begin{figure}
\tikzset{
    p0/.style={draw=black, rounded rectangle, inner sep=1pt, minimum size=.35cm, minimum width=.8cm},
    p1/.style={draw=black, rectangle, inner sep=1pt, minimum size=.35cm, minimum width=.6cm},
    tt/.style={draw=black, rectangle, dashed, inner sep=1pt, minimum size=.35cm, minimum width=.6cm},
    xx/.style={inner sep=1pt, text depth=.25ex},
    align at top/.style={baseline=(current bounding box.north)},
}
\centering
\begin{tikzpicture}[font=\scriptsize, node distance=.8cm, align at top]
\node at (0,.1cm) {};
\node [p0] (c0) {$\phi(\ta)$};
\node [p1, below of=c0] (c1) {};
\node [p0, below of=c1, xshift=-1.5cm] (l0) {$R\ta$};
\node [p1, below of=l0] (l1) {};
\node [p0, below of=l1, xshift=-0.6cm] (ll0) {$R\ta$};
\node [p1, below of=ll0] (ll1) {};
\node [p0, below of=l1, xshift=+0.6cm] (lr0) {};
\node [p0, below of=lr0] (lr1) {};
\node [p0, below of=lr1] (lr2) {};
\node [p0, below of=lr2] (lr3) {};
\node [p0, below of=c1, xshift=+1.5cm] (r0) {};
\node [p1, below of=r0] (r1) {};
\node [p0, below of=r1, xshift=-0.7cm] (rl0) {$R\tb$};
\node [p1, below of=rl0] (rl1) {};
\node [p0, below of=rl1, xshift=-.5cm] (xl0) {$R \tb$};
\node [p1, below of=xl0] (xl1) {};
\node [p0, below of=rl1, xshift=+.5cm] (xr0) {};
\node [p0, below of=xr0] (xr1) {$R\ta$};
\node [p0, below of=r1, xshift=+.7cm] (rr0) {};
\node [tt, below of=rr0] (rr1) {};

\path [draw=black, dotted, shorten <=3pt]
(ll1) edge ++(-70:.5cm) (ll1) edge ++(-110:.5cm)
(xl1) edge ++(-70:.5cm) (xl1) edge ++(-110:.5cm)
(lr3) edge ++(-90:.5cm)
(xr1) edge ++(-90:.5cm)
;

\path [draw=black, ->, >=stealth', shorten <=2pt, shorten >=2pt]
(c0) edge (c1)
(c1) edge (l0) (c1) edge (r0)
(l0) edge (l1)
(l1) edge (ll0) (l1) edge (lr0)
(ll0) edge (ll1)
(lr0) edge (lr1) (lr1) edge (lr2) (lr2) edge (lr3)
(r0) edge (r1)
(r1) edge (rl0) (r1) edge (rr0)
(rl0) edge (rl1)
(rl1) edge (xl0) (rl1) edge (xr0)
(xl0) edge (xl1)
(xr0) edge (xr1)
(rr0) edge (rr1)
;
\end{tikzpicture}
\quad
\vrule
\quad
\begin{tikzpicture}[font=\scriptsize, node distance=.8cm, align at top]
\node at (0,.1cm) {};
\node [p0] (c0) {$\phi(\ta)$};
\node [p1, below of=c0] (c1) {};
\node [p0, below of=c1, xshift=-1.5cm] (l0) {$R\ta$};
\node [p1, below of=l0] (l1) {};
\node [xx, below of=l1, xshift=-0.6cm] (ll0) {$\Cutsym$};
\node [p0, below of=l1, xshift=+0.6cm] (lr0) {};
\node [p0, below of=lr0] (lr1) {};
\node [p0, below of=lr1] (lr2) {};
\node [p0, below of=lr2] (lr3) {};
\node [p0, below of=c1, xshift=+1.5cm] (r0) {};
\node [p1, below of=r0] (r1) {};
\node [p0, below of=r1, xshift=-0.7cm] (rl0) {$R\tb$};
\node [p1, below of=rl0] (rl1) {};
\node [xx, below of=rl1, xshift=-.5cm] (xl0) {$\Cutsym$};
\node [p0, below of=rl1, xshift=+.5cm] (xr0) {};
\node [xx, below of=xr0] (xr1) {$\Cutsym$};
\node [p0, below of=r1, xshift=+.7cm] (rr0) {};
\node [tt, below of=rr0] (rr1) {};

\path [draw=black, dotted, shorten <=3pt]
(lr3) edge ++(-90:.5cm)
;

\path [draw=black, ->, >=stealth', shorten <=2pt, shorten >=2pt]
(c0) edge (c1)
(c1) edge (l0) (c1) edge (r0)
(l0) edge (l1)
(l1) edge (ll0) (l1) edge (lr0)
(lr0) edge (lr1) (lr1) edge (lr2) (lr2) edge (lr3)
(r0) edge (r1)
(r1) edge (rl0) (r1) edge (rr0)
(rl0) edge (rl1)
(rl1) edge (xl0) (rl1) edge (xr0)
(xr0) edge (xr1)
(rr0) edge (rr1)
;
\end{tikzpicture}
\caption{A visualization of a strategy $\Ss$ and its $(R,2)$-truncation.}
\label{fig:strategyTruncations}
\end{figure}

We lift the definition of the provenance value $\eval \lm \Ss$ to truncations $\eval \lm {\Ss \cut n}$ by treating $\Cutsym$ as an additional literal. 
That is, given some value $\pi(\Cutsym)$,
\[
    \eval \lm {\Ss \cut n} =
    \lm(\Cutsym)^{\litcount \Ss {\Cutsymsmall}}
    \;\cdot\;
    \prod_{\mathclap{L \in \Lit_A}} \lm(L)^{\litcount \Ss L}
\]

\begin{lemma}
\label{thm:truncationYieldsStrategy}
Let $\lm$ be an $\Sinf[X]$-interpretation.
\begin{enumerate}
\item Let $\phi = \lfpfml R \tx \theta \ty$ and let $\Ss \in \WinStrat{\phi(\ta)}$. If we extend $\lm$ by $\lm(\Cutsym) = 0$, then
\[
    \Sup_{n < \omega} \eval \lm {\Ss \cut n} = \eval \lm \Ss
\]

\item Let $\phi = \gfpfml R \tx \theta \ty$ and let $\Ss \in \WinStrat{\phi(\ta)}$. If we extend $\lm$ by $\lm(\Cutsym) = 1$, then
\[
    \Inf_{n < \omega} \eval \lm {\Ss \cut n} = \eval \lm \Ss
\]
\end{enumerate}
\end{lemma}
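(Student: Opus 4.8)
The plan is to analyze how the provenance value $\eval\lm\Ss$ decomposes according to the number of $R$-nodes seen along a play, and to relate this to the truncations $\Ss\cut n$. The key observation is that for a play $\rho$ consistent with $\Ss$, the value $\litcount\Ss L$ counts plays with outcome $L$, and each finite play has a well-defined number $|\rho|_R$ of $R$-nodes. The $(R,n)$-truncation $\Ss\cut n$ keeps exactly those portions of $\Ss$ reachable by paths with fewer than $n$ $R$-nodes, replacing subtrees hanging below the $n$-th $R$-node by the marker $\Cutsym$. So a play $\rho'$ in $\Ss\cut n$ is either a genuine play of $\Ss$ with $|\rho'|_R < n$ ending in an ordinary literal, or a truncated play ending in $\Cutsym$ corresponding to a prefix $\tau\,R\ta$ of a play of $\Ss$ with $|\tau|_R = n-1$. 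Thus $\litcount{\Ss\cut n}{L}$ for ordinary literals $L$ counts only the $\Ss$-plays with outcome $L$ and at most $n-1$ $R$-nodes, while $\litcount{\Ss\cut n}{\Cutsym}$ counts the cut-off branches.

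For part (1), with $\lm(\Cutsym) = 0$: if $\Ss$ has an infinite play, that play is winning for Verifier by assumption $\Ss\in\WinStrat{\phi(\ta)}$, but since $\phi$ is a least fixed point, $R$ has odd priority, so a winning infinite play can visit $R$-nodes only finitely often; combined with the parity condition and the fact that the $R$-nodes of least priority dominate, one concludes that \emph{every} play of $\Ss$, finite or infinite, has finitely many $R$-nodes — in fact I expect one can bound $|\rho|_R$ uniformly, or at least argue that for every $L$ there is some $n$ beyond which $\litcount{\Ss\cut n}{L} = \litcount\Ss L$. Then as $n\to\omega$, the cut branches disappear in the sense that $\lm(\Cutsym)^{\litcount\Ss{\Cutsymsmall}} = 0^{(\cdot)}$ — but here one must be careful: when $\litcount{\Ss\cut n}{\Cutsym} = 0$ the factor is $0^0 = 1$, and when it is positive the whole product is $0$. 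So $\eval\lm{\Ss\cut n}$ is $0$ exactly while some branch is still cut, and equals $\eval\lm\Ss$ once all branches have terminated. Since the values $\eval\lm{\Ss\cut n}$ form an ascending chain (more literals counted, but in an absorptive semiring additional factors only decrease — wait, that makes it descending), I need to check the direction: truncations of \emph{increasing} size have \emph{more} genuine outcomes counted, hence by absorption ($ab \le a$) smaller values, which contradicts a supremum. The resolution is that the $\Cutsym = 0$ factor dominates: for small $n$ the value is $0$, and it jumps up to $\eval\lm\Ss$ once the cuts vanish; so the chain is $0,0,\dots,0,\eval\lm\Ss,\eval\lm\Ss,\dots$ (or all $0$ if $\Ss$ has a branch with unboundedly many $R$-nodes, but that cannot happen for winning strategies of a least fixed point) and its supremum is $\eval\lm\Ss$.

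For part (2), with $\lm(\Cutsym) = 1$: here the $\Cutsym$ factors are harmless ($1^{(\cdot)} = 1$), so $\eval\lm{\Ss\cut n}$ counts only the genuine outcomes of plays of $\Ss$ with fewer than $n$ $R$-nodes. As $n$ increases, more plays are counted, so by absorptivity the values form a \emph{descending} $\omega$-chain, and we must show its infimum is $\eval\lm\Ss$. The inequality $\Inf_n \eval\lm{\Ss\cut n} \ge \eval\lm\Ss$ is immediate since each $\Ss\cut n$'s outcome multiset is a sub-multiset of $\Ss$'s. For the reverse, one uses that every outcome $L$ of a play of $\Ss$ occurs already in some $\Ss\cut n$, so $\litcount\Ss L = \Sup_n \litcount{\Ss\cut n}L$ in $\N^\infty$; combined with the infinitary-power and Splitting Lemmas (Lemmas~\ref{lemSplitting}, \ref{lemInfpow}) this gives $\Inf_n \prod_L \lm(L)^{\litcount{\Ss\cut n}L} \le \prod_L \lm(L)^{\litcount\Ss L} = \eval\lm\Ss$.

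\textbf{Main obstacle.} The delicate point is the case $\litcount\Ss L = \infty$, i.e.\ infinitely many plays of $\Ss$ share outcome $L$: one must verify that these infinitely many plays are distributed across the truncations so that $\litcount{\Ss\cut n}L \to \infty$, rather than some finite bound, as $n\to\omega$. This follows because two distinct plays of $\Ss$ with the same (finite) outcome must eventually diverge at an opponent node, and by König's lemma an infinite family of such plays cannot all have a uniformly bounded number of $R$-nodes — otherwise the relevant subtree of $\Ss$ would be a finitely-branching tree of bounded depth (measured in $R$-nodes, but between consecutive $R$-nodes plays could still be long, so this needs the observation that between two $R$-nodes a play of a winning strategy cannot loop, hence has bounded length). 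Pinning down this finiteness/König argument precisely — essentially re-deriving that winning strategies for fixed-point games have the right finiteness structure — is where the real work lies; the algebraic manipulations with $\Sup$, $\Inf$, $a^\infty$ are then routine given Lemmas~\ref{lemSplitting} and~\ref{lemInfpow}.
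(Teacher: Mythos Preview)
Your overall approach matches the paper's: for Part~(1) you reduce to showing that cuts eventually vanish (i.e., a uniform bound on the number of $R$-nodes along any play), and for Part~(2) you reduce to showing $\litcount\Ss L = \Sup_n \litcount{\Ss\cut n}L$ and then apply the Splitting Lemma. Two points deserve comment.

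For Part~(1), the uniform bound is the entire content of the argument, and you have only asserted it. The paper proves it by K\H onig's lemma: assuming no uniform bound, the subgraph of $\Ss$ consisting of nodes from which an $R$-node is reachable is infinite and finitely branching, hence contains an infinite path; this path visits infinitely many $R$-nodes and is therefore losing, contradicting $\Ss\in\WinStrat{\phi(\ta)}$. Your parenthetical ``that cannot happen for winning strategies of a least fixed point'' is exactly what must be argued, and the per-play observation that each winning infinite play has only finitely many $R$-visits does not by itself yield a \emph{uniform} bound over all plays.

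For Part~(2), your ``main obstacle'' is simpler than you make it. Every terminal node $\tau$ of $\Ss$ with outcome $L$ is a \emph{finite} path and therefore has some finite $|\tau|_R$; hence $\tau$ lies in $\Ss\cut n$ for all $n > |\tau|_R$. It follows immediately that if $\litcount\Ss L$ is finite then $\litcount{\Ss\cut n}L = \litcount\Ss L$ for large $n$, and if $\litcount\Ss L = \infty$ then for every $k$ we can pick $k$ such terminal nodes and choose $n$ larger than all their $R$-depths, giving $\litcount{\Ss\cut n}L \ge k$. No K\H onig argument is required here, and your auxiliary claim that ``between two $R$-nodes a play of a winning strategy cannot loop'' is false (inner greatest fixed points permit arbitrarily long, even infinite, segments between consecutive $R$-nodes) --- fortunately it is also unnecessary.
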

\begin{proof}
For (1), note that $\eval \lm {\Ss \cut n} = 0$ whenever $\Ss$ has a path with at least $n$ $R$-nodes
(so $\Ss \cut n$ contains a $\Cutsym$-node).
We show that there is a $k$ such that all paths of $\Ss$ have less than $k$ $R$-nodes.
Assume towards a contradiction that this is not the case.
Then consider the subgraph of $\Ss$ consisting of all nodes from which a path to an $R$-node exists.
This subgraph must then have paths of arbitrary length and by K\H onigs lemma (note that $\Ss$ is finitely branching), it must have an infinite path. This is a contradiction, as this infinite path would contain an infinite number of $R$-nodes and would thus be losing.
Hence $\eval \lm {\Ss \cut n} = \eval \lm \Ss$ for all $n \ge k$ and the claim follows.

For $(2)$, we first note that the truncations $\eval \lm {\Ss \cut n}$ indeed form a chain.
The reason is that $\lm(\Cutsym) = 1$ is the greatest element, so replacing subtrees of $\Ss$ by $\Cutsym$ leads to a larger provenance value.
Using the splitting lemma, the infimum can be written as follows (we can ignore the value $\lm(\Cutsym)$ appearing in the provenance value, as $1$ is also the neutral element).
\begin{align*}
    \Inf_{n < \omega} \eval \lm {\Ss \cut n} =
    \prod_{L \in \Lit_A} \Inf_{n < \omega} \lm(L)^{\litcount {\Ss \cut n} L} =
    \prod_{\mathclap{L \in \Lit_A}} \lm(L)^{c_L},
    \quad \text{where }
    c_L = \Sup_{n < \omega} \Litcount {\Ss \cut n} L
\end{align*}

The main observation is that each node of $\Ss$ is eventually contained in $\Ss \cut n$ (for sufficiently large $n$).
Consider a literal $L$.
If $\litcount \Ss L$ is finite, then for sufficiently large $n$, we have $\litcount {\Ss \cut n} L = \litcount \Ss L$ and thus $c_L = \litcount \Ss L$.
If $\litcount \Ss L = \infty$, then for each $k$ there is a sufficiently large $n$ such that $\litcount {\Ss \cut n} L \ge k$ and thus $c_L = \infty$, which closes the proof.
\end{proof}

\subsubsection*{The Puzzle lemma}

\begin{lemma}[Puzzle Lemma]
\label{thm:puzzlelemma}
Let $\phi = \gfpfml R \tx \theta \ty$, let $r$ be the arity of $R$ and let $\ta \in A^r$.
Let $\lm$ be an $\Sinf[X]$-interpretation extended by $\lm(\Cutsym) = 1$.
Let further $(\Ss_i)_{i < \omega}$ be a family of strategies in $\WinStrat{\phi(\ta)}$ such that $(\eval \lm {\Ss_i \cut i})_{i < \omega}$ is a descending chain.
Then there is a winning strategy $\Ss \in \WinStrat{\phi(\ta)}$ with $\eval \lm \Ss \ge \Inf_i {} \eval \lm {\Ss_i \cut i}$.
\end{lemma}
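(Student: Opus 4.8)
The plan is to assemble $\Ss$ by a K\H onig-style limit of finite approximations. Concretely, I would build an increasing sequence $\Ss^{(0)}\subseteq\Ss^{(1)}\subseteq\cdots$ of \emph{finite} partial Verifier strategies from $\phi(\ta)$, maintaining an infinite index set $I_m\subseteq\omega$ with the invariant that each $\Ss^{(m)}$ is a finite subtree of $\Ss_i\cut i$ for every $i\in I_m$ and that, moreover, every terminal (literal) leaf of $\Ss^{(m)}$ is still a terminal leaf of $\Ss_i\cut i$ (not replaced by $\Cutsym$). One starts with $\Ss^{(0)}$ the single root $\phi(\ta)$ and $I_0$ all sufficiently large $i$. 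To pass from $m$ to $m+1$, extend every non-terminal frontier node of $\Ss^{(m)}$ by one move: at a Verifier position the witness $\Ss_i\cut i$ prescribes one of finitely many successors, at a Falsifier position all successors are added, and at a fixed-point-unfolding or $R$-position the successor is forced --- here one uses that the $i\in I_m$ may be assumed large enough that these particular $R$-nodes are not yet cut, which is arranged by shrinking $I_m$ to an infinite subset. Since there are only finitely many possible one-step extensions of the finite tree $\Ss^{(m)}$, the pigeonhole principle yields an extension $\Ss^{(m+1)}$ realised by infinitely many $i\in I_m$, and those form $I_{m+1}$. The limit $\Ss=\bigcup_m\Ss^{(m)}$ is prefix-closed, makes a unique Verifier choice at every Verifier node, retains all Falsifier choices, and respects the forced moves, so it is a genuine Verifier strategy from $\phi(\ta)$ --- the ``puzzle'' glued from pieces of the $\Ss_i$.

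Next I would establish the value inequality. Extend $\pi\ext{\cdot}$ to the finite partial strategies by $\pi\ext{\Ss^{(m)}}=\prod_{L\in\Lit_A}\pi(L)^{a_L^{(m)}}$, where $a_L^{(m)}$ counts the terminal $L$-leaves of $\Ss^{(m)}$ (frontier leaves, like $\Cutsym$, contributing the factor $1$). As $m$ grows the exponents $a_L^{(m)}$ are non-decreasing, so, since multiplication is decreasing in an absorptive semiring, $(\pi\ext{\Ss^{(m)}})_{m<\omega}$ is a descending $\omega$-chain; by the Splitting Lemma~\ref{lemSplitting} applied across the finite set $\Lit_A$ together with the definition of the infinitary power, $\Inf_{m<\omega}\pi\ext{\Ss^{(m)}}=\prod_L\pi(L)^{\Sup_m a_L^{(m)}}=\pi\ext{\Ss}$. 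On the other hand, fixing for each $m$ some $i\in I_m$ large enough that no node of $\Ss^{(m)}$ is cut in $\Ss_i\cut i$, the invariant gives $a_L^{(m)}\le\litcount{\Ss_i\cut i}{L}$ for every literal $L$, whence $\pi\ext{\Ss^{(m)}}\ge\pi\ext{\Ss_i\cut i}\ge\Inf_{j<\omega}\pi\ext{\Ss_j\cut j}$, again because multiplication is decreasing. Taking the infimum over $m$ gives $\pi\ext{\Ss}\ge\Inf_{j<\omega}\pi\ext{\Ss_j\cut j}$, which is the claimed bound.

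It then remains to verify that $\Ss\in\WinStrat{\phi(\ta)}$, i.e.\ that all infinite plays of $\Ss$ are won by Verifier. An infinite play meeting $R$-positions infinitely often is winning, since the priority of $R$ is even and, by the priority condition, no variable occurring infinitely often along a play staying inside the subformula $\phi=\gfpfml R\tx\theta\ty$ can carry a smaller priority, so the least priority seen infinitely often is even. The remaining case --- an infinite play meeting $R$ only finitely often, hence eventually looping on nested fixed-point variables inside a single $\theta$-block --- is the delicate one, and I expect it to be the main obstacle: the block of $\Ss$ has been stitched together from fragments of several different $\Ss_i$, so one must rule out that this stitching creates a losing (odd-dominated) loop even though each $\Ss_i$ is winning. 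The natural way to handle this is to run the construction in tandem with the inductive treatment of $\theta$ in the overall proof of \cref{thmStrategyCharacterization}: within a block one invokes the characterization already available for the structurally simpler subformula $\theta$ and, recursively, the analogous puzzle arguments for the fixed points nested directly in $\theta$, so that the glued block-fragments are themselves limits of winning behaviour and no bad loop can arise; alternatively, one argues by a compactness argument that such a stagnating play of $\Ss$ would be consistent with infinitely many of the winning strategies $\Ss_i$, which is impossible. Granting this, the three parts together prove the lemma in $\Sinf[X]$, and the statement for arbitrary absorptive, fully continuous semirings then follows by the universal property of $\Sinf[X]$ (\cref{universality-of-Sinf}) and the fundamental property, transporting everything along a fully continuous homomorphism.
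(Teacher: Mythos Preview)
Your K\H onig-style limit construction and the value inequality are sound, and you correctly identify the case of an infinite play meeting $R$ only finitely often as the delicate one. Unfortunately, this case is a genuine gap, and neither of your suggested fixes closes it. The compactness argument does not go through: your invariant guarantees that every \emph{finite prefix} of such a play $\rho$ lies in $\Ss_i$ for all $i$ in the (infinite) set $I_m$, but since $\bigcap_m I_m$ may well be empty, there is no reason why the full infinite play $\rho$ should be consistent with any single $\Ss_i$. Concretely, imagine that inside a $\theta$-block Verifier has, at each step, the choice between ``stay in an odd-priority loop'' and ``escape to an $R$-node''; let $\Ss_i$ stay for $i$ steps and then escape. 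Each $\Ss_i$ is winning, and for every $m$ the pigeonhole step selects ``stay'' (chosen by all $i>m$), so the limit $\Ss$ stays forever and loses. Your inductive suggestion (``invoke the characterization for $\theta$'') does not obviously help either: the restriction of $\Ss$ to a $\theta$-block is precisely the object whose winning status is in doubt, and the available hypothesis for $\theta$ gives you information about values of strategies in $\WinStrat{\theta(\ta)}$, not a certificate that your particular glued limit belongs to that set.

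The paper's proof avoids exactly this difficulty by \emph{never} gluing pieces from different $\Ss_i$. Instead it fixes a single sufficiently large $i$, decomposes $\Ss_i\cut i$ into ``layers'' (slices between successive batches of $R$-nodes), and uses a counting argument on the problematic literals (those $L$ with $n_L<\infty$) to find one layer containing no problematic literal at all. From that layer it extracts finitely many ``puzzle pieces'' --- subtrees rooted and (mostly) leafed at $R$-nodes --- and builds $\Ss$ by taking the prefix of $\Ss_i\cut i$ up to that layer and then repeating the pieces ad infinitum. The winning argument is then clean: an infinite play either crosses infinitely many pieces (hence infinitely many $R$-nodes, so the minimal recurring priority is $\Omega(R)$, even), or it eventually stays inside the prefix or a single piece, in which case it is literally a play of the winning strategy $\Ss_i$. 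The value bound holds because the repeated layer contributes only literals with $n_L=\infty$. So the key idea you are missing is to work with \emph{one} $\Ss_i$ and repeat a well-chosen internal slice, rather than taking a diagonal limit across all $\Ss_i$.
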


For an intuition why this result is not obvious, we consider the following example.
The key problem is that the strategies $\Ss_i$ can all be different.
In particular, it can happen that for every $i$, the provenance value of the truncation $\Ss_i \cut i$ is larger than the value of the full strategy $\Ss_i$.
The insight of the lemma is that we can always use one of the truncations $\Ss_i \cut i$ (for sufficiently large $i$) to construct a strategy $\Ss$ with the desired property.
This construction has to be done carefully to ensure that the resulting strategy $\Ss$ is winning.

\ex
Consider the following setting:

\vspace{-.9\nbs}
\begin{center}
$\displaystyle \phi_{\text{infpath}}(u) = \gfpfml R x {\E y (E x y \land R y)} u$
\qquad\qquad
\begin{tikzpicture}[baseline, font=\small, node distance=1.5cm, yshift=1pt]
\node [anchor=south, dot, label={below:$u$}] (u) {};
\node [dot, right of=u, label={below:$v$}] (w) {};
\path[draw, ->, >=stealth', shorten <=2pt, shorten >=2pt]
(u) edge node [above] {$y$} (w)
(u) edge [loop above, out=-240, in=60, looseness=15] node [above] {$x$} (u)
(w) edge [loop above, out=-240, in=60, looseness=15] node [above] {$z$} (w);
\end{tikzpicture}
\end{center}

Let $\Ss_i$ be the strategy corresponding to the infinite path that cycles $i-1$ times via $x$, then uses edge $y$ and finally cycles via $z$.
The $i$-truncation then cuts off $\Ss_i$ after taking the edge $y$ and we obtain the provenance values

\vspace{-.5\nbs}
\[
    \eval \lm {\Ss_i} = x^i y z^\infty
    \qquad
    \text{and}
    \qquad
    \Inf_{i < \omega} \eval \lm {\Ss_i \cut i} = \Inf_{i < \omega} x^i y = x^\infty y.
\]

We see that the infimum only contains the variables $x$ and $y$, although there is no winning strategy with this provenance value.
Instead, we obtain $\Ss$ by repeating the cycling part of any truncation $\Ss_i \cut i$ (without the problematic literal $y$).
This results in the strategy $\Ss$ with value $\eval \lm \Ss = x^\infty$ that corresponds to the path always cycling via $x$.
This path is not consistent with any of the strategies $\Ss_i$.
In general, we have to make sure that the additional plays in $\Ss$ (which result from the repetition of $\Ss_i \cut i$) are always winning.
\eex

As a first step to prove the Puzzle Lemma, we apply the splitting lemma to the infimum and obtain:
\[
    \Inf_{i < \omega} \eval \lm {\Ss_i \cut i} =
    \prod_{\mathclap{L \in \Lit_A}} \lm(L)^{n_L}
    \quad \text{with} \quad
    n_L = \Sup_{i < \omega} \Litcount {\Ss_i \cut i} L
\]
Literals with $n_L = \infty$ (such as the edge $x$ in the example) can appear arbitrarily often in $\Ss$, so they do not impose any restrictions.
If $n_L < \infty$, then we must have $\litcount S L \le n_L$ to guarantee that the provenance value of $\Ss$ is larger than the infimum. We therefore call literals $L$ with $n_L < \infty$ (such as the edge $y$ in the example above) \emph{problematic}.
The outline of the proof is as follows:
\begin{itemize}
\item We decompose the trees $\Ss_i \cut i$ into \emph{layers} based on the appearance of $R$-nodes.
\item We choose a sufficiently large $i$ such that there is one such layer in $\Ss_i \cut i$ which does not contain any problematic literals at all.
\item We construct $\Ss$ by first following $\Ss_i \cut i$ and then repeating this layer ad infinitum.
For the construction, we collect several subtrees (which we call \emph{puzzle pieces}) from this layer which we can then join together to form the repetition.
\item The form of the puzzle pieces ensures that $\Ss$ is winning. In particular, we only join the pieces at $R$-nodes. Paths through infinitely many pieces are thus guaranteed to satisfy the parity condition.
\end{itemize}

\smallskip\noindent{\bf Decomposition into layers.}
Fix an $i$ and let $\Ss_i \cut i = (W,F)$.
We call each node $\tau \in W$ with $\Pos(\tau) = R\ta$ (for any $\ta \in A^r$) an $R$-node.
If an $R$-node happens to be a leaf, we call it an $R$-leaf.
For each $n \ge 0$, we define the sets
\begin{align*}
    W_{\le n} = \{ \tau \in W \mid |\tau|_R \le n \}, \quad
    W_{\le n}^+ = W_{\le n} \cup \{ \tau v \in W \mid \tau \in W_{\le n}\text{, } v \in V \}
\end{align*}

We sort the nodes $\tau \in W$ into layers based on the number of $R$-nodes on the path to $\tau$.
For now, think of a layer as a forest in which all roots and most of the leaves are $R$-nodes. The $R$-leaves of one layer are the root nodes of the next layer; apart from this layers do not overlap.
The constant $k$ controls the thickness of the layer (the maximal number of $R$-nodes that can occur on paths through the layer).
For any $j \ge 1$, the \emph{$j$-th layer} is the subgraph of $\Ss_i \cut i$ induced by the node set
\[
    W_j = W^+_{\le \, j \cdot k} \setminus W_{\le (j-1) \cdot k}
    \quad \text{where} \quad
    k = |A|^r + 2.
\]

Note that each tree in a layer is a strategy (i.e., satisfies conditions (1)-(3) of \cref{def:strategyAsTree}) except for its leaves.
See \cref{fig:strategyLayerPieces} for a visualization.

\begin{figure}
\tikzset{
    p0/.style={draw=black, rounded rectangle, inner sep=1pt, minimum size=.35cm, minimum width=.8cm},
    p1/.style={draw=black, rectangle, inner sep=1pt, minimum size=.35cm, minimum width=.6cm},
    tt/.style={draw=black, rectangle, dashed, inner sep=1pt, minimum size=.35cm, minimum width=.6cm},
    mm/.style={fill=bggray},
}
\centering
\begin{tikzpicture}[font=\scriptsize, node distance=.8cm]
\node [p0] (c0) {$\phi(\ta)$};
\node [p1, below of=c0] (c1) {};
\node [p0, mm, below of=c1, xshift=-2.5cm] (l0) {$R\ta$};
\node [p1, mm, below of=l0] (l1) {};
\node [p0, mm, below of=l1, xshift=-1cm] (ll0) {$R\ta$};
\node [p1, below of=ll0] (ll1) {};
\node [p0, mm, below of=l1, xshift=+1cm] (lr0) {};
\node [p0, mm, below of=lr0] (lr1) {};
\node [p0, mm, below of=lr1] (lr2) {};
\node [p0, mm, below of=lr2] (lr3) {};
\node [p0, below of=c1, xshift=+2.5cm] (r0) {};
\node [p1, below of=r0] (r1) {};
\node [p0, mm, below of=r1, xshift=-1cm] (rl0) {$R\tb$};
\node [p1, mm, below of=rl0] (rl1) {};
\node [p0, mm, below of=rl1, xshift=-.6cm] (xl0) {$R \tb$};
\node [p1, below of=xl0] (xl1) {};
\node [p0, mm, below of=rl1, xshift=+.6cm] (xr0) {};
\node [p0, mm, below of=xr0] (xr1) {$R\ta$};
\node [p0, below of=r1, xshift=+1cm] (rr0) {};
\node [tt, below of=rr0] (rr1) {};

\path [draw=black, dotted, shorten <=3pt]
(ll1) edge ++(-60:.7cm) (ll1) edge ++(-120:.7cm)
(xl1) edge ++(-60:.7cm) (xl1) edge ++(-120:.7cm)
(xr1) edge ++(-90:.6cm)
(lr3) edge ++(-90:.6cm)
;

\path [draw=black, ->, >=stealth', shorten <=2pt, shorten >=2pt]
(c0) edge (c1)
(c1) edge (l0) (c1) edge (r0)
(l0) edge (l1)
(l1) edge (ll0) (l1) edge (lr0)
(ll0) edge (ll1)
(lr0) edge (lr1) (lr1) edge (lr2) (lr2) edge (lr3)
(r0) edge (r1)
(r1) edge (rl0) (r1) edge (rr0)
(rl0) edge (rl1)
(rl1) edge (xl0) (rl1) edge (xr0)
(xl0) edge (xl1)
(xr0) edge (xr1)
(rr0) edge (rr1)
;
\end{tikzpicture}
\caption{A visualization of a strategy. The gray nodes form the first layer (for $k=1$).
The two trees in this layer are puzzle pieces, the left one has an infinite winning path.}
\label{fig:strategyLayerPieces}
\end{figure}

\smallskip\noindent{\bf Avoiding problematic literals.}
Let $n = \sum \{n_L \mid L \in \Lit_A\text{, } n_L < \infty\}$ be the sum of the problematic $n_L$,
which is an upper bound on the number of problematic literals appearing in any truncation $\Ss_i \cut i$.
Note that $n$ is always finite. We now choose any $i$ such that:
\[
    i \;\ge\; (n+1) \cdot k = (n+1) \cdot (|A|^r + 2)
\]

From now on, we only work with $\Ss_i \cut i = (W, F)$.
Consider the layers $W_1, \dots, W_{n+1}$ of $\Ss_i \cut i$.
First assume that there is a $j$ such that $W_j = \emptyset$.
By definition of the layers, we thus have $|\tau|_R \le (j-1) \cdot k < i$ for all $\tau \in \Ss_i \cut i$.
But this means that each path in $\Ss_i \cut i$ has less than $i$ $R$-nodes.
By definition of the truncation, this means that $\Ss_i \cut i = \Ss_i$.
In this case we can simply set $\Ss = \Ss_i$ and are done.

Otherwise, there are $n+1$ nonempty layers and at most $n$ occurrences of problematic literals.
Hence there must be a layer $j$ such that $W_j$ does not contain any problematic literals.
In the following, we concentrate only on this layer $W_j$.

\smallskip\noindent{\bf Collecting puzzle pieces.}
We want to build the strategy $\Ss$ from the prefix of $\Ss_i \cut i$ up to layer $W_j$ and then continue by always repeating the layer $W_j$. Because $W_j$ does not contain any problematic literals, this eventually yields $\eval \lm \Ss \ge \eval \lm {\Ss_i \cut i}$ as required.

Let $T$ be one of the components in $W_j$, so $T$ is a tree.
We call a path in $T$ \emph{winning} if it is infinite or ends in a terminal position, so it corresponds to a (suffix of a) play consistent with $\Ss_i$.
Paths ending in $R$-leaves of $T$ (which could be continued by leaving the layer $W_j$) are not considered to be winning.

\begin{defn}
A \emph{puzzle piece} $P = (W', F')$ is a subtree of $W_j$ such that
\begin{enum}
\item[(a)] The root of $P$ is an $R$-node,
\item[(b)] For each inner node $\tau \in P$, we have $\tau F' = \tau F$ ($P$ contains all successors),
\item[(c)] Each maximal path through $P$ is either winning or ends in an $R$-node.
\end{enum}
A puzzle piece $P$ with root $\tau$ \emph{matches} a node $\tau' \in W_j$ if $\Pos(\tau) = \Pos(\tau')$.
A \emph{complete puzzle} is a set of puzzle pieces such that for each piece in the set and all $R$-leaves $\tau$ of this piece, the set contains a puzzle piece that matches $\tau$.
\end{defn}

First observe that $T$ itself is a puzzle piece:
Each maximal path through $T$ which does not end in an $R$-node must visit less than $k$ $R$-nodes.
If we append this path to the unique path from the root of $\Ss_i \cut i$ to the root of $T$, then the resulting path contains less than $i$ $R$-nodes.
Hence the path is not truncated in $\Ss_i \cut i$, so it is also a maximal path of $\Ss_i$ and thus winning.
However, a single piece does not make a complete puzzle.
Instead, we collect smaller pieces from $T$ by the following process:

\begin{enumerate}
\item Initialize $L = \{ \hat \tau \}$ where $\hat \tau$ is the root of $T$ (which is an $R$-node).
\item Pick a node $\tau \in L$ and remove it from $L$ (if $L$ is empty, terminate).
\item If we have already found a puzzle piece matching $\tau$, go back to step (2).

\item Let $P$ be the subgraph of $T$ induced by the following set $W'$ of nodes.
Then $P$ is a puzzle piece matching $\tau$ and we add it to our set of pieces. We set
\[
W' \coloneqq \{ \tau' \in T \mid \tau \prefix \tau' \text{ and there is no $R$-node $\tau''$ with $\tau \prefixneq \tau'' \prefixneq \tau'$)} \}.
\]
\item For each $\ta \in A^r$: If $P$ has a leaf $\tau'$ with $\Pos(\tau') = R \ta$, add one such leaf $\tau'$ to $L$.
\item Go back to step (2).
\end{enumerate}

If the definition of $P$ in step (4) is correct, then this process clearly terminates after finding at most $|A|^r$ puzzle pieces and the resulting set of pieces is a complete puzzle.
For step (4), recall the definition of $W_j$.
For the root of $T$, we have $|\hat \tau|_R = (j-1)k + 1$ and $W_j$ contains in particular the nodes $\tau$ with $(j-1)k < |\tau|_R \le jk$.

Assume that in (2), we picked a node $\tau$ with $|\tau|_R = n$ (for some $n$).
By definition of $W'$, the piece $P$ only contains nodes $\tau'$ with $|\tau'|_R \le n+1$.
In particular, the leaves that we add to $L$ in step (5) all satisfy $|\tau'|_R \le n+1$.
We start with $|\hat \tau|_R = (j-1)k + 1$ and perform at most $k-2 = |A|^r$ iterations, hence we always have $|\tau|_R < jk$ for all $\tau \in L$.

This guarantees that $P$ is always a puzzle piece in step (4):
Inner nodes of $P$ cannot be $R$-nodes and hence $P$ always contains all successors of inner nodes, so (b) is satisfied.
For (c), assume towards a contradiction that there is a maximal path through $P$ which does not end in an $R$-node and is not winning.
This path is also a path in $\Ss_i$ and because all infinite paths of $\Ss_i$ are winning, the path must be finite. Because all terminal positions in $\Ss_i$ are winning, the path must end in a leaf of $T$ which is not a leaf of $\Ss_i$. But such leaves of $T$ must be $R$-nodes by definition of the layers, which is a contradiction. Hence (c) holds as well and $P$ is a puzzle piece.

We proceed in the same way for all other components of $W_j$ and obtain a complete puzzle for each component.
The overall result is the union of all these puzzles, which is again a complete puzzle.
An illustration of such a puzzle (as individual pieces and in assembled form) is shown in \cref{fig:puzzlePieces}; the next step is to perform the assembly.

\begin{figure}
\tikzset{
    border/.style={black, line cap=round, dash pattern=on 0pt off 1.5pt, thick},
    connector/.style={black, dash pattern=on 1.5pt off 1.5pt}, 
    root/.pic={
        \draw[draw=none, fill=white] (0,0) circle (.275cm);
        \draw[connector] (0,0) circle (.27cm);
        \node at (0,0) {#1};
    },
    leaf/.pic={
        \draw[draw=none, fill=white] (0,0) circle (.275cm);
        \draw[connector] (0,0) circle (.27cm);
        \node at (0,0) {#1};
    },    
    pieceA/.pic={
        \draw[border] (0,0) -- (.7,0) -- (1.5,-1.2) -- (-1.8,-1.2) -- (-.7,0) -- cycle;
        \coordinate (-B) at (-1.1,-1.2);
        \coordinate (-C) at (0.9,-1.2);
        \pic at (0,0) {root=$\ta$};
        \pic at (-B) {leaf=$\tb$};
        \pic at (-C) {leaf=$\tuple c$};
    },
    pieceB/.pic={
        \draw[border] (0,0) -- (0.7,0) -- (0.9,-.9) -- (-0.3,-.9) -- (-0.5,-1.8) -- (-1.8,-1.8) -- (-.7,0) -- cycle;
        \coordinate (-D) at (0.3,-0.9);
        \coordinate (-A) at (-1.25,-1.8);
        \pic at (0,0) {root=$\tb$};
        \pic at (-D) {leaf=$\tuple d$};
        \pic at (-A) {leaf=$\ta$};
    },
    pieceC/.pic={
        \draw[border] (0,0) -- (0.5,0) -- ++(-60:3cm);
        \draw[border] (0,0) -- (-0.5,0) -- (-0.7,-1) -- (0.5,-1) -- node [sloped,yshift=.22cm,pos=1] {$\cdots$} ++(-60:2cm);
        \coordinate (-C) at (0,-1);
        \pic at (0,0) {root=$\tuple c$};
        \pic at (-C) {leaf=$\tuple c$};
    },
    pieceD/.pic={
        \draw [border] (0,0) -- (0.2,0) -- (0.75,-1.8) -- (0.35,-1.8) -- (0.1,-1) -- (-0.5,-1) -- (-0.2,0) -- cycle;
        \pic at (0,0) {root=$\tuple d$};
    },
    dots/.pic={
        \node [anchor=north,font=\Large] at (0,-.15cm) {$\vdots$};
    }
}
\centering
\begin{tikzpicture}[font=\small,scale=0.8,every pic/.style={scale=0.9},baseline,yshift=-6pt]
\pic at (0,0) {pieceA};
\pic at (0.1,-2) {pieceC};
\pic at (-.9,-4) {pieceB};
\pic at (0.8,-5.6) {pieceD};
\end{tikzpicture}
\qquad
\vrule
\qquad
\begin{tikzpicture}[font=\small,every pic/.style={scale=0.9},baseline]
\pic (A) at (0,0) {pieceA};
\pic (B) at (A-B) {pieceB};
\pic (C) at (A-C) {pieceC};
\pic (CC) at (C-C) {pieceC};
\pic (CCC) at (CC-C) {pieceC};
\pic (CCCC) at (CCC-C) {pieceC};
\pic (AA) at (B-A) {pieceA};
\pic (D) at (B-D) {pieceD};
\pic (BB) at (AA-B) {pieceB};
\pic (C2) at (AA-C) {pieceC};
\pic (DD) at (BB-D) {pieceD};
\pic at (BB-A) {dots};
\pic at (C2-C) {dots};
\pic at (CCCC-C) {dots};

\begin{scope}[on background layer]
\draw [draw=wingray, line width=4pt] (0,0) -- (A-B) -- (B-A) -- (AA-B) -- (BB-A) -- ++(-135:.7cm);
\draw [draw=wingray, line width=4pt, line cap=round] (0,0) -- (A-B) -- (B-D) -- ++(-95:.8cm);
\draw [draw=wingray, line width=4pt, line cap=butt] (0,0) -- (A-C) -- (C-C) -- ++(0.07,0) -- ++(-60:3.5cm);
\end{scope}
\end{tikzpicture}
\qquad
\caption{A schematic illustration of the pieces in a complete puzzle and their infinite repetition.
We abbreviate $R$-nodes $R \ta$ (at which we join pieces) by just $\ta$. The gray lines indicate three paths: One through infinitely many pieces, a finite one and an infinite one that only visits finitely many pieces (from left to right). All three are winning by construction.}
\label{fig:puzzlePieces}
\end{figure}

\smallskip\noindent{\bf Completing the puzzle.}
We now have a complete puzzle with a matching piece for all root nodes of $W_j$ (these are precisely the $R$-leaves of the preceding layer $W_{j-1}$).
All that remains is to join the pieces together to form the strategy $\Ss$.
Note that puzzle pieces can contain infinite paths or even infinitely many $R$-leaves.
We therefore construct $\Ss$ recursively layer by layer:
\begin{itemize}
\item $S_0$ is the subgraph induced by $W^+_{\le (j-1)k}$, i.e., the prefix of $\Ss_i \cut i$ up to layer $W_j$.
By definition of the layers, all leaves of $S_0$ are either leaves of $\Ss_i$ or $R$-leaves of $W_{j-1}$.

\item Given $S_{n}$, we construct $S_{n+1}$ as follows.
Recall that for $\tau = v_0 \dots v_l \in S_n$, we write $|\tau| = l$ for its length (which equals the depth of $\tau$ in $S_n$).
Consider the set
\[ X = \{ \tau \in S_n \mid \tau \text{ is an $R$-leaf of $S_n$ with } |\tau| = n \} \]

Because $S_n$ is finitely branching (as we construct it from subtrees of $\Ss_i \cut i$), this set is finite.
Moreover, each $\tau \in X$ is either the $R$-leaf of a puzzle piece or, initially, the root of one component of $W_j$. In both cases, the complete puzzle contains a piece matching $\tau$.
The tree $S_{n+1}$ results from $S_n$ by replacing all leaves $\tau \in X$ with the unique puzzle piece matching $\tau$. Then $S_{n+1}$ has no more $R$-leaves at depth $n$ (note that the puzzle pieces we collected always consist of at least two nodes).

A technical remark: To fit our definition of strategies, our construction must yield a subtree of the tree unraveling.
To see that this is the case we can, whenever we replace $\tau \in X$ by a piece $P$, rename the nodes of $P$ accordingly: If $\hat \tau$ is the root of $P$, we rename each node $\hat \tau \tau' \in P$ to $\tau \tau'$ when adding it to $S_{n+1}$.
Since $\tau$ and $\tau'$ are paths in the game graph, also $\tau \tau'$ is a path in the game and thus a node of the tree unraveling.

\item We define $\displaystyle \Ss = \bigcup_{n < \omega} S_n$, so $\Ss$ contains no more $R$-leaves.
\end{itemize}

Then $\Ss$ is a strategy: Each node $\tau \in \Ss$ corresponds to a node $\tau' \in \Ss_i \cut i$ (either $\tau'$ is an inner node of a puzzle piece, or $\tau' \in S_0$) and $\tau$ has the same successors as $\tau'$.
Moreover, the provenance value is $\eval \lm \Ss \ge \eval \lm {\Ss_i \cut i} \ge \Inf_i \eval \lm {\Ss_i \cut i}$ as desired, because the repetition of puzzle pieces does not contain any problematic literals.
Lastly, $\Ss$ is a winning strategy: Consider a play consistent with $\Ss$ and the corresponding maximal path through $\Ss$. If the path is finite, it ends in a leaf of $\Ss$ which corresponds to a leaf of $\Ss_i$ and is therefore winning.
If the path visits infinitely many puzzle pieces (whose root nodes are $R$-nodes), then it visits infinitely many $R$-nodes and is thus winning by the parity condition (as $R$ belongs to the outermost fixed-point formula in $\phi$).
If the path is infinite and stays in $S_0$, then it corresponds to an infinite path of $\Ss_i \cut i$ and is thus winning.
Otherwise, the path is infinite, leaves $S_0$ at some point and visits only finitely many puzzle pieces.
This means that it must from some point on stay in one piece, so it is winning by definition of puzzle pieces.
We have therefore completed the Puzzle Lemma.
\qed

\subsubsection*{The Main Proof}

\tikzset{
    p0/.style={draw=black, rounded rectangle, inner sep=2pt, minimum size=.55cm, minimum width=1cm},
    p1/.style={draw=black, rectangle, inner sep=2pt, minimum size=.50cm},
    prio/.style={font=\scriptsize},    
    tt/.style={draw=gray, fill=white, rectangle, dashed, inner ysep=2pt, inner xsep=4pt, minimum size=.55cm},
    gg/.style={draw=none, rectangle, inner sep=2pt},
}

We are now ready to prove the central result for $\Sinf[X]$:
\[
    \eval \lm {\phi(\ta)} = \Sup \{ \eval \lm \Ss \mid \Ss \in \WinStrat{\phi(\ta)} \}
\]

The interesting part is the proof for fixed-point formulae
such as $\phi(\ta) = \lfpfml R \tx \theta \ta$ or $\gfpfml R \tx \theta \ta$.
A strategy $\Ss \in \WinStrat{\phi(\ta)}$ may then look as in the picture below.
We write $L/\!\dots$ to denote either a literal or an infinite path (without occurrences of $R$-nodes).

\begin{center}
\begin{tikzpicture}[font=\small]
\node [p0] (0) {$\phi(\ta)$};
\node [gg] (1) at ($(0)+(1.8cm,0)$) {$\theta(\ta)$}; 
\node [gg] (2L) at ($(1)+(2cm,1.15)$) {$L/\!\dots$}; 
\node [p0,minimum height=.5cm] (2b) at ($(1)+(2cm,0)$) {$R\tb$}; 
\node [p0,minimum height=.5cm] (2a) at ($(1)+(2cm,-1.15)$) {$R\ta$}; 

\node [gg] (3b) at ($(2b)+(2cm,0)$) {$\theta(\tb)$}; 
\node [gg] (3a) at ($(2a)+(2cm,0)$) {$\theta(\ta)$}; 

\node [gg] (4L) at ($(3b)+(2cm,0.7cm)$) {$L/\!\dots$}; 
\node [p0,minimum height=.5cm] (4b) at ($(3b)+(2cm,0)$) {$R\tb$}; 
\node [p0,minimum height=.5cm] (4a) at ($(3b)+(2cm,-0.7)$) {$R\ta$}; 

\node [gg] (5b) at ($(4b)+(1.8cm,0)$) {$\theta(\tb)$}; 
\node [gg] (5a) at ($(4a)+(1.8cm,0)$) {$\theta(\ta)$}; 

\path [->, >=stealth', shorten <=2pt, shorten >=2pt]
(0) edge (1)
(2a) edge (3a) (2b) edge (3b)
(4a) edge (5a) (4b) edge (5b);

\path [->, >=stealth', shorten <=2pt, shorten >=2pt,
       decoration = {snake,amplitude=2pt,pre length=3pt,post length=4pt},
       every edge/.append style={decorate}]
(1) edge (2L) (1) edge (2b) (1) edge (2a)
(3b) edge (4L) (3b) edge (4b) (3b) edge (4a);

\path [-, shorten <=1.5pt, densely dotted]
\foreach \x in {3a,5a,5b} {
    (\x) edge ++(15:.8cm)
    (\x) edge ++(0:.8cm)
    (\x) edge ++(-15:.8cm)
};

\coordinate (mid1) at ($(2b)!0.6!(3b)$);
\coordinate (top1) at ($(mid1)+(0,1.3cm)$);
\coordinate (bot1) at ($(mid1)+(0,-1.55cm)$);
\coordinate (bot1s) at ($(mid1)+(0,-1.7cm)$);
\draw [-, thick, gray, dashed] (bot1) -- (top1);
\node [rotate=90,yshift=-0.5pt,gray!70!black] at (bot1s) {$\Cutsym$};

\coordinate (mid2) at ($(4b)!0.5!(5b)$);
\coordinate (top2) at ($(mid2)+(0,1.3cm)$);
\coordinate (bot2) at ($(mid2)+(0,-1.55cm)$);
\coordinate (bot2s) at ($(mid2)+(0,-1.7cm)$);
\draw [-, thick, gray, dashed] (bot2) -- (top2);
\node [rotate=90,yshift=-0.5pt,gray!70!black] at (bot2s) {$\Cutsym$};

\begin{scope}[on background layer]
\coordinate (2asouth) at ($(2a.south)+(0,-.1cm)$);
\draw [draw=none, fill=bggray, rounded corners=5pt]
(1.north west) -- (2L.north west) -- (2L.north east) -- (2asouth -| 2L.east) -- (2asouth -| 2a.west) -- (1.south west) -- cycle;

\coordinate (4asouth) at ($(4a.south)+(0,-.1cm)$);
\coordinate (tr) at ($(4L.north east)+(4cm,.3cm)$);
\coordinate (br) at ($(4asouth -| 4L.east)+(4cm,-.3cm)$);
\draw [draw=none, fill=bggray, rounded corners=5pt]
($(3b)!0.6!(2b)$) -- (4L.north west) -- (tr) -- (br) -- (4asouth -| 4a.west) -- cycle;
\end{scope}

\coordinate (brace1right) at (2L.east |- 2a.south);
\coordinate (brace1left) at (brace1right -| 1.west);
\draw[decoration={brace, mirror, raise=0.6cm, amplitude=.2cm}, decorate]
(brace1left) --
node[below,anchor=north,align=left,font=\scriptsize,yshift=-.8cm]
{Winning strategy $\Ss_\theta$\\for the game $\Game(\theta(\ta))$.}
(brace1right);

\coordinate (brace2right) at (2a.south -| br);
\coordinate (brace2left) at ($(brace1right)+(0.4cm,0)$); 
\draw[decoration={brace, mirror, raise=0.6cm, amplitude=.2cm}, decorate]
(brace2left) -- 
node[below,anchor=north,align=left,font=\scriptsize,yshift=-.8cm]
{Winning strategies $\Ss_\tb$, $\Ss_\ta$ (or their truncations) from\\
$\phi(\tb)$ (highlighted) and $\phi(\ta)$, except for the root node.}
(brace2right);
\end{tikzpicture}
\end{center}

\vspace{-.5\nbs}
The strategy $\Ss$ must first move to $\theta(\ta)$ and thus contains a winning strategy from $\theta(\ta)$ in $\Game(\phi)$. If we only consider the strategy from $\theta(\ta)$ up to the first occurrence of an $R$-node, as indicated above, we obtain a winning strategy for the game $\Game(\theta(\ta))$.
Note that in $\Game(\theta(\ta))$, $R$-nodes are terminals and are thus winning.

In $\Game(\phi(\ta))$, these $R$-nodes are not terminals.
Hence $\Ss$ must further contain substrategies for these $R$-nodes ($R \tb$ and $R \ta$ above).
Because the positions $\phi(\tb)$ and $R \tb$ must both be followed by $\theta(\tb)$, we can view the substrategy from $R\tb$ as a winning strategy $\Ss_\tb$ for $\Game(\phi(\tb))$ (as highlighted above).

We thus see that each winning strategy $\Ss$ in $\Game(\phi(\ta))$ can be decomposed into a prefix $\Ss_\theta$ which we can identify with a winning strategy for $\Game(\theta(\ta))$ and, for all $R$-leaves of $\Ss_\theta$, substrategies which are winning strategies in $\Game(\phi)$.
Conversely, every winning strategy $\Ss_\theta$ for $\Game(\theta(\ta))$ can be combined with winning strategies from $\phi(\ta), \phi(\tb), \dots$ for all the $R$-leaves of $\Ss_\theta$ to form a winning strategy in $\Game(\phi(\ta))$.

If we build the strategy by starting with $\Ss_\theta$ but then appending the truncations $\Ss_\ta \cut n$ and $\Ss_\tb \cut n$ instead of $\Ss_\ta, \Ss_\tb$ (as indicated by the dashed lines in the picture),
then the result is the $n+1$-truncation $\Ss \cut {n+1}$ of a winning strategy $\Ss \in \WinStrat{\phi(\ta)}$,
because $\Ss_\theta$ contains at most one $R$-node on each path.
We exploit this observation in an inductive proof that relates the $n$-truncations of winning strategies with the $n$-th step of the fixed-point iteration.

\begin{proof}[Proof of \cref{thmStrategyCharacterization} in {$\Sinf[X]$}]
Induction on the negation normal form of $\phi(\ta)$:
\begin{itemize}
\vspace{.25\nbs}
\item $\phi(\ta) = L \in \Lit$: Then $\Game(\phi(\ta))$ consists only of a terminal position (which is winning). There is only one (trivial) strategy with $\eval \lm \Ss = \lm(L) = \eval \lm {\phi(\ta)}$.

\vspace{.5\nbs}
\item \begin{minipage}[t]{.72\linewidth}
$\phi(\ta) = \phi_1(\ta) \lor \phi_2(\ta)$.
The game $\Game(\phi(\ta))$ is shown on the right.
Each strategy $\Ss$ for $\Game(\phi(\ta))$ makes a unique choice at $\phi(\ta)$ and thus either consists of a strategy $\Ss_1$ for $\Game(\phi_1(\ta))$ or a strategy $\Ss_2$ for $\Game(\phi_2(\ta))$, but not both.
Conversely, a strategy $\Ss_i$ for $\Game(\phi_i(\ta))$ lifts to a strategy $\Ss$ from $\phi(\ta)$ (for $i \in \{0,1\}$). We thus have:
\end{minipage}
\hfill
\begin{minipage}[t]{.23\linewidth}
\centering
\raisebox{-1.1\height}{\begin{tikzpicture}[node distance=1.2cm, font=\small]
\node [p0] (0) {$\phi(\ta)$};
\node [gg,below of=0, xshift=-.9cm] (1) {$\Game(\phi_1(\ta))$};
\node [gg,below of=0, xshift=+.9cm] (2) {$\Game(\phi_2(\ta))$};
\path [->, >=stealth', shorten <=3pt, shorten >=3pt] (0) edge (1) (0) edge (2);
\end{tikzpicture}}
\end{minipage}
\begin{align*}
\Sup \{ \eval \lm \Ss \mid \Ss \in \WinStrat{\phi(\ta)} \} &=
\Sup \{ \eval \lm {\Ss_i} \mid \Ss_i \in \WinStrat{\phi_i(\ta)} \text{, } i \in \{0,1\} \} \\ &=
\Sup \{ \eval \lm {\Ss_1} \mid \Ss_1 \in \WinStrat{\phi_1(\ta)} \} \; \join \; 
\Sup \{ \eval \lm {\Ss_2} \mid \Ss_2 \in \WinStrat{\phi_2(\ta)} \} \\ &\eqIH
\eval \lm {\phi_1(\ta)} \join \eval \lm {\phi_2(\ta)} =
\eval \lm {\phi_1(\ta)} + \eval \lm {\phi_2(\ta)} =\eval \lm {\phi(\ta)}
\end{align*}

\vspace{.3\nbs}
\item \begin{minipage}[t][][t]{.72\linewidth}
$\phi(\ta) = \phi_1(\ta) \land \phi_2(\ta)$.
The reasoning is similar:
Each strategy $\Ss$ for $\Game(\phi(\ta))$ consists of both a strategy $\Ss_1$ for $\Game(\phi_1(\ta))$ and a strategy $\Ss_2$ for $\Game(\phi_2(\ta))$.
The converse direction (all $\Ss_1$ and $\Ss_2$ together induce a strategy $\Ss$) holds as well.
\end{minipage}
\hfill
\begin{minipage}[t][][t]{.23\linewidth}
\centering
\raisebox{-0.9\height}{\begin{tikzpicture}[node distance=1.2cm, font=\small]
\node [p1] (0) {$\phi(\ta)$};
\node [gg,below of=0, xshift=-.9cm] (1) {$\Game(\phi_1(\ta))$};
\node [gg,below of=0, xshift=+.9cm] (2) {$\Game(\phi_2(\ta))$};
\path [->, >=stealth', shorten <=3pt, shorten >=3pt] (0) edge (1) (0) edge (2);
\end{tikzpicture}}
\end{minipage}

\vspace{\nbs}
If $\Ss$ consists of the two strategies $\Ss_1$ and $\Ss_2$, then we further have $\eval \lm \Ss = \eval \lm {\Ss_1} \cdot \eval \lm {\Ss_2}$ by definition of the provenance value and \cref{lemInfpow}.
The claim follows by induction and continuity%
\footnote{Notice that here we consider suprema of arbitrary sets.
It follows from the considerations in \cite{Markowsky76}
that if multiplication preserves suprema of finite sets and of chains, it also preserves suprema of arbitrary sets.
In idempotent semirings, finite suprema are simply finite sums which are preserved by distributivitiy.}
of $\Sinf[X]$:
\begin{align*}
\eval \lm {\phi_1(\ta)} \cdot \eval \lm {\phi_1(\ta)} &\eqIH
\Sup \{ \eval \lm {\Ss_1} \mid \Ss_1 \in \WinStrat{\phi_1(\ta)} \} \cdot \Sup \{ \eval \lm {\Ss_2} \mid \Ss_2 \in \WinStrat{\phi_2(\ta)} \} \\
&= \Sup \{ \eval \lm {\Ss_1} \cdot \eval \lm {\Ss_2} \mid \Ss_i \in \WinStrat{\phi_i(\ta)}\text{ for }i \in \{1,2\} \} \\
&= \Sup \{ \eval \lm \Ss \mid \Ss \in \WinStrat{\phi(\ta)} \}
\end{align*}

\item
The cases for $\E$ and $\A$ follow by the same arguments (with $|A|$ instead of 2 child nodes).

\item
For fixed-point formulae, we use the decomposition into $\Ss_\theta$ and $\Ss_\ta, \Ss_\tb$ as motivated above. Consider the fixed-point iteration $(f_\beta)_{\beta \in \On}$ for $\phi = \lfpfml R \tx \theta \ty$ or $\phi = \gfpfml R \tx \theta \ty$, where $R$ has arity $r$.
We relate this iteration to strategy truncations.
To this end, we set $\lm(\Cutsym) = 0$ in case of $\phi = \lfpfml R \tx \theta \ty$ and $\lm(\Cutsym) = 1$ for $\phi = \gfpfml R \tx \theta \ty$.
We split the remaining proof into two claims:
\begin{claim*}[1]
  For all $n < \omega$ and $\ta \in A^r$, we have
  $\displaystyle f_n(\ta) = \Sup \{ \eval \lm {\Ss \cut n} \mid \Ss \in \WinStrat{\phi(\ta)} \}$.
\end{claim*}
\begin{claim*}[2]
  For all $\ta \in A^r$, we further have
  $\displaystyle f_\omega(\ta) = \Sup \{ \eval \lm \Ss \mid \Ss \in \WinStrat{\phi(\ta)} \}$.
\end{claim*}
We have already shown that the fixed-point iteration has closure ordinal $\omega$,
hence $\pi \ext {\phi(\ta)} = f_\omega(\ta)$ and the claims suffice to close the proof.
We prove both claims below.
\end{itemize}

\begin{claimproof}[Proof of Claim (1)]
For $n=0$, we trivially have $f_0 = 0$ and $\eval \lm {\Ss \cut 0} = 0$ for least fixed points and $f_0 = 1$ and $\eval \lm {\Ss \cut 0} = 1$ for greatest fixed points. For the induction step $n \to n+1$, we first rewrite $f_{n+1}$.
To simplify notation, we set $\Lit_A^* = \Lit_A \setminus \{ R \ta \mid \ta \in A^r \}$.
By the induction hypothesis on $\theta$ and on $f_n$, we can write
\begin{align*}
&f_{n+1}(\ta) =
\eval {\lm[R/f_n]} {\theta(\ta)} \\
={} &\Sup \big\{ \eval {\lm[R/f_n]} {\Ss_\theta} \bigmid \Ss_\theta \in \WinStrat{\theta(\ta)} \big\} \\
={} &\Sup \Big\{
    \prod_{L \in \Lit_A^*} \lm(L)^{n_L} \;\cdot
    \prod_{\tb \in A^r}
    \underbrace{\big( \Sup \{ \eval \lm {\Ss_\tb \cut n} \mid \Ss_\tb \in \WinStrat{\phi(\tb)} \} \big)^{n_\tb}}_{(*)}
\Bigmid \Ss_\theta \in \WinStrat{\theta(\ta)} \Big\}, \tag{$\dagger$}
\end{align*}
where we set $n_L = \litcount {\Ss_\theta} L$ and $n_\tb = \litcount {\Ss_\theta} {R \tb}$.
Notice that both values depend on $\Ss_\theta$; we omit this dependence in the notation to simplify the presentation.
Let us first fix a strategy $\Ss_\theta$ and $\tb \in A^r$ and consider the term $(*)$.
Recall that absorptive polynomials are finite and that for a set $S$, $\Sup S = \mxls(\bigcup S)$.
We can thus write
\[
    (*) = \big( \eval \lm {\Ss_\tb^1 \cut n} + \dots + \eval \lm {\Ss_\tb^k \cut n} \big)^{n_\tb}
\]
for some $\Ss_\tb^1, \dots, \Ss_\tb^k \in \WinStrat{\phi(\tb)}$.
If $n_\tb = \infty$, then by \cref{lemInfpow},
\[
    (*) = {\eval \lm {\Ss_\tb^1 \cut n}}^\infty + \dots + {\eval \lm {\Ss_\tb^k \cut n}}^\infty.
\]
Otherwise, $n_\tb = l < \infty$. Then each monomial of $(*)$ is of the form
\[
    \eval \lm {\Ss_\tb^{i_1} \cut n} \cdots \eval \lm {\Ss_\tb^{i_l} \cut n},
    \quad
    \text{where $i_1, \dots, i_l \in \{1, \dots, k\}$}.
\]

\medskip\noindent{\bf Direction 1:} $f_{n+1}(\ta) \le \Sup \{ \eval \lm {\Ss \cut {n+1}} \mid \Ss \in \WinStrat{\phi(\ta)} \}$.
To prove this direction, we show that each monomial of $f_{n+1}(\ta)$ is absorbed by the right-hand side.
Given the above considerations, we know that these monomials are of the form
\begin{align*}
m = \prod_{\mathclap{L \in \Lit_A^*}} \lm(L)^{n_L'}, \quad 
n_L' = n_L \,+
\sum_{\substack{\tb \in A^r \\ n_\tb = \infty}} \litcount {\Ss_\tb \cut n} {L} \cdot \infty \,+
\sum_{\substack{\tb \in A^r \\ \mathclap{n_\tb = l < \infty}}}
\litcount {\Ss_\tb^1 \cut n} {L} \cdots \litcount {\Ss_\tb^l \cut n} {L},
\end{align*}
for some $\Ss_\theta \in \WinStrat{\theta(\ta)}$ (which defines $n_L$) and some $\Ss_\tb$, $\Ss_\tb^i \in \WinStrat{\phi(\tb)}$ (for all $\tb$, $i$).

Now consider the strategy which starts with $\Ss_\theta$ and for each $\tb \in A^r$, we replace all $R\tb$-leaves of $\Ss_\theta$ by either $\Ss_\tb$ (if there are infinitely many such leaves) or by the strategies $\Ss_\tb^1, \dots, \Ss_\tb^l$ if there are $l < \infty$ such leaves (it does not matter in which order these strategies are assigned to the leaves).
The result is a strategy $\Ss \in \WinStrat{\phi(\ta)}$.

We further see that $\Ss \cut {n+1}$ results from $\Ss_\theta$ in the same way if we replace leaves by the truncations $\Ss_\tb \cut n$ instead of $\Ss_\tb$ (and $\Ss_\tb^i \cut n$ instead of $\Ss_\tb^i$), because $\Ss_\theta$ contains $R$-nodes only as leaves.
By this construction, we see that $\litcount \Ss L = n_L'$ for each $L \in \Lit_A^*$ and hence $m = \eval \lm {\Ss \cut {n+1}}$ and thus $m \le \Sup \{ \eval \lm {\Ss \cut {n+1}} \mid \Ss \in \WinStrat{\phi(\ta)} \}$ as claimed.

\medskip\noindent{\bf Direction 2:} $f_{n+1}(\ta) \ge \Sup \{ \eval \lm {\Ss \cut {n+1}} \mid \Ss \in \WinStrat{\phi(\ta)} \}$.
For this direction, we fix any strategy $\Ss \in \WinStrat{\phi(\ta)}$ and show that $\eval \lm {\Ss \cut {n+1}} \le f_{n+1}(\ta)$.
In the case that $\lm(\Cutsym) = 0$ and $\Cutsym$ appears in $\Ss \cut {n+1}$, we have $\eval \lm {\Ss \cut {n+1}} = 0$ and there is nothing to show. If $\lm(\Cutsym) = 1$, the appearance of $\Cutsym$ does not affect the provenance value $\eval \lm {\Ss \cut {n+1}}$.
We again decompose $\Ss$ into a prefix $\Ss_\theta$ corresponding to a winning strategy in $\Game(\theta(\ta))$ and substrategies from all $R$-leaves of $\Ss_\theta$.

We first consider any $\tb \in A^r$ for which $\litcount {\Ss_\theta} {R\tb} = \infty$ such that we have infinitely many such substrategies from $R\tb$-leaves. We need a preliminary observation:

\begin{observation}
There is a strategy $\Ss'_\tb \in \WinStrat{\phi(\tb)}$ such that the strategy $\Ss'$ which is like $\Ss$ but uses $\Ss'_\tb$ for all of the infinitely many $R\tb$-leaves of $\Ss_\theta$ satisfies $\eval \lm {\Ss \cut {n+1}} \le \eval \lm {\Ss' \cut {n+1}}$.
\end{observation}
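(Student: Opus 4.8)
The plan is to reduce the statement to a purely finitary counting fact, exploiting that $\Lit_A$ is finite and that $\Sinf[X]$ is absorptive. Recall the decomposition of $\Ss$ already used: $\Ss$ starts with a prefix $\Ss_\theta$ that is a winning strategy for $\Game(\theta(\ta))$, and below each $R$-leaf of $\Ss_\theta$ it carries a winning substrategy; below the $R$-leaves at position $R\tb$ this substrategy may be regarded as a strategy $\Ss_\tb^{(i)}\in\WinStrat{\phi(\tb)}$, and since $\litcount{\Ss_\theta}{R\tb}=\infty$ there are infinitely many of them, $\Ss_\tb^{(1)},\Ss_\tb^{(2)},\dots$. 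Because every path of $\Ss_\theta$ contains at most one $R$-node, in $\Ss\cut{n+1}$ the $i$-th of these substrategies appears precisely as its own $n$-truncation $\Ss_\tb^{(i)}\cut n$; hence, splitting the outcome-count of $\Ss\cut{n+1}$ into the part coming from the unchanged remainder of $\Ss$ and the part coming from the grafted $R\tb$-substrategies (the $\Cutsym$-occurrences are neutral since $\lm(\Cutsym)=1$), one obtains, for every literal $L$,
\[
  \litcount{\Ss\cut{n+1}}{L} \;=\; e_L^{\mathrm{rest}} \,+\, \sum_{i<\omega}\litcount{\Ss_\tb^{(i)}\cut n}{L}.
\]

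The combinatorial core is then this. Let $F=\{\,L\in\Lit_A \mid \sum_{i<\omega}\litcount{\Ss_\tb^{(i)}\cut n}{L}<\infty\,\}$ be the set of literals occurring only finitely often in total among the truncated $R\tb$-substrategies. For each $L\in F$ only finitely many indices $i$ have $\litcount{\Ss_\tb^{(i)}\cut n}{L}>0$, and since $F$ is finite it follows that only finitely many indices $i$ are such that $\Ss_\tb^{(i)}\cut n$ uses \emph{any} literal of $F$. Hence there is an index $j$ for which $\Ss_\tb^{(j)}\cut n$ uses only literals from $\Lit_A\setminus F$, that is, only literals $L$ with $\litcount{\Ss\cut{n+1}}{L}=\infty$. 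I would then set $\Ss'_\tb:=\Ss_\tb^{(j)}$ for such a $j$ and let $\Ss'$ be $\Ss$ with the substrategies at all infinitely many $R\tb$-leaves of $\Ss_\theta$ replaced by $\Ss'_\tb$. One checks that $\Ss'\in\WinStrat{\phi(\ta)}$: finite plays of $\Ss'$ still end in winning terminals, and an infinite play consistent with $\Ss'$ either stays inside the $\Ss_\theta$-prefix and is then an infinite play of the winning strategy $\Ss_\theta$, or it eventually follows a single copy of the winning strategy $\Ss_\tb^{(j)}$ and is winning there.

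It remains to compare provenance values. By the same splitting, $\litcount{\Ss'\cut{n+1}}{L}=e_L^{\mathrm{rest}}+\sum_{i<\omega}\litcount{\Ss_\tb^{(j)}\cut n}{L}$, where the sum is $\infty$ if $\Ss_\tb^{(j)}\cut n$ uses $L$ and $0$ otherwise. If $\Ss_\tb^{(j)}\cut n$ uses $L$ then $L\notin F$, so $\litcount{\Ss\cut{n+1}}{L}=\infty\ge\litcount{\Ss'\cut{n+1}}{L}$; otherwise $\litcount{\Ss'\cut{n+1}}{L}=e_L^{\mathrm{rest}}\le\litcount{\Ss\cut{n+1}}{L}$. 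So $\litcount{\Ss'\cut{n+1}}{L}\le\litcount{\Ss\cut{n+1}}{L}$ for all $L\in\Lit_A$. As $\Sinf[X]$ is absorptive, $1$ is its greatest element (\cref{propAbsorptionEquivalences}), so $\lm(L)\le 1$ and multiplication is decreasing; therefore $\lm(L)^{\litcount{\Ss'\cut{n+1}}{L}}\ge\lm(L)^{\litcount{\Ss\cut{n+1}}{L}}$ for every $L$ (with the convention $a^\infty=\Inf_m a^m$, this covers infinite exponents too), and by monotonicity of multiplication
\[
  \eval\lm{\Ss'\cut{n+1}} \;=\; \prod_{L\in\Lit_A}\lm(L)^{\litcount{\Ss'\cut{n+1}}{L}} \;\ge\; \prod_{L\in\Lit_A}\lm(L)^{\litcount{\Ss\cut{n+1}}{L}} \;=\; \eval\lm{\Ss\cut{n+1}},
\]
which is the assertion of the Observation.

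The only genuinely delicate step is the choice of $j$ above: a priori the substrategies $\Ss_\tb^{(i)}$ are all different, so collapsing them to a single $\Ss_\tb^{(j)}$ looks like it must lose provenance information. What saves the argument is precisely the interaction just used --- finiteness of $\Lit_A$ forces all but finitely many of the truncations $\Ss_\tb^{(i)}\cut n$ to confine themselves to literals that already occur with exponent $\infty$ in $\eval\lm{\Ss\cut{n+1}}$, and absorptivity of $\Sinf[X]$ makes repeating such a truncation infinitely often cost nothing. The remaining verifications --- that the $\Cutsym$-occurrences are neutral, that truncating $\Ss'$ at level $n+1$ really reproduces $\Ss_\tb^{(j)}\cut n$ in every grafted copy (this is where the property that $\Ss_\theta$ has at most one $R$-node per path enters), and the exponent bookkeeping when $e_L^{\mathrm{rest}}$ or $\litcount{\Ss\cut{n+1}}{L}$ is already infinite --- are routine.
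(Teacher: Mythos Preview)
Your proof is correct and follows essentially the same approach as the paper: the pigeonhole argument that, since there are infinitely many substrategies $\Ss_\tb^{(i)}$ but only finitely many literals whose total count across the truncations is finite, some $\Ss_\tb^{(j)}\cut n$ must avoid all such literals, is exactly the paper's argument (which phrases it via ``problematic'' literals $L$ with $\litcount{\Ss\cut{n+1}}{L}<\infty$). Your set $F$ is slightly larger than the paper's set of problematic literals, but this only strengthens the choice of $j$, and you spell out the exponent comparison and the verification that $\Ss'$ is winning more explicitly than the paper does.
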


\begin{claimproof}
Let $(\Ss_\tb^i)_{i < \omega}$ be the family of all substrategies $\Ss$ uses from $R\tb$-leaves of $\Ss_\theta$.
As in the proof of the Puzzle Lemma, we call a literal $L$ problematic if $\Litcount {\Ss \cut {n+1}} L < \infty$.
Let $L$ be a problematic literal. Then there can only be finitely many $i$ with $\Litcount {\Ss_\tb^i \cut n} L > 0$ (otherwise $L$ would occur infinitely often in $\Ss \cut {n+1}$).
As $\omega$ is infinite while the number of literals is finite, there is an $i < \omega$ such that $\Ss_\tb^i \cut n$ contains no problematic literals at all.
We then set $\Ss'_\tb = \Ss_\tb^i$ and the claim follows.
\end{claimproof}

Due to this observation and because $A^r$ is finite, we obtain a strategy $\Ss'$ with $\eval \lm {\Ss \cut {n+1}} \le \eval \lm {\Ss' \cut {n+1}}$ such that for all $\tb \in A^r$ with $\litcount {\Ss_\theta} {R \tb} = \infty$, the strategy $\Ss'$ uses the same substrategy from all $R\tb$-leaves of $\Ss_\theta$.
From $(\dagger)$, we know that
\begin{align*}
f_{n+1}(\ta) \ge
\prod_{L \in \Lit_A^*} \lm(L)^{n_L} \;\cdot
\prod_{\tb \in A^r}
\big(
\underbrace{
    \Sup \{ \eval \lm {\Ss_\tb \cut n} \mid \Ss_\tb \in \WinStrat{\phi(\tb)} \}}
_{\eqqcolon P_\tb}
\big)^{n_\tb},
\end{align*}
where $n_L = \litcount {\Ss_\theta} L$ and $n_\tb = \litcount {\Ss_\theta} {R \tb}$.
Consider the strategies $\Ss'$ uses from the $R$-leaves of $\Ss_\theta$:
For $\tb \in A^r$ with $\litcount {\Ss_\theta} {R\tb} = \infty$, let $\Ss_\tb$ be the strategy that $\Ss'$ uses from all $R\tb$-leaves. For $\tb$ with $\litcount {\Ss_\theta} {R\tb} = l < \infty$, let $\Ss_\tb^1, \dots, \Ss_\tb^l$ be the strategies $\Ss'$ uses from the $R\tb$-leaves of $\Ss_\theta$.
Let further
\[
n_L' = n_L +
\sum_{\substack{\tb \in A^r \\ n_\tb = \infty}} \litcount {\Ss_\tb \cut n} {L} \cdot \infty \,+
\sum_{\substack{\tb \in A^r \\ \mathclap{n_\tb = l < \infty}}}
\litcount {\Ss_\tb^1 \cut n} {L} \cdots \litcount {\Ss_\tb^l \cut n} {L}.
\]
We can apply commutativity and the lemma on infinitary powers to conclude
\begin{align*}
\eval \lm {\Ss' \cut {n+1}} =
\prod_{\mathclap{L \in \Lit_A^*}} \lm(L)^{n_L'} &=
\prod_{\mathclap{L \in \Lit_A^*}} \lm(L)^{n_L} \cdot
    \prod_{\substack{\tb \in A^r \\ \mathclap{n_\tb = \infty}}}
        {\eval \lm {\Ss_\tb \cut n}}^\infty
\cdot
    \prod_{\substack{\tb \in A^r \\ \mathclap{n_\tb = l < \infty}}}
        {\eval \lm {\Ss_\tb^1 \cut n}} \cdots {\eval \lm {\Ss_\tb^l \cut n}} \nsmall &\le
\prod_{\mathclap{L \in \Lit_A^*}} \lm(L)^{n_L} \cdot
    \prod_{\substack{\tb \in A^r \\ \mathclap{n_\tb = \infty}}}
        \mathrlap{P_\tb^\infty}
        \phantom{{\eval \lm {\Ss_\tb \cut n}}^\infty}
\cdot
    \prod_{\substack{\tb \in A^r \\ \mathclap{n_\tb = l < \infty}}}
        P_\tb^l
\quad \le \; f_{n+1}(\ta).
\end{align*}
We have thus shown both directions and \textsf{Claim (1)} follows.
\end{claimproof}

\begin{claimproof}[Proof of Claim (2)]
We now prove, assuming \textsf{Claim (1)}, that
\[
    f_\omega(\ta) = \Sup \{ \eval \lm \Ss \mid \Ss \in \WinStrat{\phi(\ta)} \}.
\]

\begin{itemize}
\item For $\phi = \lfpfml R \tx \theta \ty$, this follows via \cref{thm:truncationYieldsStrategy} by swapping suprema:
\begin{align*}
f_\omega(\ta) =
\Sup_{n < \omega} f_n(\ta) &=
\Sup_{n < \omega} \Sup \{ \eval \lm {\Ss \cut n} \mid \Ss \in \WinStrat{\phi(\ta)} \} \\ &=
\Sup \big\{ \Sup_{n < \omega} \eval \lm {\Ss \cut n} \bigmid \Ss \in \WinStrat{\phi(\ta)} \big\} =
\Sup \{ \eval \lm \Ss \mid \Ss \in \WinStrat{\phi(\ta)} \}.
\end{align*}

\item For $\phi = \gfpfml R \tx \theta \ty$, the proof is more difficult and requires the Puzzle Lemma.
We first note that one direction is trivial (using \cref{thm:truncationYieldsStrategy} in the last step):
\begin{align*}
    \Inf_{n<\omega} \Sup \{ \eval \lm {\Ss \cut n} \mid \Ss \in \WinStrat{\phi(\ta)} \} &\ge
    \Sup \{ \Inf_{n < \omega} \eval \lm {\Ss \cut n} \mid \Ss \in \WinStrat{\phi(\ta)} \} \\ &=
    \Sup \{ \eval \lm \Ss \mid \Ss \in \WinStrat{\phi(\ta)} \}.
\end{align*}

For the other direction, we use the characterization of infima in \cref{propCharacterizationInfima}:
\[
    f_\omega(\ta) =
    \Inf_{n<\omega} \underbrace{
        \Sup \{ \eval \lm {\Ss \cut n} \mid \Ss \in \WinStrat{\phi(\ta)} \}
    }_{\eqqcolon P_n} =
    \Sup \big\{ \Inf_{n < \omega} m_n \bigmid (m_n)_{n < \omega} \in \MM \big\}
\]
where $\MM$ is defined as in \cref{propCharacterizationInfima}, based on the chain $(P_n)_{n < \omega}$.
Consider a monomial chain $\mathbf m = (m_n)_{n < \omega} \in \MM$.
By definition of $P_n$, we have for each $n$ a strategy $\Ss_n \in \WinStrat{\phi(\ta)}$ such that $m_n = \eval \lm {\Ss_n \cut n}$.
The Puzzle Lemma further implies that there is a strategy
$\Ss_{\mathbf m} \in \WinStrat{\phi(\ta)}$ such that $\eval \lm {\Ss_{\mathbf m}} \ge \Inf_n m_n$.
Using these strategies, we get
\begin{align*}
    f_\omega(\ta) =
    \Sup \big\{ \Inf_{n < \omega} m_n \bigmid \mathbf m = (m_n)_{n < \omega} \in \MM \big\} &\le
    \Sup \{ \eval \lm {\Ss_{\mathbf m}} \mid \mathbf m \in \MM \} \\ &\le
    \Sup \{ \eval \lm \Ss \mid \Ss \in \WinStrat{\phi(\ta)} \}. \qedhere
\end{align*}
\end{itemize}
\end{claimproof}

\noindent
We have now established both claims.
This closes the proof of \Cref{thmStrategyCharacterization} for $\Sinf[X]$. \qedhere
\end{proof}

\subsubsection*{Generalization}

We generalize the result from $\Sinf[X]$ to all absorptive, fully continuous semirings.
To this end, we first need two lemmas on the properties of fully continuous homomorphisms.

\begin{lemma}
Let $h : \Sinf[X] \to K$ be a semiring homomorphism (not necessarily fully continuous)
into an absorptive, fully continuous semiring $K$.
Then $\Sup h(S) = h\big(\Sup S \big)$ holds for arbitrary sets $S \subseteq \Sinf[X]$.
\end{lemma}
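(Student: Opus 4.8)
The plan is to prove the two inequalities $h(\Sup S) \ge \Sup h(S)$ and $h(\Sup S) \le \Sup h(S)$ separately. First I would note that $\Sup h(S)$ exists in $K$: since $K$ is absorptive it is idempotent, and it is fully chain-complete (being fully continuous), so by \cref{propIdempotentLattice} its natural order is a complete lattice and arbitrary suprema are available. I would also record the basic fact that a semiring homomorphism is automatically monotone for the natural order: if $P \le Q$, witnessed by $P + c = Q$, then $h(P) + h(c) = h(Q)$, so $h(P) \le h(Q)$.

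The easy inequality $h(\Sup S) \ge \Sup h(S)$ follows immediately: for every $P \in S$ we have $P \le \Sup S$, hence $h(P) \le h(\Sup S)$ by monotonicity, so $h(\Sup S)$ is an upper bound of $h(S)$ and therefore dominates the least such bound $\Sup h(S)$.

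The substance is the reverse inequality, and the point is that it rests on the finiteness properties of $\Sinf[X]$, \emph{not} on any continuity of $h$. Recall that $\Sup S = \mxls(\bigcup S)$ is an antichain of monomials, hence \emph{finite} by \cref{thm:chainsFinite}; write $\Sup S = m_1 + \dots + m_k$ in $\Sinf[X]$. As $h$ preserves finite sums, $h(\Sup S) = h(m_1) + \dots + h(m_k)$. Each $m_i$ belongs to $\bigcup S$, i.e.\ $m_i$ is one of the monomials of some $P_i \in S$, so $m_i \le P_i$ in $\Sinf[X]$ and thus $h(m_i) \le h(P_i) \le \Sup h(S)$ by monotonicity. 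Summing these $k$ inequalities and using idempotence of $K$ (a finite sum of copies of $\Sup h(S)$ equals $\Sup h(S)$), I obtain $h(\Sup S) = \sum_{i=1}^k h(m_i) \le \Sup h(S)$, which together with the previous paragraph gives the claimed equality.

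I do not expect a genuine obstacle here once the key observation is in place: the equality would fail for general formal power series (cf.\ \cref{ex:infpathViterbi}), and it is precisely the absence of infinite antichains of monomials in $\Sinf[X]$ — so that $\Sup S$ is already realized by finitely many monomials occurring in $\bigcup S$ — that lets even a non-continuous homomorphism commute with arbitrary suprema. The only things requiring a little care are invoking idempotence of $K$ at the final summation and ensuring arbitrary suprema exist in $K$, both of which are consequences of $K$ being absorptive and fully continuous.
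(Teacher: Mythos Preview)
Your proof is correct and follows essentially the same approach as the paper: both directions are handled identically, using monotonicity of $h$ for the easy inequality and the finiteness of $\Sup S = m_1 + \dots + m_k$ together with $m_i \in P_i \in S$ for the reverse one. Your exposition is slightly more detailed (explicitly invoking idempotence for the final summation and spelling out why $\Sup h(S)$ exists), but the argument is the same.
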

\begin{proof}
We first remark that the natural order on $K$ forms a complete lattice
(by \cref{propIdempotentLattice}), so the supremum $\Sup h(S)$ is well defined.
Since $h$ preserves addition and thus the natural order,
the direction $h(\Sup S) \ge \Sup h(S)$ is trivial.
Let $\Sup S = m_1 + \dots + m_k$ for some monomials $m_1, \dots, m_k$ and
consider one monomial $m_i$. Since $\Sup S = \mxls\big(\bigcup S\big)$,
there is a $P \in S$ with $m_i \in P$. Then $\Sup h(S) \ge h(P) \ge h(m_i)$.
This holds for each $1 \le i \le k$ and implies
$\Sup h(S) \ge h(m_1) + \dots + h(m_k) = h(m_1 + \dots + m_k) = h(\Sup S)$.
\end{proof}

\begin{lemma}
Let $K_1$ and $K_2$ be absorptive, fully continuous semirings and
let $h : K_1 \to K_2$ be a fully continuous semiring homomorphism.
Let further $\lm$ be a $K_1$-interpretation, $\phi \in \LFP$ and $\Ss$ a strategy in $\Game(\phi)$.
Then $h( \eval \lm \Ss ) =  (h \circ \lm) \ext \Ss$.
\end{lemma}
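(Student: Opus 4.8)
The plan is to reduce the identity $h(\eval\lm\Ss) = (h\circ\lm)\ext\Ss$ to the behaviour of $h$ on the value of a single literal, exploiting the key observation that, by definition, $\eval\lm\Ss$ is only a \emph{finite} product of (possibly infinitary) powers of literal values. Concretely: since the universe $A$ and the vocabulary $\tau$ are finite, the set $\Lit_A(\tau)$ is finite, so whenever all infinite plays consistent with $\Ss$ are winning for Verifier we have $\eval\lm\Ss = \prod_{L \in \Lit_A(\tau)} \lm(L)^{\litcount\Ss L}$, a \emph{finite} product of powers with exponents in $\N \cup \{\infty\}$, and otherwise $\eval\lm\Ss = 0$.

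First I would dispose of the ``otherwise'' case. Whether all infinite $\rho \in \Plays(\Ss)$ are won by Verifier is a property of the game $\Game(\phi)$ and of the strategy tree $\Ss$ only; it does not depend on the semiring, and so it holds for $\lm$ exactly when it holds for $h\circ\lm$. In this case both $\eval\lm\Ss$ and $(h\circ\lm)\ext\Ss$ equal $0$, and $h(0)=0$ since $h$ is a semiring homomorphism, so the identity is immediate.

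In the remaining case I would use that $h$ preserves finite products to push it inside, reducing the claim to showing $h\big(\lm(L)^{\litcount\Ss L}\big) = \big((h\circ\lm)(L)\big)^{\litcount\Ss L}$ for each $L \in \Lit_A(\tau)$. For a finite exponent $n$ this is just $h(a^n) = h(a)^n$, immediate from $h$ being a homomorphism. For the exponent $\infty$, recall that $K_1$ and $K_2$ are absorptive and fully continuous, so the infinitary power $a^\infty = \Inf_{k<\omega} a^k$ is well defined in both and $(a^k)_{k<\omega}$ is a descending $\omega$-chain (powers are decreasing in an absorptive semiring); I would then invoke full continuity of $h$ to obtain $h(\Inf_{k<\omega} a^k) = \Inf_{k<\omega} h(a^k) = \Inf_{k<\omega} h(a)^k = h(a)^\infty$, applied to $a = \lm(L)$.

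The hard part --- really the only non-formal point --- is this last step: commuting $h$ past an infinitary power is \emph{not} a consequence of being a homomorphism but genuinely requires $h$ to preserve infima of descending $\omega$-chains, which is exactly the full-continuity hypothesis (and is why the analogous statement fails for, e.g., polynomial evaluation on formal power series, cf.\ \cref{ex:infpathViterbi}). Apart from this, the argument is a routine unwinding of the definition of $\eval\lm\Ss$, and is considerably simpler than the Fundamental Property (\cref{propFundamentalProperty}), since strategy values are already given as an explicit finite product and no transfinite fixed-point iteration is involved.
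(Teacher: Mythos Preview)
Your proposal is correct and follows exactly the paper's approach: reduce to the fact that $h$ preserves finite products and then verify $h(a^\infty)=h(a)^\infty$ via full continuity applied to the descending chain $(a^k)_{k<\omega}$. The paper's proof is simply terser (it omits the explicit treatment of the $0$-case and the finiteness of $\Lit_A(\tau)$), but the substance is identical.
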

\begin{proof}
Clearly, $h$ preserves finite products.
It thus suffices to prove that $h$ preserves infinitary powers.
Let $a \in K_1$. Then
$h(a^\infty) = h(\Inf_n a^n) = \Inf_n h(a^n) = \Inf_n h(a)^n = h(a)^\infty$.
\end{proof}

We can now finally prove the main result in its general formulation
by considering the most general $\Sinf[X]$-interpretation
together with the above lemmas.

\begin{proof}[Proof of \cref{thmStrategyCharacterization}]
Consider the most general $\Sinf[X]$-interpretation $\mgpi$
with $\mgpi(L) = x_L$ and $X = \{ x_L \mid L \in \Atoms_A(\tau) \cup \NegAtoms_A(\tau) \}$.
By the universal property, there is a fully continuous homomorphism
$h : \Sinf[X] \to K$ with $\lm = h \circ \mgpi$ (induced by the mapping $x_L \mapsto \pi(L)$).
Then,
\begin{align*}
    \eval \lm {\phi(\ta)} =
    h(\eval {\mgpi} {\phi(\ta)}) &=
    h\big(\Sup \{ \eval {\mgpi} \Ss \mid \Ss \in \WinStrat{\phi(\ta)} \}\big) \\ &=
    \Sup \{ h(\eval {\mgpi} \Ss) \mid \Ss \in \WinStrat{\phi(\ta)} \} =
    \Sup \{ \eval \lm \Ss \mid \Ss \in \WinStrat{\phi(\ta)} \}. \qedhere
\end{align*}
\end{proof}

\section{Related Work}

While our approach and our general project, as outlined in the introduction, is
rooted in the work on semiring provenance in databases, there have also
been a number of other areas of logic in computer science where semiring semantics 
have been used.

A prominent instance is the work on weighted automata 
(see, e.g., the Handbook  \cite{HandbookWeighted2009}).
In particular, weighted automata
over finite and infinite words are discussed in \cite{DrosteGas09, DrosteRah06},
and their expressive power is related  to
weighted monadic second-order logic (MSO) on words.
In this setting, the weight of a word is defined as the sum over the weights
of accepting paths, and the overall behaviour of an automaton is described by
a formal power series over a semiring.
To deal with infinite words, infinite sum and product operations
of the semiring are assumed \cite{DrosteRah06},
roughly comparable to our assumptions on suprema and infima.
Whereas the power series assign semiring values to \emph{words},
we instead use indeterminates to track (combinations of) \emph{literals}
which then provide us with provenance information.
As we have seen, formal power series are not the right tool for this purpose
when confronted with greatest fixed points, so we consider
absorptive polynomials $\Sinf[X]$ instead.
Moreover, in our setting the sum-of-strategies characterization
is not a definition, but a non-trivial result.
The definition of weighted MSO is similar to our semiring semantics for LFP
and is also based on negation normal form.
Main differences are that only logics over words are considered,
and that semiring values are part of the formulae,
whereas we assign values to literals.
This reflects the different point of view: Weighted MSO is used to define
series recognizable by weighted automata, whereas our goal is the
provenance analysis of the logic itself.

Lluch-Lafuente and Montanari \cite{LluchMon05}  have studied a semantics of CTL and 
$\mu$-calculus in so-called constraint semirings, to reason about issues of quality of service
such as delay or bandwidth. The choice of constraint semirings is motivated by applications for a particular
class of constraint satisfaction problems, called soft CSP, and by useful closure properties, such as closure under
Cartesian products, exponentials, and power constructions.  
Although this is not mentioned explicitely, constraint semirings are in
fact also absorptive and satisfy a continuity requirement for suprema. However, the approach to
negation is different from ours, requiring the extension of the semiring by new functions,
and they do not have an abstract approach on the basis of polynomials with universal properties 
and reasoning over multiple constraint semirings. 
A main result of \cite{LluchMon05} is that the usual embedding of CTL into the $\mu$-calculus
fails for this semantics, which is another instance showing that
a refined semiring semantics may distinguish between formulae that equivalent under 
Boolean semantics.

\section{Conclusion and Outlook}

Let us summarize the contributions of this paper: 
We have layed foundations for the semiring provenance analysis of full fixed-point logics, with arbitrary interleavings of least and 
greatest fixed points, as part of the general project of developing provenance semantics of logical languages used in various branches
of computer science.  We have seen that absorptive and fully continuous semirings provide an adequate framework for this.
We have identified the semiring 
of dual-indeterminate generalized absorptive polynomials $\Sinf[X,\nnX]$
as the `right'
provenance semiring for LFP. It satisfies
the further algebraic property of chain-positivity which
guarantees that provenance interpretations are truth-preserving, and we
have established an important universal property of this semiring.
Finally, we have shown how provenance for LFP is related to strategies in model-checking games.

Next steps will include the specific analysis of important logics such as temporal logics, 
dynamic logics, the modal  $\mu$-calculus, description logics (see initial work in \cite{DannertGra19b}) etc.
Applications require in particular the study of \emph{algorithms} for computing provenance values -- a non-trivial task,
considering that greatest fixed-point iterations in semirings such as $\Sinf[X,\nnX]$ can be infinite.
Nevertheless, absorption and the infinitary power $a^\infty$ can be used to short-circuit these iterations;
forthcoming work will include results that show how an effective, and in important cases also efficient, computation of provenance values
is possible in absorptive, fully continuous semirings.

\bibliography{provenance}
\bibliographystyle{plainurl}

\newpage
\appendix

\renewcommand{\thesection}{\Alph{section}}
\renewcommand{\thesubsection}{\Alph{section}-\Roman{subsection}}
\newcommand{\appendixtheorem}{\thesection\arabic{theorem}} 

\renewcommand{\theExample}{\appendixtheorem}
\renewcommand{\thedefn}{\appendixtheorem}
\renewcommand{\thetheorem}{\appendixtheorem}

\setcounter{section}{0}
\setcounter{theorem}{0}

\section{Appendix: On Infinite Products of Plays}

In Sect.~\ref{Sect:Games}, we have introduced the provenance value $\pi \ext \Ss$
of a strategy $\Ss$ informally as the (possibly infinite) product over the values
$\pi \ext \rho$ of all $\rho \in \Plays(\Ss)$.
The formal definition of $\pi \ext \Ss$, on the other hand, avoids infinite products
by grouping together those plays with the same outcome.
Here we discuss how infinite products can be formulated in our setting
of absorptive, fully continuous semirings, in order to justify our definition of $\pi \ext \Ss$.
We restrict our interest to products over countable domains which are relatively easy to analyse
(in fact, finite domains would suffice as there are only finitely many different outcomes of plays).

For this section, we always assume the following setting.
Let $K$ be an absorptive, fully continuous semiring and
let $A \subseteq K$ be a countable set.
Let $I$ be an arbitrary index set
and $(x_i)_{i \in I}$ a family over $A$, that is, with $x_i \in A$ for all $i \in I$.
Notice that a single element $a \in A$ can occur several times, or even infinitely often,
in the family $(x_i)_{i \in I}$ and we allow index sets $I$ of arbitrary cardinality.
Since $A$ is countable, we fix an enumeration $A = \{ a_0, a_1, a_2, \dots \}$
(not to be confused with the elements $x_i$ of the family).
We are interested in the (possibly infinite) product of the family $(x_i)_{i \in I}$
which we denote by $\Prod_{i \in I} x_i$.

\subsection{Definition via Finite Subproducts}

In \cite{DrosteKuich09}, infinite sums in complete monoids are defined
using the supremum over all finite subsums. Here we propose an analogous
definition for infinite products.
\[
  \Prod_{i \in I} x_i \coloneqq \Inf \Big\{ \prod_{i \in F} x_i \mid \text{$F \subseteq I$, $F$ finite} \Big\}
\]
We write $\Prod$ to clearly distinguish this product from the usual finite product $\prod$.
To simplify notation, we write $F \finsub I$ to express that $F$ is a finite subset of $I$
and use the abbreviation $x_F \coloneqq \prod_{i \in F} x_i$ for finite subproducts.
Then, $\Prod_{i \in I} x_i = \Inf \{ x_F \mid F \finsub I \}$.

Some properties of infinite products are immediate, based on this definition.
Further properties and an alternative definition are discussed in the remainder of this section.

\prop
\label{propProductProperties}
The infinite product (in the setting described above)
\begin{enum}
  \item coincides with finite multiplication, i.e., $\Prod_{i \in I} x_i = \prod_{i \in I} x_i$ for finite $I$,
  \item is commutative, i.e., $\Prod_{i \in I} x_i = \Prod_{i \in I} x_{f(i)}$, where $f : I \to I$ is a bijection,
  \item satisfies $\Prod_{i \in \emptyset} x_i = 1$, and $\Prod_{i \in I} x_i = 0$ if $x_i = 0$ for some $i \in I$.
\end{enum}
\eprop

\begin{proof}
For (1), assume that $I$ is finite and $F \subseteq I$.
Then $x_F \ge x_I$ because of absorption, hence
$\Prod_{i \in I} x_i = \Inf \{ x_F \mid F \finsub I \} = x_I = \prod_{i \in I} x_i$.
For (2), consider some $F \finsub I$ and let $F' = f^{-1}(F)$.
Then $F' \finsub I$ and $\prod_{i \in F} x_i = \prod_{i \in F'} x_{f(i)}$.
Writing $x'_F$ for $\prod_{i \in F} x_{f(i)}$, it follows that
$\Prod_{i \in I} x_i = \Inf \{ x'_{F'} \mid F \finsub I$, $F' = f^{-1}(F) \} = \Inf \{ x'_F \mid F \finsub I \} = \Prod_{i \in I} x_{f(i)}$.
For (3), $\Prod_{i \in \emptyset} x_i = x_\emptyset = 1$.
If $x_i = 0$ then $x_{\{i\}} = 0$ and hence $\Inf \{ x_F \mid F \finsub I \} \le 0$.
\end{proof}

\subsection{Products as Chains}

In order to justify our definition of $\pi \ext \Ss$,
we need a form of \emph{associativity} for infinite products.
Consider a partition $I = I_1 \dcup I_2$.
We then need $\Prod_{i \in I} x_i = \big(\Prod_{i \in I_1} x_i\big) \cdot \big(\Prod_{i \in I_2} x_i\big)$
to group together plays with identical outcome.
%

Since the product is defined as an infimum,
this is related to the full continuity of multiplication.
However, recall that continuity applies only to infima of \emph{chains}.
We therefore reformulate the product $\Prod_{i \in I} x_i$ as the infimum of an $\omega$-chain.
Here we rely on $A$ being countable, so $A = \{ a_0, a_1, \dots \}$.
We need some further notation.
For $a \in A$ and $F \subseteq I$ we write $\cntf F a$ for
the number of occurrences of $a$ in $(x_i)_{i \in F}$, that is,
$\cntf F a = \big|\{ i \in F \mid x_i = a \}\big|$ which we understand as a number in $\N \cup \{ \infty \}$.
In case of $F = I$, we simply write $\cnt{a}$ for $\cntf I a$.
Using this notation, we define the $\omega$-chain $(\pchain x_n)_{n < \omega}$ as follows:
\[
  \pchain x_n \coloneqq \prod_{0 \le k < \min(n, |A|)} (a_k)^{\cnt {a_k}}
\]

Notice that $(\pchain x_n)_{n < \omega}$ is indeed a descending chain because of absorption.
Moreover, each finite subset of $A$ is eventually contained in the elements $a_0, \dots, a_{n-1}$
considered in $\pchain x_n$, which leads to the following observation.

\prop
\label{propProductAsChain}
It holds that $\Prod_{i \in I} x_i = \Inf_{n < \omega} \pchain x_n$.
\eprop

\begin{proof}
Recall that $\Prod_{i \in I} x_i = \Inf \{ x_F \mid F \finsub I \}$.
We show both directions of the equality.
We first fix any $F \finsub I$. Since $F$ is finite,
there is $n < \omega$ such that $\{ x_i \mid i \in F \} \subseteq \{ a_0, \dots, a_{n-1} \}$.
Then $x_F \ge \pchain x_n$, by comparing the exponents of each $a_k$ (with $k < n$)
in $x_F$ and $\pchain x_n$.
It follows that $\Prod_{i \in I} x_i \ge \Inf_{n < \omega} \pchain x_n$.

For the other direction, fix $n$ and consider $\pchain x_n$ as defined above.
Intuitively, we want to choose an appropriate set $F \finsub I$
that contains sufficiently many occurrences of the values $a_0, \dots, a_{n-1}$.
However, in case of $\cnt{a_k} = \infty$ this is not possible.
Instead, we define a family $(F_j)_{j < \omega}$ of increasingly large finite subsets of $I$.
For each $j$, choose $F_j \finsub I$ in such a way that for all $k < n$, we have
$\cntf {F_j} {a_k} = \big|\{ i \in F_j \mid x_i = a_k \}\big| = \min(j, \cnt{a_k})$.
Then, the finite products $(x_{F_j})_{j < \omega}$ form a chain
and we can apply the Splitting Lemma to see that $\Inf_{j < \omega} x_{F_j} = \pchain x_n$.
It follows that
$\Prod_{i \in I} x_i = \Inf \{ x_F \mid F \finsub I \} \le \Inf \{ x_{F_j} \mid j < \omega \} = \pchain x_n$.
This holds for all $n < \omega$, hence $\Prod_{i \in I} x_i \le \Inf_{n < \omega} \pchain x_n$.
\end{proof}

We have reformulated the infinite product as infimum of an $\omega$-chain
and can now exploit that multiplication is fully continuous
to prove further properties of infinite products.

\prop[Assocativity]
\label{propProductAssociative}
For each partition $I = I_1 \dcup I_2$,
it holds that
\[
  \Prod_{i \in I} x_i = \Big( \Prod_{i \in I_1} x_i \Big) \cdot \Big( \Prod_{i \in I_2} x_i \Big).
\]
In particular, given $c \in A$ it holds that $c \cdot \Prod_{i \in I} x_i = \Prod_{i \in I \dcup \{\star\}} x_i$ where $x_\star = c$.
\eprop

\begin{proof}
The definition of the chain $(\pchain x_n)_{n < \omega}$
for the product $\Prod_{i \in I} x_i$ on the left-hand side
depends on the values $\cnt a = \cntf I a$ for $a \in A$.
For the two products on the right-hand side,
let $(\pchain x_n^{(1)})_{n < \omega}$ and $(\pchain x_n^{(2)})_{n < \omega}$
be the corresponding chains that are defined analogously using
$\cntf {I_1} a$ and $\cntf {I_2} a$, respectively.

First observe that $\cnt a = \cntf {I_1} a + \cntf {I_2} a$ for each $a \in A$.
Since we fixed an enumeration of $A$, it follows from the definitions
of the three chains that
$\pchain x_n = \pchain x_n^{(1)} \cdot \pchain x_n^{(2)}$ for each $n < \omega$
(in case of $\cnt a = \infty$, recall \cref{lemInfpow} on the infinitary power).
This observation and an application of the Splitting Lemma entail associativity:
\[
  \Big( \Prod_{i \in I_1} x_i \Big) \cdot \Big( \Prod_{i \in I_2} x_i \Big) =
  \Big( \Inf_{n < \omega} \pchain x_n^{(1)} \Big) \cdot \Big( \Inf_{n < \omega} \pchain x_n^{(2)} \Big) =
  \Inf_{n < \omega} \pchain x_n^{(1)} \cdot \pchain x_n^{(2)} =
  \Inf_{n < \omega} \pchain x_n =
  \Prod_{i \in I} x_i
\]
The statement on multiplication with $c$ follows by considering the partition $I' = \{\star\} \dcup I$.
\end{proof}

The last property we consider is compatibility with semiring homomorphisms $h : K \to K'$.
Finite products are preserved by $h$ and we can generalize this to infinite products
if we require $h$ to be fully continuous.
This applies in particular to the homomorphisms induced by polynomial evaluation in $\Sinf[X]$.
Recall that $K$, and now also $K'$, are absorptive and fully continuous.

\prop
\label{propProductHomomorphism}
Let $h : K \to K'$ be a fully continuous homomorphism. Then,
\[
  h \Big( \Prod_{i \in I} x_i \Big) = \Prod_{i \in I} h(x_i).
\]
\eprop

\begin{proof}
We revisit the proof of $\Prod_{i \in I} x_i = \Inf_{n < \omega} \pchain x_n$ in \cref{propProductAsChain}.
For each $F \finsub I$, we have shown that there is an $n$ with $x_F \ge \pchain x_n$.
Hence $h(x_F) \ge h(x_n)$ by monotonicity of semiring homomorphisms.

Conversely, for each $n$ we have constructed a family $(F_j)_{j < \omega}$
such that $\Inf_{j < \omega} x_{F_j} = \pchain x_n$.
Using the continuity of $h$, it follows that
$h(\pchain x_n) = h(\Inf_{j < \omega} x_{F_j}) = \Inf_{j < \omega} h(x_{F_j}) \ge \Inf \{ h(x_F) \mid F \finsub I \}$.
Combining both directions, we get $\Inf \{ h(x_F) \mid F \finsub I \} = \Inf_{n < \omega} h(\pchain x_n)$.

The proof of the proposition now simply follows from the continuity assumption:
\[
  h \Big( \Prod_{i \in I} x_i \Big) =
  h\Big( \Inf_{n < \omega} \pchain x_n \Big) =
  \Inf_{n < \omega} h(\pchain x_n) =
  \Inf \{ h(x_F) \mid F \finsub I \}
\]
Clearly, $h(x_F) = \prod_{i \in F} h(x_i)$
and it follows that $\Inf \{ h(x_F) \mid F \finsub I \} = \Prod_{i \in I} h(x_i)$.
\end{proof}

\subsection{Alternative Definition}

Infinite products can be defined in different ways
and while the definition in terms of finite subproducts
is natural, one could also define the product recursively.
We show that the notion of infinite products is robust
by observing that the recursive definition is equivalent.
Let $(x_\beta)_{\beta < \alpha}$ be a family
over $A$ indexed by an ordinal $\alpha \in \On$.
The (possibly infinite) product of this family can be defined recursively by
\begin{align*}
  \pi_0 &\coloneqq 1, \\
  \pi_{\beta+1} &\coloneqq x_\beta \cdot \pi_\beta, && \text{for ordinals } \beta \in \On, \\
  \pi_{\lambda} &\coloneqq \Inf \{ \pi_\beta \mid \beta < \lambda \}, && \text{for limit ordinals } \lambda \in \On,
\end{align*}
and we define the (infinite) product as $\OrdProd_{\beta < \alpha} x_\beta \coloneqq \pi_\alpha$.

\prop
In the above setting, $\OrdProd_{\beta < \alpha} x_\beta = \Prod_{\beta \in \alpha} x_\beta$.
\eprop

\begin{proof}
We prove by induction that $\pi_\delta = \Prod_{\beta \in \delta} x_\beta$ holds for all $\delta \le \alpha$.
For $\delta = 0$, we have shown $\Prod_{\beta \in \emptyset} x_\beta = 1$ above.
For $\delta+1$, we have
$x_\delta \cdot \pi_\delta = x_\delta \cdot \Prod_{\beta \in \delta} x_\beta = \Prod_{\beta \in \delta+1} x_\beta$
by \cref{propProductAssociative}.
If $\delta$ is a limit ordinal, then
$\Inf \{ \pi_\beta \mid \beta < \delta \} =
\Inf \big\{ \Prod_{\gamma \in \beta} x_\gamma \bigmid \beta < \delta \big \}$.
By applying the definition of the infinite product
in terms of the infimum over all finite subproducts,
this is equal to
$\Inf \big\{ x_F \bigmid F \finsub \beta, \beta < \delta \big\} =
\Inf \{ x_F \bigmid F \finsub \delta \} = \Prod_{\beta \in \delta} x_\beta$.
\end{proof}

\subsection{Back to Games: Products of Plays}

The infinite product we defined and analysed above
can be used to properly define the provenance values
of strategies as products over all plays.
Recall that the value $\pi \ext \rho$ of a play $\rho$
with outcome $L$ is simply the semiring value $\pi(L)$,
and the value of infinite plays is either $0$ or $1$.
The value of a strategy $\Ss$ was defined as
\[
  \pi \ext \Ss \coloneqq \begin{cases} \displaystyle\prod_{L \in \Lit_A(\tau)}  \pi(L) ^{\litcount \Ss L} &\text{ if all infinite $\rho \in \Plays(\Ss)$ are winning for Verifier}\\
  0 &\text{ otherwise.} \end{cases}
\]

We first check that our setting applies:
In Sect.~\ref{Sect:Games}, we assumed an absorptive, fully continuous semiring $K$.
We further always assume the universe to be finite.
In particular, the number of literals and hence also the number of different values
$\pi \ext \rho$ is finite and thus countable.
The number of plays, on the other hand, can well be infinite and even uncountable%
\footnote{Consider the rather artificial formula $\phi(u) = \gfpfml R x {R x \land R x} u$
which allows Falsifier to repeatedly make a binary choice in the corresponding game.}.
We can thus model the product over all values $\pi \ext \rho$ as an infinite product
which finally justifies our definition of $\pi \ext \Ss$ in all absorptive, fully continuous semirings.

\prop
In the setting of Sect.~\ref{Sect:Games}, it holds that
\[
  \pi \ext \Ss = \Prod_{\rho \in \Plays(\Ss)} \pi \ext \rho
\]
\eprop

\begin{proof}
First observe that whenever there is an infinite play that is losing for Verifier,
such that $\pi \ext \Ss = 0$, this play has the value $\pi \ext \rho = 0$
and thus also $\Prod_{\rho \in \Plays(\Ss)} \pi \ext \rho = 0$.
If no such play exists, we group plays with identical outcome,
making use of the associativity of infinite products (see \cref{propProductAssociative}).
For each literal $L$, let $\Plays_L(\Ss)$ be the set of plays with outcome $L$.
We further write $\Plays_1(\Ss)$ for the set of infinite plays that are winning for Verifier.
If we denote the finite set of literals by $\{ L_1, \dots, L_n \}$,
we obtain the partition
$\Plays(\Ss) = \Plays_1(\Ss) \dcup \Plays_{L_1}(\Ss) \dcup \cdots \dcup \Plays_{L_n}(\Ss)$
and can apply \cref{propProductAssociative}:
\begin{alignat*}{2}
  \Prod_{\rho \in \Plays(\Ss)} \!\! \pi \ext \rho \;&=
  \Prod_{\rho \in \Plays_1(\Ss)} \!\! \pi \ext \rho \;&&\cdot\;
  \prod_{L \in \Lit_A(\tau)}
  \left(
  \Prod_{\rho \in \Plays_L(\Ss)} \!\! \pi \ext \rho
  \right)
  \\ &=
  \Prod_{\rho \in \Plays_1(\Ss)} \!\! 1  &&\cdot\;
  \prod_{L \in \Lit_A(\tau)}
  \left(
  \Prod_{\rho \in \Plays_L(\Ss)} \!\! \pi(L)
  \right)
\end{alignat*}
Infinite products of a single value, that is, $\Prod_{i \in I} x_i$ with $x_i = a$ for all $i \in I$,
have the value $\Inf \{ x_F \mid F \finsub I \} = \Inf \{ a^{|F|} \mid F \finsub I \} = a^\infty$.
We can thus conclude
\[
  \Prod_{\rho \in \Plays(\Ss)} \!\! \pi \ext \rho \;=\;
  1 \; \cdot \!\!
  \prod_{L \in \Lit_A(\tau)}
  \left(
    \pi(L)^{\litcount \Ss L}
  \right) =
  \pi \ext \Ss. \qedhere
\]
\end{proof}

\end{document}